\newcites{appendix}{References}% use \citeappendix{} and \citepappendix[][]{}
\newtheoremstyle{mystyleRM}%              % Name
  {}%                                     % Space above
  {}%                                     % Space below
  {\upshape}%                             % Body font
  {}%                                     % Indent amount
  {\bfseries}%                            % Theorem head font
  {.}%                                    % Punctuation after theorem head
  { }%                                    % Space after theorem head, ' ', or \newline
  {\thmname{#1}\thmnumber{ #2}\thmnote{ (#3)}}%                                     % Theorem head spec (can be left empty, meaning `normal')
\theoremstyle{mystyleRM} 
\newtheorem{theorem}{Theorem}[section]
\newtheorem{corollary}{Corollary}[section]
\newtheorem{remark}{Remark}
\newcommand{\blind}{0}
\newcommand{\doublewidehat}[1]{% For greek letters
  \tikz[baseline=(X.base)]{
    \node[inner sep=0pt] (X) {$\widehat{#1\,}$};
    \node[inner sep=0pt, yshift=-.95ex] at (X.north) {$\widehat{\phantom{#1\,}}$};
  }%
}
\newcommand{\doublewidehatCL}[1]{% For capital letters
  \tikz[baseline=(X.base)]{
    \node[inner sep=0pt] (X) {$\widehat{#1\,}$};
    \node[inner sep=0pt, yshift=-.75ex] at (X.north) {$\widehat{\phantom{#1\,}}$};
  }%
}
\newcommand{\doublewidehatSL}[1]{% For small letters
  \tikz[baseline=(X.base)]{
    \node[inner sep=0pt] (X) {$\widehat{#1\,}$};
    \node[inner sep=0pt, yshift=-.4ex] at (X.north) {$\widehat{\phantom{#1\,}}$};
  }%
}
\newcommand*\samethanks[1][\value{footnote}]{\footnotemark[#1]}
\newcolumntype{C}[1]{>{\centering\arraybackslash}p{#1}}
\newcommand\underlay[4]{%
  \stackengine{0pt}%
  {\kern#2\includegraphics[height=#1]{#4}}%
  {\includegraphics[height=#1]{#3}}%
  {O}{l}{F}{F}{L}%
}
\newcommand\addunderlay[4]{%
  \stackengine{0pt}%
  {\kern#2\includegraphics[height=#1]{#4}}%
  {#3}%
  {O}{l}{F}{F}{L}%
}
\newcommand*{\colorboxed}{}
\def\colorboxed#1#{%
  \colorboxedAux{#1}%
}
\newcommand*{\colorboxedAux}[3]{%
  % #1: optional argument for color model
  % #2: color specification
  % #3: formula
  \begingroup
    \colorlet{cb@saved}{.}%
    \color#1{#2}%
    \boxed{%
      \color{cb@saved}%
      #3%
    }%
  \endgroup
}
\NewCommandCopy{\proofqedsymbol}{\qedsymbol}% save the default
\theoremstyle{definition}
\renewcommand{\qedsymbol}{$\blacksquare$}% triangle
\date{}
\begin{document}

\def\spacingset#1{\renewcommand{\baselinestretch}%
{#1}\small\normalsize}

%%%%%%%%%%%%%%%%%%%%%%%%%%%%%%%%%%%%%%%%%%%%%%%%%%%%%%%%%%%%%%%%%%%%%

\if0\blind
{
  \title{\bf Making Event Study Plots Honest: A Functional Data Approach to Causal Inference}

  \author{
  %%%%%%%%
  Chencheng Fang\thanks{Corresponding author: \href{mailto:ccfang@uni-bonn.de}{ccfang@uni-bonn.de}} 
  \thanks{Institute of Finance and Statistics, University of Bonn, Adenauerallee 24-42, 53113 Bonn, Germany} 
  \thanks{Hausdorff Center for Mathematics, Endenicher Allee 62, 53115 Bonn, Germany} \hspace{2.5em}
  %%%%%%%
  Dominik Liebl\samethanks[2] \samethanks[3] \hspace{0em} 
  }

  \maketitle
  \vspace*{-3.5em}

} \fi

\if1\blind
{
  \bigskip
  \bigskip
  \bigskip
  \begin{center}
    {\LARGE\bf Making Event Study Plots Honest: A Functional Data Approach to Causal Inference}
\end{center}
  \medskip
} \fi

\vspace{-3em}

\spacingset{1.50} 

\begin{abstract}
Event study plots are the centerpiece of Difference-in-Differences (DiD) analysis, but current plotting methods cannot provide honest causal inference when the parallel trends and/or no-anticipation assumptions fail. We introduce a novel functional data approach to DiD that directly enables honest causal inference via event study plots. Our DiD estimator converges to a Gaussian process in the Banach space of continuous functions, enabling powerful simultaneous confidence bands. This theoretical contribution allows us to turn an event study plot into a rigorous honest causal inference tool through equivalence and relevance testing: Honest reference bands can be validated using equivalence testing in the pre-treatment period, and honest causal effects can be tested using relevance testing in the post-treatment period. We demonstrate the performance of our method in simulations and two case studies. 
\end{abstract}

\noindent
{\it Keywords:} 
Difference-in-Differences,
Event study plot,
Functional Data Analysis,
Simultaneous confidence band,
Honest inference

%%%%%%%%%%%%%%%%%%%%%%%%%%%%%%%%%%%%%%%%%%%%
\section{Introduction}\label{sec:intro}
%%%%%%%%%%%%%%%%%%%%%%%%%%%%%%%%%%%%%%%%%%%%

Difference-in-Differences (DiD) is one of the most widely used methods in the social sciences for estimating causal effects of interventions. Its origins can be traced to the pioneering work of Ignaz Semmelweis in the late 1840s \citep[published later in][]{Semmelweis1861}, who studied maternal mortality, and John Snow, who investigated cholera transmission \citep{snow_1855}. Since then, DiD has been refined in theory and widely applied in practice. For recent reviews, see \cite{roth_2023} and \cite{ArkhangelskyImbens2024}.

A central inference tool in DiD applications is the event study plot, which displays point estimates and confidence intervals of event-time coefficients from a dynamic Two-Way Fixed Effects (TWFE) regression (Figure \ref{fig:ESP}). Popularized by \cite{Jacobson_et_al_1993}, the event study plot has become standard in applied economics, particularly in labor, health, and policy research. Its appeal lies in the dual role: providing inference on treatment effects in the post-treatment period while simultaneously assessing the parallel trends and no-anticipation assumptions in the pre-treatment period. This dual purpose explains its ubiquity in empirical DiD research.

However, conventional event study plots---such as the one in Figure \ref{fig:ESP}---suffer from at least three important limitations. 
First, they typically display pointwise confidence intervals that do not account for multiple testing across event times. 
Second, and of particular practical importance, they give the impression that the parallel trends and no-anticipation assumptions can be validated in the pre-treatment period when showing insignificant pre-treatment estimates. However, this is a classical \textit{argument from ignorance} \citep[cf.][]{Walton_1996} as the failure to reject the null hypothesis of no pre-treatment effects (i.e.~absence of differences in time trends or anticipatory effects) does not imply that the parallel trends and no-anticipation assumptions hold.
Third, and equally important from a practical standpoint, honest inference methods—such as those developed by \citet{rambachan_roth_2023}—cannot be integrated into standard event study plots, limiting their usefulness for credible causal inference under violations of the parallel trends and/or no-anticipation assumptions.

%%%%%
\begin{figure}[!htbp]
	\centering
	\begin{subfigure}[b]{0.48\textwidth}
		\centering
\includegraphics[width=\textwidth]{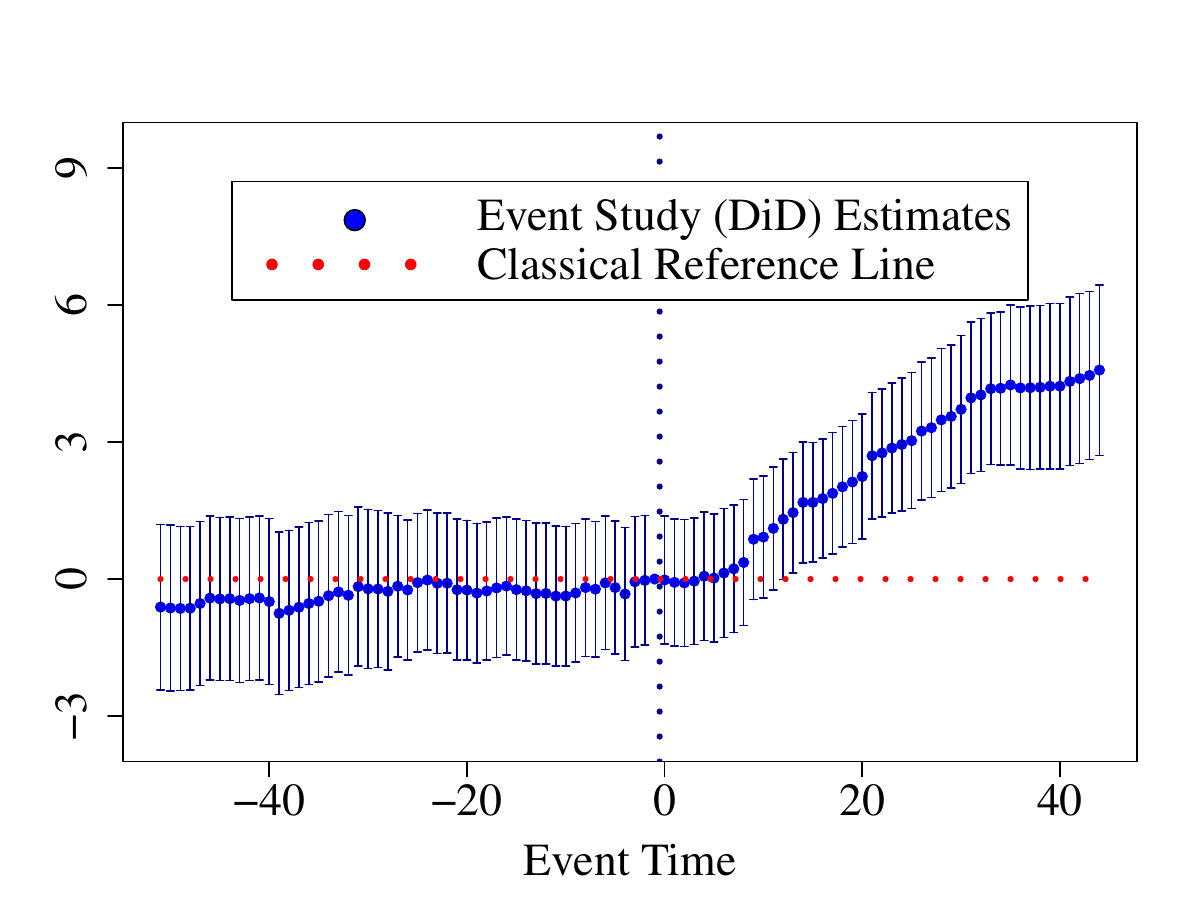} 
		\caption{Classical event study plot with pointwise confidence intervals.}\label{fig:ESP}
	\end{subfigure}\hfill
	\begin{subfigure}[b]{0.48\textwidth} 
		\centering
		\includegraphics[width=\textwidth]{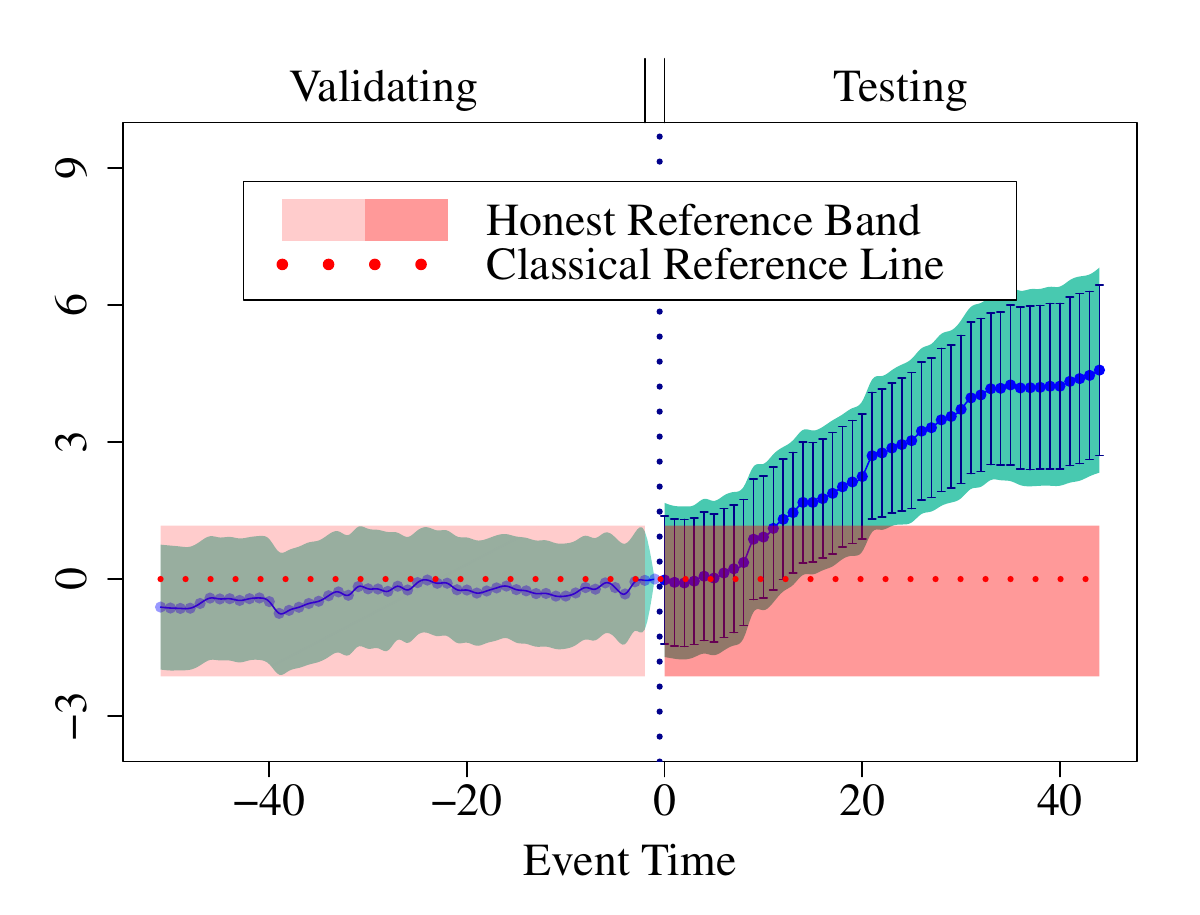} 
		\caption{Honest event study plot with inf- and sup-based simultaneous confidence bands.}\label{fig:ESP_honest}
	\end{subfigure}
\caption{Effects of the realization of working-from-home potential on work-home distance.
Figure \ref{fig:ESP}: Event study plot from \citet{Coskun_2026} with pointwise $95\%$ confidence intervals. 
Figure \ref{fig:ESP_honest}: Infimum-based simultaneous confidence band validates the reference band in the pre-treatment period; supremum-based simultaneous confidence band tests honest post-treatment effects against the validated reference band.}
\label{fig:ESP_compare}
\end{figure}
%%%%%

We address these limitations by introducing a functional-data perspective on DiD. The key idea is to model the underlying time-series processes in continuous time---an assumption already implicit in many empirical DiD studies, where pointwise event-study estimates and confidence intervals are connected by straight lines across event times \citep[cf.][]{moser_voena_2012, card_2024, chen_chen_yang_2025}. Our proposed functional DiD estimator is uniformly consistent and converges to a Gaussian process in the Banach space of continuous functions. Our estimator builds directly on standard panel-data structures and is pointwise identical to the classical panel estimator, making the approach straightforward to implement in empirical applications. We allow for both additional control variables and staggered treatment adoption.

Our main contributions resolve the three limitations of classical event-study plots described above. First, the Gaussian process limit result provides the theoretical foundation for constructing simultaneous confidence bands over the entire continuum of event times. Relative to conventional pointwise inference, this yields a more powerful and statistically credible approach by explicitly accounting for the inherent multiple-testing problem in event-study analyses. Second, infimum-based simultaneous confidence bands enable formal validation of honest reference bands in the pre-treatment period via equivalence testing \citep[cf.][]{Munk_1996, Munk_2004, Rosenbaum_2009, Dette_2018}, allowing researchers to assess the plausibility of the parallel trends and no-anticipation assumptions in a rigorous statistical manner (see Section \ref{sec:ChoosingHRBs} on honest reference bands and Section \ref{sec:validation} on equivalence testing). Third, supremum-based simultaneous confidence bands support honest inference in the post-treatment period through relevance testing, directly integrating existing approaches to honest DiD inference \citep[e.g.,][]{rambachan_roth_2023} into the event-study framework (see Section \ref{sec:honest_testing} on relevance testing). Together, these contributions transform the event-study plot from a descriptive visualization into a rigorous tool for honest causal inference.

Figure \ref{fig:ESP_honest} illustrates how our methodology transforms a classical event study plot (Figure \ref{fig:ESP}) into an honest event study plot (see Section \ref{sec:application_work} for additional details on this application). Unlike the classical reference line, the honest reference band in Figure \ref{fig:ESP_honest} explicitly accounts for potential treatment-anticipation bias.
The selected reference band is validated at the $\alpha=0.05$ level, because the infimum-based $(1-2\alpha)\times 100 \%$ simultaneous confidence band lies entirely \emph{within} the reference band during the pre-anticipation period.
In the post-treatment period $t \in [23,44]$, the causal effect is uniformly and honestly significant at the $\alpha=0.05$ level, because the supremum-based $(1-\alpha)\times 100 \%$ simultaneous confidence band does \emph{not intersect} the statistically validated reference band.

%%% Literature Review (Honest DiD)
The literature on honest and credible causal inference with DiD have seen rapid development in recent years. \citet{rambachan_roth_2023} propose a strategy for robust inference and sensitivity analysis under violations of parallel trends. \cite{freyaldenhoven_hansen_shapiro_2019} introduce a two-stage least squares estimator for the case in which the endogeneity leads to violations of the parallel trends assumption. \cite{abadie_2005} presents a family of semiparametric estimators, considering the case in which the differences in observed covariates create non-parallel outcome dynamics between treated and control groups. \cite{chan_kwok_2022} use a control function approach to adjust for violations of the parallel trends assumption. \cite{Ye_Keele_Hasegawa_Small_2023} propose a partial identification approach requiring two control groups that allow for using a bracketing strategy. \cite{freyaldenhoven_hansen_perez_shapiro_2021} list suggestions to make event study plots more informative and helpful; however, the authors do not provide the corresponding methodological and theoretical foundation. 

%%% Literature on testing parallel trends
The literature on equivalence testing is still sparse---particularly in econometrics. By definition, equivalence testing targets whether a parameter lies \textit{within} a pre-specified region of equivalence, whereas classical null hypothesis significance testing targets whether the parameter lies \textit{outside} the null value \citep{Wellek_2002, Munk_2004}. Hence, the equivalence testing is suitable for addressing the limitation of \textit{argument from ignorance} in classical null hypothesis significance testing, and enables a more rigorous assessment of the parallel trends and no-anticipation assumptions. \cite{Dette_Kokot_2021} develop an equivalence testing approach for functional data. \cite{dette_schumann_2020} contribute an equivalence testing approach to test the parallel trends assumption. Nevertheless, their testing approach cannot be incorporated into event study plots. Our method is similar in spirit to \cite{Fogarty_Small_2014}, who develop a pointwise confidence band for equivalence testing in functional data; however, our infimum-based simultaneous confidence band is substantially tighter, yielding a much more powerful test.

%%% Literature Review (FDA)
The literature on Functional Data Analysis (FDA) offers various methods for constructing simultaneous confidence bands \citep[cf.][]{degras2011simultaneous, pini_vantini_2016, dette_kokot_aue_2020, telschow_schwartzman_2022, Liebl_Reimherr_2023}. In this work, alongside bootstrap approaches, we also use the method of \citet{Liebl_Reimherr_2023} for constructing supremum-based simultaneous confidence bands, as it is computationally efficient and provides practically useful finite sample corrections. Well-known introductory textbooks on FDA include \cite{ramsay_silverman_2005}, \cite{ferraty_vieu_2006}, \cite{hsing_eubank_2015}, and \cite{kokoszka_reimherr_2017}. The application of FDA in econometrics has recently increased \citep[see, for instance,][]{florens2015instrumental, cerovecki2019functional, bugni2021permutation, chang2024autocovariance}. 

%%% Paper outline
The remainder of the paper is organized as follows. Section \ref{sec:theory_methods} presents our core theoretical results and the construction of simultaneous confidence bands. Section \ref{sec:testing} outlines our equivalence and relevance testing frameworks: Section \ref{sec:validation} introduces the use of infimum-based simultaneous bands to validate honest reference bands in the pre-treatment period, and Section \ref{sec:honest_testing} explains how supremum-based simultaneous bands enable honest causal inference in the post-treatment period. Section \ref{sec:sim} reports simulation evidence demonstrating the excellent finite-sample performance of our approach. Section \ref{sec:applications} provides two empirical applications, revisiting \citet{chen_chen_yang_2025} and \citet{lovenheim_willen_2019}. Section \ref{sec:discussion} concludes. All derivations, proofs, algorithms and additional simulations appear in the Online Appendix \citet{fang_liebl_2025}.

%%%%%%%%%%%%%%%%%%%%%%%%%%%%%%%%%%%%%%%%%%%%
\section{Theory and Methods}\label{sec:theory_methods}
%%%%%%%%%%%%%%%%%%%%%%%%%%%%%%%%%%%%%%%%%%%%

%%%%%%%%%%%%%%%%%%%%%%%%%%%%
\subsection{Identification}\label{sec:att_fda}
%%%%%%%%%%%%%%%%%%%%%%%%%%%%

We begin with the non-staggered DiD setting and discuss the extension to staggered adoption in Section \ref{sec:staggered_design}. Let $t=-1$ denote the reference period, conventionally taken as the last observed pre-treatment period. We index the pre-treatment periods by $t \in \{-T_{pre}, \dots, -1\}$ and the post-treatment periods by $t \in \{0, \dots, T_{post}\}$. The treatment occurs in the time period $-1<t<0$.

We model the outcome $Y_{it}\in\mathbb{R}$ of unit $i=1,\dots,n$ at time $t \in \{-T_{pre}, \dots, T_{post}\}$ as a discretely sampled realization 
$$
Y_{it}=Y_i(t)\quad\text{for}\quad t \in \{-T_{pre}, \dots,-1,0,\dots,T_{post}\}.
$$ 
of an underlying stochastic process,
$$
Y_i=\{Y_i(t):t\in[-T_{pre},T_{post}]\},
$$ 
with $[-T_{pre},T_{post}] = [-T_{pre}, -1]\cup(-1,0)\cup[0, T_{post}]$, where the intervals $[-T_{pre}, -1]$ and $[0, T_{post}]$ represent the pre- and post-treatment periods, respectively, over which $Y_i(t)$ is assumed time-continuous. The treatment itself occurs in the time period $(-1,0)$ with a possible discontinuity point at $t\in(-1,0)$ due to the treatment.
The stochastic processes $Y_i$ are assumed to be independently and identically distributed (i.i.d.) across units $i=1,\dots,n$. 

In this paper, we differentiate between an oracle functional data scenario and a practical panel data scenario. The oracle scenario refers to the idealized theoretical setting in which the stochastic processes $Y_i=\{Y_i(t):t\in[-T_{pre},T_{post}]\}$ are fully observable over the entire continuous time span $[-T_{pre},T_{post}]$. In contrast, the practical scenario corresponds to the classical panel data setting, where the processes are only observable at discrete time points, i.e., $Y_{it}=Y_i(t)$ for $t\in\left\{-T_{pre},\dots,T_{post}\right\}$. After establishing our theoretical results in the oracle scenario, we demonstrate that these results can be effectively approximated in practice by interpolating the observed discrete time series, with the resulting approximation error being asymptotically negligible.

Let $D_i\in\{0,1\}$ denote the treatment indicator variable. All units $i$ with $D_i=1$ receive the treatment after the reference period $t=-1$, whereas all units $i$ with $D_i=0$ never receive the treatment. Suppose the observable outcome is given by the following switching equation
$$
Y_{i}(t)=D_iY_{i}(t,1)+(1-D_i)Y_{i}(t,0),
$$ 
where $Y_{i}(t,1)$ and $Y_{i}(t,0)$ represent the potential outcomes for unit $i$ under treatment and non-treatment, respectively. Together with our i.i.d.~assumption, the above switching equation implies the so-called Stable Unit Treatment Value Assumption (SUTVA);  see, for instance, \cite{imbens2015causal}, for a more detailed discussion of SUTVA.

The parameter of interest in this paper is the functional Average Treatment effect on the Treated (ATT) parameter, $\theta_{ATT}=\{\theta_{ATT}(t):t\in[-T_{pre},T_{post}]\}$, defined as
\begin{equation*}
	\theta_{ATT}(t)=\mathbb{E}[Y_{i}(t,1)-Y_{i}(t,0) \mid D_i=1],\quad \text{for all}\quad t\in[-T_{pre},T_{post}]. 
\end{equation*}
The identification of the ATT parameter $\theta_{ATT}(t)$ is not trivial, as $Y_i(t,0)$ is unobservable for the treated units ($D_i=1$) at post-treatment period $t \ge 0$. Hence, to point identify $\theta_{ATT}(t)$ with the estimable DiD parameter 
\begin{align}
\beta(t)
=&\mathbb{E}[Y_{i}(t)-Y_{i}(-1) \mid D_i=1]- \mathbb{E}[Y_{i}(t)-Y_{i}(-1) \mid D_i=0]\label{eq:DiD_parameter_0}\\
\begin{split}\label{eq:DiD_parameter}
=&\underbrace{\mathbb{E}[Y_i(t,1)-Y_i(t,0) \mid D_i=1]}_{\theta_{ATT}(t)} -
\underbrace{\mathbb{E}[Y_i(-1,1)-Y_i(-1,0) \mid D_i=1]}_{\Delta_{TA}=\theta_{ATT}(-1)} + \\
&\underbrace{\mathbb{E}[Y_{i}(t,0)-Y_{i}(-1,0) \mid D_i=1]- \mathbb{E}[Y_{i}(t,0)-Y_{i}(-1,0) \mid D_i=0]}_{\Delta_{DT}(t)},
\end{split}
\end{align} 
one requires the following three identification assumptions (see Online Appendix \ref{app:did_att_relation} for the derivation of \eqref{eq:DiD_parameter}): 
\begin{description}[itemsep=2pt, parsep=0pt, topsep=5pt]
\item[Assumption I (No Anticipation):] $\mathbb{E}[Y_i(-1,1)-Y_i(-1,0) \mid D_i=1]=\Delta_{TA}= 0$ 
\item[Assumption II (Parallel Trends):]\ \\ 
\hspace*{-1cm}$\mathbb{E}[Y_{i}(t,0)\! -\! Y_{i}(-1,0) \!\mid\! D_i\!=\!1] \!-\! \mathbb{E}[Y_{i}(t,0) \!-\! Y_{i}(-1,0) \!\mid\! D_i\!=\!0]=\Delta_{DT}(t)\!=\!0,\, \forall t\in[-T_{pre},T_{post}]$ 
\item[Assumption III (Overlap):] The exists some $\epsilon>0$ such that $\epsilon<P(D_i=1)< 1-\epsilon$.
\end{description}

The estimable functional DiD parameter $\beta(t)$ in \eqref{eq:DiD_parameter_0} represents the expected change in the observable outcome between time $t$ and reference time $t=-1$ for treated and control units. Notably, $\beta(t) \to 0$ as $t \to -1$ by construction. The term $\Delta_{TA} = \theta_{ATT}(-1)$ in \eqref{eq:DiD_parameter} captures the Treatment Anticipation (TA) bias, which arises when Assumption I (No Anticipation) is violated. The term $\Delta_{DT}(t)$ in \eqref{eq:DiD_parameter} captures the Differential Trend (DT) bias at time $t$, which arises when Assumption II (Parallel Trends) is violated. Assumption III (Overlap) ensures the existence of comparable units in both the treated and control groups. 

In a sense, Assumption III (Overlap) is less contentious, as the presence of comparable treated and control units is mostly a prerequisite for attempting causal inference in the first place. However, violations of Assumption I (No Anticipation) and/or Assumption II (Parallel Trends) are more problematic, since estimating the DiD parameter $\beta(t)$ under such violations results in a biased estimate of the causal ATT parameter $\theta_{ATT}(t)$, leading to invalid inference and potentially harmful conclusions. 

Assumption I (No Anticipation) and Assumption II (Parallel Trends) are generally not testable using post-treatment data, $t \in [0, T_{post}]$. However, a key strength of event study plots (Figure \ref{fig:ESP}) is that they allow for an assessment of the plausibility of these two assumptions using the pre-treatment data, $t \in [-T_{pre}, -1]$. If the two assumptions hold, we expect $\beta(t) = \theta_{ATT}(t) = 0$ for $t \in [-T_{pre}, -1]$. Consequently, when the pre-treatment confidence intervals in a conventional event-study plot include zero, as in Figure \ref{fig:ESP}, researchers often interpret this as evidence supporting these identifying assumptions. Such an interpretation, however, is statistically invalid. It constitutes an \emph{argument from ignorance}: failing to reject the null hypothesis of no pre-treatment effects does not provide evidence that the identifying assumptions hold. In Section \ref{sec:validation}, we overcome this limitation by replacing classical null hypothesis significance testing with equivalence testing, which enables a statistically rigorous assessment of the plausibility of the parallel trends and no-anticipation assumptions in the pre-treatment period.

Assumption I (No Anticipation) is violated when treated units ($D_i = 1$) begin responding to the treatment before its formal implementation---a phenomenon known as treatment anticipation. Let $t_A \in [-T_{pre}, -1]$ denote the time point after which treated units start responding. Assumption I holds if $t_A = -1$, corresponding to the last observed pre-treatment period. If instead $t_A < -1$, Assumption I is then violated, as treated units exhibit anticipatory responses prior to the last observed pre-treatment period. This bias is identifiable by the estimable DiD parameter $\beta(t)$ for any time point $t$ in the pre-anticipation period, $t \in [-T_{pre}, t_A]$, where $\theta_{ATT}(t) = 0$, and thus,
\begin{equation}
	\beta(t) = -\Delta_{TA} \quad\text{for all}\quad t\in[-T_{pre},t_A].\label{eq:AnticipationBias}
\end{equation}

Assumption II (Parallel Trends) is violated when, in the absence of treatment, treated units ($D_i = 1$) would have followed a different trend from control units ($D_i = 0$) during the post-treatment period. This bias is likewise identifiable by estimating the DiD parameter $\beta(t)$ for time points in the pre-treatment period, $t \in [-T_{pre}, -1]$, where $\theta_{ATT}(t) = 0$, and thus,
\begin{equation}
\beta(t) = \Delta_{DT}(t) \quad\text{for all}\quad t \in [-T_{pre}, -1]. \label{eq:DTBias}
\end{equation}

If both assumptions, Assumption I (No Anticipation) and Assumption II (Parallel Trends), are violated, we can identify the resulting joint bias by estimating the DiD parameter $\beta(t)$ for time points in the pre-anticipation period, 
\begin{equation}
\beta(t) = -\Delta_{TA} + \Delta_{DT}(t) \quad\text{for all}\quad t \in [-T_{pre}, t_A]. \label{eq:TADTBias}
\end{equation}

Honest inference with DiD refers to drawing inference about the ATT parameter $\theta_{ATT}(t)$ while explicitly accounting for potential violations of the identifying assumptions. One plausible approach is to extrapolate the biases identified in equations \eqref{eq:AnticipationBias}–\eqref{eq:TADTBias} into post-treatment period ($t \in [0, T_{post}]$). In Section \ref{sec:testing}, we clarify how we operationalize this approach for honest inference using event study plots.

%%%%%%%%%%%%%%%%%%%%%%%%%%%%%%%%%%%%%%%%%%%%%%%%%%%%%%%%%%%%%%%%%%%%%%%
\subsection{Models and Estimators}\label{sec:fct_model} 
%%%%%%%%%%%%%%%%%%%%%%%%%%%%%%%%%%%%%%%%%%%%%%%%%%%%%%%%%%%%%%%%%%%%%%%

Under the Panel Data (PD) point of view, the DiD parameter vector $\beta^{PD}\in\mathbb{R}^{T_{pre}+T_{post}+1}$, 
$$
\beta^{PD}=(\beta^{PD}_{-T_{pre}},\dots,\beta^{PD}_{-2},\beta^{PD}_{-1},\beta^{PD}_0,\dots,\beta^{PD}_{T_{post}})^{\top},
$$ 
where $\beta^{PD}_{-1}=0$ by definition (see \eqref{eq:DiD_parameter_0}), is typically estimated using TWFE regression model 
\begin{equation}\label{eq:pd_twfe_reg_main}
    Y_{it}=\sum_{\substack{s=-T_{pre}\\s\neq -1}}^{T_{post}}\beta^{PD}_s D_{its} + u_{it}
    \quad\text{with}\quad
    u_{it} = \lambda_i + \phi_t + \varepsilon_{it},
\end{equation}
for $t\in\{-T_{pre}, \dots,T_{post}\}$ and $i=1,\dots,n$, 
where $D_{its} := D_i \times \mathbbm{1}_{\{t=s\}}$, with $\mathbbm{1}_{\{\cdot\}}$ representing the indicator function. The real error term $u_{it}$ consists of the unobserved individual and temporal fixed effects, $\lambda_i$ and $\phi_t$, and a mean zero error component $\varepsilon_{it}$ with  $\mathbb{E}[\varepsilon_{it} \mid D_i]=0$. The unobserved effects (e.g.~hardly measurable factors like a person's intrinsic motivation) may influence the decision to sort oneself into $(D_i=1)$ or out of $(D_i=0)$ the treatment, resulting in an issue of endogeneity, $\mathbb{E}[u_{it} \mid D_i]\neq 0$ for $t\in\{-T_{pre},\dots,T_{post}\}$. To address this endogeneity issue, one typically uses a two-way transformation to partial out the problematic unobserved individual, $\lambda_i$, and temporal, $\phi_t$, components,
\begin{equation*}
    \ddot{Y}_{it}^{PD}=\sum_{\substack{s=-T_{pre}\\s\neq -1}}^{T_{post}} \beta^{PD}_s \ddot{D}^{PD}_{its} + \ddot{\varepsilon}^{PD}_{it},
\end{equation*}
where $\ddot{Y}^{PD}_{it}=Y_{it}-\frac{1}{T}\sum_{t=-T_{pre}}^{T_{post}}Y_{it}-\frac{1}{n}\sum_{i=1}^{n}Y_{it}+\frac{1}{nT}\sum_{i=1}^{n}\sum_{t=-T_{pre}}^{T_{post}}Y_{it}$ with $T:=T_{pre}+T_{post}+1$; $\ddot{D}^{PD}_{its}$ and $\ddot{\varepsilon}^{PD}_{it}$ are defined analogously. The assumption $\mathbb{E}[\varepsilon_{it} \mid D_i]=0$ implies that $\mathbb{E}[\ddot{\varepsilon}_{it}^{PD} \mid D_i]=0$, which suggests that the TWFE estimator, 
\begin{equation}\label{eq:beta_hat_transf_main}
    \widehat{\beta}^{PD}_n = 
    \left(\frac{1}{nT}\sum_{i=1}^n \sum_{t=-T_{pre}}^{T_{post}} \ddot{D}^{PD}_{it} \ddot{D}^{PD^\top}_{it} \right)^{-1}\left(\frac{1}{nT}\sum_{i=1}^n \sum_{t=-T_{pre}}^{T_{post}} \ddot{D}_{it}^{PD} \ddot{Y}_{it}^{PD}\right),
\end{equation}
is an unbiased estimator, where 
$\ddot{D}_{it}^{PD}=(\ddot{D}^{PD}_{it,-T_{pre}}, \dots ,\ddot{D}^{PD}_{it,-2},\ddot{D}^{PD}_{it,0},\dots,\ddot{D}^{PD}_{it,T_{post}})^{\top}$. To turn the $(T_{pre}+T_{post})$ dimensional estimator $\widehat{\beta}^{PD}_n$ in \eqref{eq:beta_hat_transf_main} into a full-dimensional unbiased estimator of $\beta^{PD}\in\mathbb{R}^{T_{pre}+T_{post}+1}$, we augment it by adding the zero element $\widehat{\beta}^{PD}_{n,-1}=0$ to the reference time period $t=-1$,
\begin{equation}\label{eq:beta_hat_transf_main_1}
\widehat{\beta}^{PD}_n\equiv(\widehat{\beta}^{PD}_{n,-T_{pre}},\dots,\widehat{\beta}^{PD}_{n,-2},0,\widehat{\beta}^{PD}_{n,0},\dots,\widehat{\beta}^{PD}_{n,T_{post}})^{\top}.
\end{equation}

The $(T_{pre}+T_{post}+1)$ dimensional estimator $\widehat{\beta}^{PD}_n$ in \eqref{eq:beta_hat_transf_main_1} is typically used to construct traditional event study plots and is $\sqrt{n}$-consistent and asymptotically normal under standard assumptions. The $\sqrt{n}$-consistency may not be directly obvious from the definition of $\widehat{\beta}^{PD}_n$ in \eqref{eq:beta_hat_transf_main}, which involves both data dimensions $n$ and $T$. However, this becomes obvious from the functional data estimators below, which are equivalent to $\widehat{\beta}^{PD}_{n,t}$ pointwise for all time points $t\in\{-T_{pre},\dots,T_{post}\}$; see Theorem \ref{thm:beta_hat_equiv}. 

From a functional data perspective, the DiD parameter $\beta(t)$ as defined in \eqref{eq:DiD_parameter_0} can be estimated using a TWFE function-on-scalar regression model
\begin{align}\label{eq:fct_data_reg_simple}
	Y_i(t)=\beta(t)D_i+u_{i}(t) \quad\text{with}\quad
	u_{i}(t) = \lambda_i+\phi(t)+\varepsilon_{i}(t),
\end{align}
for $t\in[-T_{pre},T_{post}]$ and $i=1,\dots,n$, where $\beta(t)$ is the function-valued DiD parameter with $\beta(-1)=0$, $D_i\in\{0,1\}$ is the scalar-valued predictor (treatment indicator), and $Y_i(t)$ is the oracle form of the outcome for unit $i$ at time $t$. The function-valued error term $u_{i}(t)$ consists of an unobserved individual $\lambda_i\in\mathbb{R}$ and a function-valued temporal $\phi(t)$ component, as well as a functional mean zero error component $\varepsilon_{i}(t)$ with $\mathbb{E}[\varepsilon_{i}(t) \mid D_i]=0$ for $t\in[-T_{pre},T_{post}]$, which is allowed to be heteroskedastic; i.e.~$\mathbb{E}[\varepsilon(s)\varepsilon(t)|D]=C_{\varepsilon|D}(s,t|D)$, where generally $C_{\varepsilon|D}(s,t|D=d_1)\neq C_{\varepsilon|D}(s,t|D=d_2)$ for $d_1\neq d_2$. The fixed effects components are allowed to correlate with the treatment indicator $D_{i}$, thereby leading to an endogeneity issue $\mathbb{E}[u_{i}(t) \mid D_i]\neq 0$ for $t\in[-T_{pre},T_{post}]$. 

Using a careful adoption of the two-way panel data transformation to the case of functional data, allows us to partial out the unobserved individual and temporal components in the error term, leading to the transformed function-on-scalar model
\begin{equation}\label{eq:fct_data_reg_final}
	\ddot{Y}_i(t)=\underbrace{\left( \beta(t)-\frac{1}{T_{post}+T_{pre}}\int_{-T_{pre}}^{T_{post}} \beta(s)\dd s \right)}_{\gamma(t)}\dot{D}_i + \ddot{\varepsilon}_i(t)
\end{equation}
with $\beta(-1)=0$, where 
\begin{comment}
$\ddot{Y}_i(t) = Y_i(t)-\widetilde{Y_i} -\frac{1}{n} \sum_{i=1}^n Y_i(t)+\frac{1}{n}\sum_{i=1}^n \widetilde{Y}_i$, 
$\ddot{\varepsilon}_i(t) = \varepsilon_i(t)- \widetilde{\varepsilon}_i -\frac{1}{n}\sum_{i=1}^n \varepsilon_i(t)+\frac{1}{n}\sum_{i=1}^n \widetilde{\varepsilon}_i$, 
with 
$\widetilde{Y}_i = \frac{1}{T_{post}+T_{pre}} \int_{-T_{pre}}^{T_{post}} Y_i(t)\dd t$ and 
$\widetilde{\varepsilon}_i=\frac{1}{T_{post}+T_{pre}} \int_{-T_{pre}}^{T_{post}} \varepsilon_i(t)\dd t$,
and 
$\dot{D}_i=D_i-\frac{1}{n}\sum_{i=1}^n D_i$.
\end{comment}
\begin{align*}
	\ddot{Y}_i(t)&=\dot{Y}_i(t)-\frac{1}{T_{post}+T_{pre}} \int_{-T_{pre}}^{T_{post}} Y_i(s)\dd s + \frac{1}{n}\sum_{i=1}^n \frac{1}{T_{post}+T_{pre}}\int_{-T_{pre}}^{T_{post}} Y_i(s)\dd s,\\
	\ddot{\varepsilon}_i(t)&=\dot{\varepsilon}_i(t)-\frac{1}{T_{post}+T_{pre}} \int_{-T_{pre}}^{T_{post}} \varepsilon_i(s)\dd s + \frac{1}{n}\sum_{i=1}^n \frac{1}{T_{post}+T_{pre}}\int_{-T_{pre}}^{T_{post}} \varepsilon_i(s)\dd s,
	%\text{and}\quad\dot{D}_i&=D_i-\frac{1}{n}\sum_{i=1}^n D_i.
\end{align*}
$\dot{D}_i=D_i-n^{-1}\sum_{i=1}^n D_i$, 
$\dot{Y}_i(t)=Y_i(t)- n^{-1}\sum_{i=1}^n Y_i(t)$, and 
$\dot{\varepsilon}_i(t)=\varepsilon_i(t)- n^{-1}\sum_{i=1}^n \varepsilon_i(t)$; the derivation of Model \eqref{eq:fct_data_reg_final} is included in Online Appendix \ref{app:SIMPLIFIED_FCT_MODEL}. Consequently, in \eqref{eq:fct_data_reg_final}, we obtain the DiD parameter as 
\begin{equation}\label{eq:fct_data_reg_final_beta}
\beta(t)=\gamma(t)-\gamma(-1).
\end{equation}

The following theorem shows that the functional DiD parameter $\beta(t)$ in \eqref{eq:fct_data_reg_final_beta} is pointwise equivalent to the panel data DiD parameter $\beta^{PD}_t$ in \eqref{eq:pd_twfe_reg_main} for every $t\in\{-T_{pre},\dots,T_{post}\}$.

\begin{theorem}\label{thm:beta_equiv}
The functional DiD parameter $\beta=\{\gamma(t)-\gamma(-1):t\in[-T_{pre},T_{post}]\}$ in \eqref{eq:fct_data_reg_simple} is pointwise equivalent to the panel data DiD parameter $\beta^{PD}$ in \eqref{eq:pd_twfe_reg_main}, i.e.
\begin{equation*}%\label{eq:beta_equiv}
\beta(t)=\beta^{PD}_t \quad\text{for every}\quad t\in\{-T_{pre},\dots,T_{post}\},\quad\text{with}\quad \beta(-1)=\beta^{PD}_{-1}=0. 
\end{equation*}
\end{theorem}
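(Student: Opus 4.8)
The plan is to establish the identity at the level of population parameters, working entirely through conditional expectations rather than through the estimators (the estimator-level counterpart is the separate Theorem \ref{thm:beta_hat_equiv}). The strategy is to show that the functional coefficient $\beta(t)$ of \eqref{eq:fct_data_reg_simple}, re-expressed as $\gamma(t)-\gamma(0)$ via \eqref{eq:fct_data_reg_final_beta}, reduces to exactly the estimable DiD estimand \eqref{eq:DiD_parameter_0}, which is precisely the definition of $\beta^{PD}_t$. It is worth noting at the outset that neither the No Anticipation nor the Parallel Trends assumption enters here, since the claim concerns only the two estimable parameters and not their relation to $\theta_{ATT}$; Overlap (Assumption III) is used only implicitly, to guarantee that the group-wise conditional expectations below are well defined.

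First I would condition the functional regression \eqref{eq:fct_data_reg_simple} on the treatment indicator. Using $\mathbb{E}[\varepsilon_i(t)\mid D_i]=0$ for the group means, one obtains $\mathbb{E}[Y_i(t)\mid D_i=1]=\beta(t)+\mathbb{E}[\lambda_i\mid D_i=1]+\phi(t)$ and $\mathbb{E}[Y_i(t)\mid D_i=0]=\mathbb{E}[\lambda_i\mid D_i=0]+\phi(t)$. Subtracting, the common temporal component $\phi(t)$ cancels because it is shared across groups, leaving $\mathbb{E}[Y_i(t)\mid D_i=1]-\mathbb{E}[Y_i(t)\mid D_i=0]=\beta(t)+c$, where $c:=\mathbb{E}[\lambda_i\mid D_i=1]-\mathbb{E}[\lambda_i\mid D_i=0]$ is a time-invariant constant arising from the (permitted) correlation between $\lambda_i$ and $D_i$.

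The main obstacle is that this cross-sectional contrast does not identify $\beta(t)$ on its own: the individual effect $\lambda_i$ and the temporal effect $\phi(t)$ are only jointly identified, so the nuisance constant $c$ must be eliminated. The device that accomplishes this is differencing against the reference period. Evaluating the preceding relation at $t=0$ and invoking the normalization $\beta(0)=0$ gives $c=\mathbb{E}[Y_i(0)\mid D_i=1]-\mathbb{E}[Y_i(0)\mid D_i=0]$; substituting back yields $\beta(t)=\mathbb{E}[Y_i(t)-Y_i(0)\mid D_i=1]-\mathbb{E}[Y_i(t)-Y_i(0)\mid D_i=0]$, which is exactly \eqref{eq:DiD_parameter_0}. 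This is the population analogue of the two-way demeaning in \eqref{eq:fct_data_reg_final}, and I would verify directly from the definition of $\gamma$ that its integral term cancels in $\gamma(t)-\gamma(0)$, so that $\gamma(t)-\gamma(0)=\beta(t)-\beta(0)=\beta(t)$, thereby confirming \eqref{eq:fct_data_reg_final_beta}.

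Finally, since $\beta^{PD}_t$ is by definition the estimable DiD estimand \eqref{eq:DiD_parameter_0}—and the saturated TWFE specification \eqref{eq:pd_twfe_reg_main}, which fits a separate coefficient at each event time $s\neq 0$, reproduces this contrast period by period—I conclude $\beta(t)=\gamma(t)-\gamma(0)=\beta^{PD}_t$ for every $t\in\{-T_{pre},\dots,T_{post}\}$. The reference period is handled by the shared normalization: $\beta(0)=0$ by construction of the functional model and $\beta^{PD}_0=0$ by the omission of $s=0$ in \eqref{eq:pd_twfe_reg_main}, so the two parameters coincide there as well.
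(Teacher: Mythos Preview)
Your proof is correct but takes a genuinely different route from the paper's. The paper argues at the level of the regression specifications: it collapses the indicator sum $\sum_{s\neq 0}\beta^{PD}_s D_i\mathbbm{1}_{\{t=s\}}$ to $\beta^{PD}_t D_i\mathbbm{1}_{\{t\neq 0\}}$, observes that the resulting panel model and the functional model \eqref{eq:fct_data_reg_simple} are literally the same equation at each observed $t$, and concludes the two coefficient functions must coincide pointwise. You instead triangulate through the DiD estimand \eqref{eq:DiD_parameter_0}: by conditioning on $D_i$ and differencing against the reference period you show that the functional $\beta(t)$ equals \eqref{eq:DiD_parameter_0}, and then invoke that the saturated TWFE coefficients reproduce the same contrast. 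The paper's route is more structural and makes transparent the point it emphasizes after the theorem---that the naive functional translation of $D_{its}$ would involve measure-zero sets, while the indicator collapse sidesteps this. Your route is arguably more elementary and has the virtue of tying both parameters directly to the identifying estimand, though your final step (that $\beta^{PD}_t$ equals \eqref{eq:DiD_parameter_0}) is asserted rather than derived; applying your own conditional-expectation argument once more to the simplified panel model would make that step fully explicit.
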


While Theorem \ref{thm:beta_equiv} is relatively straightforward to prove, its result is nonetheless non-trivial. This is because the functional data model in \eqref{eq:fct_data_reg_simple} is not a direct analogue of the panel data specification in \eqref{eq:pd_twfe_reg_main}. In particular, the predictor variables $D_{its}$ in \eqref{eq:pd_twfe_reg_main} take nonzero values only at isolated time points ($t = s$) for treated units ($D_i = 1$). A naive functional translation of this setup would involve sets of Lebesgue measure zero, over which integration yields zero---posing a fundamental issue in the functional framework.

The least squares estimator $\widehat{\gamma}_n=\{\widehat{\gamma}_n(t):t \in [-T_{pre},T_{post}]\}$ for $\gamma=\{\gamma(t):t \in [-T_{pre},T_{post}]\}$ is
\begin{equation}\label{eq:GammaHat}
	\widehat{\gamma}_n(t) = \left(\frac{1}{n}\sum_{i=1}^n \dot{D}_i^2\right)^{-1}\left(\frac{1}{n}\sum_{i=1}^n \dot{D}_i \ddot{Y}_i(t)\right),\quad t\in[-T_{pre},T_{post}].
\end{equation}
By \eqref{eq:fct_data_reg_final_beta}, the estimator $\widehat{\beta}_n=\{\widehat{\beta}_n(t):t \in [-T_{pre},T_{post}]\}$ for the functional DiD parameter $\beta=\{\beta(t):t \in [-T_{pre},T_{post}]\}$ is therefore, in principle, defined as
\begin{align}\label{eq:BetaHat_0}
\widehat{\beta}_n(t)
	&=\widehat{\gamma}_n(t) -\widehat{\gamma}_n(-1)
	= \left(\frac{1}{n}\sum_{i=1}^n \dot{D}_i^2\right)^{-1}\left(\frac{1}{n}\sum_{i=1}^n \dot{D}_i (\ddot{Y}_i(t) - \ddot{Y}_i(-1))\right).
\end{align}
Note, however, that the terms $\ddot{Y}_i(t)$ and $\ddot{Y}_i(-1)$ involve equal integration operations, which cancel out due to the subtraction in \eqref{eq:BetaHat_0}, yielding the following simplified expression for the functional oracle estimator of the DiD parameter $\beta(t)$:
\begin{align}\label{eq:BetaHat}
\widehat{\beta}_n(t)
 = \left(\frac{1}{n}\sum_{i=1}^n \dot{D}_i^2\right)^{-1}\left(\frac{1}{n}\sum_{i=1}^n \dot{D}_i (\dot{Y}_i(t) - \dot{Y}_i(-1))\right),
\end{align}
where $\dot{Y}_i(t)=Y_i(t)-n^{-1}\sum_{i=1}^n Y_i(t)$ and $\widehat{\beta}_n(-1)=\beta(-1)=0$. 

The following theorem shows that the functional oracle estimator $\widehat{\beta}_n(t)$ in \eqref{eq:BetaHat} is pointwise equivalent to the panel data estimator $\widehat{\beta}_{n,t}^{PD}$ in \eqref{eq:beta_hat_transf_main_1} for every $t\in\{-T_{pre},\dots,T_{post}\}$, which implies that they have pointwise the same distributional properties.

\begin{theorem}\label{thm:beta_hat_equiv}
	The functional DiD estimator $\widehat{\beta}_n(t)$ in \eqref{eq:BetaHat} is pointwise equivalent to the panel data DiD estimator $\widehat{\beta}_{n,t}^{PD}$ in \eqref{eq:beta_hat_transf_main_1}, i.e.
	\begin{equation*}
		\widehat{\beta}_n(t)=\widehat{\beta}^{PD}_{n,t}\quad\text{for every}\quad t\in\{-T_{pre},\dots,T_{post}\}\quad\text{with}\quad \widehat{\beta}_n(-1)=\widehat{\beta}^{PD}_{n,-1}=0. 
	\end{equation*}
\end{theorem}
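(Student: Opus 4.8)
The plan is to prove the identity by directly computing the panel data TWFE estimator $\widehat\beta^{PD}_{n}$ from its defining normal equations \eqref{eq:beta_hat_transf_main} and showing that, coordinate by coordinate, it reduces to the functional oracle estimator \eqref{eq:BetaHat}. The whole argument hinges on exploiting the special indicator structure of the regressors $D_{its} = D_i\,\mathbbm{1}_{\{t=s\}}$. First I would compute the three means entering the two-way transformation of $D_{its}$: its time average over $t$ is $D_i/T$, its cross-sectional average over $i$ at time $t$ is $\bar D\,\mathbbm{1}_{\{t=s\}}$ with $\bar D = n^{-1}\sum_i D_i$, and its grand mean is $\bar D/T$. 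Substituting these yields the clean factorization
\begin{equation*}
\ddot D^{PD}_{its} = \dot D_i\left(\mathbbm{1}_{\{t=s\}}-\tfrac1T\right),\qquad \dot D_i = D_i-\bar D,
\end{equation*}
which separates the cross-sectional part $\dot D_i$ from a purely time-dependent factor.

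The second step is to assemble and invert the within cross-product matrix. Using the factorization, its $(s,s')$ entry equals $A\,T^{-1}(\mathbbm{1}_{\{s=s'\}}-T^{-1})$ with $A = n^{-1}\sum_i \dot D_i^2$, so the matrix is $A\,T^{-1}(I - T^{-1}J)$, where $J$ is the $(T-1)\times(T-1)$ all-ones matrix over the regressor index set $\{-T_{pre},\dots,T_{post}\}\setminus\{0\}$. A Sherman--Morrison computation gives $(I - T^{-1}J)^{-1} = I + J$ (the relevant eigenvalues $T^{-1}$ and $1$ being strictly positive, so the matrix is invertible). For the right-hand side, I would use the fact that the two-way demeaned outcome has vanishing row sums, $\sum_{t}\ddot Y^{PD}_{it}=0$, which collapses $\sum_t(\mathbbm{1}_{\{t=s\}}-T^{-1})\ddot Y^{PD}_{it}$ to $\ddot Y^{PD}_{is}$. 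Combining the inverse with the simplified right-hand side, and using $\sum_{s'\neq 0}\ddot Y^{PD}_{is'}=-\ddot Y^{PD}_{i0}$ (again from the row-sum identity), the $s$-th coordinate telescopes to
\begin{equation*}
\widehat\beta^{PD}_{n,s} = A^{-1}\,\frac1n\sum_{i=1}^n \dot D_i\bigl(\ddot Y^{PD}_{is}-\ddot Y^{PD}_{i0}\bigr).
\end{equation*}

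Finally I would match this to \eqref{eq:BetaHat} by showing that the difference of two-way demeaned outcomes reduces to a difference of one-way (cross-sectionally) demeaned outcomes. Writing $\ddot Y^{PD}_{it}=Y_{it}-\bar Y_{i\cdot}-\bar Y_{\cdot t}+\bar Y_{\cdot\cdot}$, the individual time-mean $\bar Y_{i\cdot}$ and the grand mean $\bar Y_{\cdot\cdot}$ cancel in the difference, leaving
\begin{equation*}
\ddot Y^{PD}_{is}-\ddot Y^{PD}_{i0} = (Y_{is}-\bar Y_{\cdot s})-(Y_{i0}-\bar Y_{\cdot 0}) = \dot Y_i(s)-\dot Y_i(0),
\end{equation*}
upon identifying $Y_{it}=Y_i(t)$ and $\bar Y_{\cdot t}=n^{-1}\sum_j Y_j(t)$. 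This is exactly the bracket in \eqref{eq:BetaHat} evaluated at $t=s$, and since $A$ is the common leading factor, we obtain $\widehat\beta^{PD}_{n,s}=\widehat\beta_n(s)$ for every $s\neq 0$; both vanish at $s=0$ by construction. The main obstacle is the second step---recognizing the $I-T^{-1}J$ structure of the within cross-product matrix and inverting it cleanly, together with the repeated use of the row-sum-zero property of $\ddot Y^{PD}$ to telescope the estimator into a simple difference relative to the reference period. Once that structure is in place, the cancellation of the time-averaging terms mirrors exactly the cancellation already exploited in passing from \eqref{eq:BetaHat_0} to \eqref{eq:BetaHat} on the functional side, so the equivalence follows.
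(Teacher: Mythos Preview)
Your proposal is correct and follows essentially the same approach as the paper: the factorization $\ddot D^{PD}_{its}=\dot D_i(\mathbbm{1}_{\{t=s\}}-T^{-1})$, the Sherman--Morrison inversion of $I-T^{-1}J$ to $I+J$, and the repeated use of $\sum_t\ddot Y^{PD}_{it}=0$ are exactly the steps the paper carries out. Your version is in fact slightly more streamlined: the paper detours through the model decomposition $\ddot Y^{PD}_{it}=\beta_t\dot D_i(\cdots)+\ddot\varepsilon^{PD}_{it}$ and then separately verifies that $\ddot\varepsilon^{PD}_{it}+\sum_{s\neq 0}\ddot\varepsilon^{PD}_{is}$ reduces to $\dot\varepsilon_i(t)-\dot\varepsilon_i(0)$, whereas you stay at the level of observed outcomes and directly cancel the time-averaging terms in $\ddot Y^{PD}_{is}-\ddot Y^{PD}_{i0}$, which gives the same result with one fewer pass.
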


The results of Theorems \ref{thm:beta_equiv} and \ref{thm:beta_hat_equiv} imply that the functional DiD parameter $\beta(t)$ defined in \eqref{eq:fct_data_reg_simple} and its estimator $\widehat{\beta}_n(t)$ given in \eqref{eq:BetaHat} are pointwise equivalent to the panel data DiD parameter $\beta^{PD}_t$ in \eqref{eq:pd_twfe_reg_main} and its estimator $\widehat{\beta}^{PD}_{n,t}$ in \eqref{eq:beta_hat_transf_main_1} for each $t\in\{-T_{pre}, \dots, T_{post}\}$. This equivalence implies that the functional DiD parameter $\beta(t)$ can be consistently estimated at each time point $t\in\{-T_{pre}, \dots, T_{post}\}$ using standard panel data softwares that implement TWFE estimators.

To estimate the trajectory of the functional DiD parameter $\beta(t)$ in the practical scenario, we propose interpolating the pointwise estimates 
$\widehat{\beta}_n(t)=\widehat{\beta}^{PD}_{n,t}$ using a natural cubic spline interpolation. Notably, in this paper, we allow for a potential discontinuity of the functional DiD parameter at the treatment-point within the interval $(-1,0)$, which is completely unobservable in the practical scenario. Hence, we restrict our focus to the trajectory of the functional parameter over two disjoint intervals: the pre-treatment period $t \in [-T_{pre}, -1]$ and the post-treatment period $t \in [0, T_{post}]$. Let 
\begin{equation}\label{eq:BetaHatHat}
\doublewidehat{\beta}_n=\{\doublewidehat{\beta}_n(t): t\in [-T_{pre}, -1] \cup [0, T_{post}]\},
\end{equation}
denote the natural cubic spline interpolation of the points $(t,\widehat{\beta}_{n}(t))$ for $t\in\{-T_{pre},\dots,-1\}$ and $t\in\{0,\dots,T_{post}\}$ separately. Such interpolations can be conveniently obtained using implementations in standard statistical software routines (e.g. \texttt{splinefun()} in \textsf{R} or \texttt{CubicSpline()} in Python). By the definition of natural cubic splines, the interpolation estimator $\doublewidehat{\beta}_n$ in \eqref{eq:BetaHatHat} satisfies the following properties \citep[cf.][]{Atkinson_1968,de_Boor_2001}:
\begin{enumerate}[label={(\arabic*)}, align=left, leftmargin=*]
    \setlength\itemsep{-0.2em}
    \item $\doublewidehat{\beta}_n(t) = \widehat{\beta}_n(t)$ at every $t\in\{-T_{pre},\dots,T_{post}\}$, 
    \item $\doublewidehat{\beta}_n$ is, at most, cubic on each subinterval $[t-1,t]$ with $t=\{-T_{pre}+1,\dots,-1,1,T_{post}\}$,
    \item $\doublewidehat{\beta}_{n}{}^{\!\!\!\!''}(-T_{pre})=\doublewidehat{\beta}_{n}{}^{\!\!\!\!''}(-1)=\doublewidehat{\beta}_{n}{}^{\!\!\!\!''}(0)=\doublewidehat{\beta}_{n}{}^{\!\!\!\!''}(T_{post})=0$, where $\doublewidehat{\beta}_{n}{}^{\!\!\!\!''}(t)$ denotes the second derivative of $\doublewidehat{\beta}_n$ at $t$, and
    \item $\doublewidehat{\beta}_n$ is, at least, two times continuously differentiable over $[-T_{pre},-1]$ and $[0,T_{post}]$, i.e.~$\doublewidehat{\beta}_n \in C^2([-T_{pre},-1] \cup [0,T_{post}])$.
\end{enumerate}

%%%%%%%%%%%%%%%%%%%%%%%%%%%%%%%%%%%%%%%%%%%%%%%%%%%%
\subsection{Asymptotic Theory}\label{sec:theory}
%%%%%%%%%%%%%%%%%%%%%%%%%%%%%%%%%%%%%%%%%%%%%%%%%%%%
In Sections \ref{sec:pw_inf} and \ref{sec:unif_inf}, we will present theoretical results for our functional DiD estimator $\widehat{\beta}_n$ in \eqref{eq:BetaHat} under the oracle setting. In Section \ref{sec:spline_intrpl}, we will extend these results to the practical setting, involving the spline-based interpolation estimator $\doublewidehat{\beta}_n$ in \eqref{eq:BetaHatHat}. The following list summarizes the assumptions under which these theorems are developed.
%%%

\begin{enumerate}[label=(\arabic*),align=left, leftmargin=*] 
    \setlength\itemsep{-0.2em}
    \item \label{assumption: data_str} $(Y_1,D_1),\dots,(Y_n,D_n)\stackrel{\text{i.i.d.}}{\sim}(Y,D)$\\[-6ex]
        \begin{enumerate}[label=(\arabic{enumi}.\alph*),align=left, leftmargin=*]
            \setlength\itemsep{-0.2em}
            \item \label{assumption: data_str a} Oracle scenario: $Y_1(t),\dots,Y_n(t)$ are completely observable for all $t\in[-T_{pre},T_{post}]$. \\
			Asymptotic scenario: $n\to\infty$. 
            \item \label{assumption: data_str b} Practical scenario: $Y_1(t),\dots,Y_n(t)$ are only observable at $T:=T_{pre}+T_{post}+1$ time points 
			$t\in\{-T_{pre},\dots,-1,0,\dots,T_{post}\}$.\\
			Asymptotic scenario: $n \to \infty$ and $T\equiv T_{n}\to\infty$ as $n\to\infty$ such that $T_{pre,n}/T_{post,n}\to c$ as $n\to\infty$ for a constant $0<c<\infty$.  
        \end{enumerate}
    \item \label{assumption: moments} Moments:\\[-6ex]
        \begin{enumerate}[label=(\arabic{enumi}.\alph*),align=left, leftmargin=*] 
		\setlength\itemsep{-0.2em}
        \item \label{assumption: moments a} $\mathbb{E}[Y(t)^4] < \infty$  for all $t\in[-T_{pre},T_{post}]$ and $\mathbb{E}[D^4] < \infty$.        
        \item \label{assumption: moments b} $\mathbb{E}[\sup_{t \in (-T_{pre},-1) \cup (0, T_{post})} Y'(t)^2] < \infty$, where $Y'(t)$ denotes the first derivative.
        \item \label{assumption: moments c} $\mathbb{E}[\sup_{t \in [-T_{pre},-1]\cup[0,T_{post}]} Y(t)^2] < \infty$.
        \end{enumerate}
    \item \label{assumption: smoothness} Smoothness:\\[-6ex] 
        \begin{enumerate}[label=(\arabic{enumi}.\alph*),align=left, leftmargin=*] 
		\setlength\itemsep{-0.2em}
        \item \label{assumption: smooth a} $\phi,\varepsilon_i \in C^2([-T_{pre},-1]\cup[0,T_{post}])$.
        \item \label{assumption: smooth b} $\beta \in C^2([-T_{pre},-1]\cup[0,T_{post}])$.
        \item \label{assumption: smooth c} $C_{\beta} \in C^2([-T_{pre},-1]\cup[0,T_{post}])^2$, where $C_{\beta}$ denotes the asymptotic covariance of $\widehat{\beta}_n$.
        \item \label{assumption: smooth d} $\exists\;\beta^\star \in C^2([-1,-\frac{1}{T_{pre}}] \cup [0,1])$ s.t.~$\beta(t)=\beta^\star(t^\star(t))$, where $t^\star(t)=\frac{t}{T_{pre}}$ for $t\in[-T_{pre},-1]$ and $t^\star(t)=\frac{t}{T_{post}}$ for $t\in[0,T_{post}]$.
        \item \label{assumption: smooth e} $\exists C_{\beta}^\star \!\in\! C^2([-1,-\frac{1}{T_{pre}}] \cup [0,1])^2$ s.t.~$C_{\beta}(s,t)\!\!=\!\!C_{\beta}^\star(s^\star(s), t^\star(t))$, where $s^\star(s)$ is defined as $t^\star(t)$ above.
    \end{enumerate}
\end{enumerate}

Assumptions \ref{assumption: data_str a} and \ref{assumption: data_str b} describe the two data scenarios—oracle and practical—that we use to develop our theorems. Remember that the error term $\varepsilon(t)$ contained in $Y(t)$ (see \eqref{eq:fct_data_reg_simple}) is allowed to be heteroskedastic.  
%%%
Assumption \ref{assumption: moments a} states typical moment conditions used for showing (multivariate) asymptotic normality. Assumptions \ref{assumption: moments b} and \ref{assumption: moments c} impose further moment conditions allowing us to develop simultaneous inference results across $t\in[-T_{pre}, -1] \cup[0,T_{post}]$.
%%%
Assumptions \ref{assumption: smooth a}--\ref{assumption: smooth c} impose that all functional components in Model \eqref{eq:fct_data_reg_simple} and the asymptotic covariance function of $\widehat{\beta}_n$ are continuous at least up to their second derivative over $t\in[-T_{pre}, -1] \cup[0,T_{post}]$. In particular, the functional DiD parameter $\beta(t)$ is assumed to evolve smoothly over the pre- and post-treatment periods, reflecting the typically gradual changes in underlying factors. We do not impose smoothness assumption over $t\in (-1,0)$ by accounting for a plausible discontinuity at the treatment-point within this interval. Assumptions \ref{assumption: smooth d} and \ref{assumption: smooth e} are standard assumptions in the literature on nonparametric time series \citep[cf.][Ch.~6.2.9]{Fan_Yao_2003} and serve as a technical tool to asymptotically quantify interpolation errors.

%%%%%%%%%%%%%%%%%%%%%%%%%%%%%%%%%%%%%%%%%%%%%%%%%%%%%%%%
\subsubsection{Pointwise Inference}\label{sec:pw_inf}
%%%%%%%%%%%%%%%%%%%%%%%%%%%%%%%%%%%%%%%%%%%%%%%%%%%%%%%%

This subsection is used to introduce the pointwise inference for our functional oracle estimator $\widehat{\beta}_n(t)$ in \eqref{eq:BetaHat}, which, by Theorem \ref{thm:beta_hat_equiv} directly applies also for the panel data estimator $\widehat{\beta}^{PD}_{n,t}$ in \eqref{eq:beta_hat_transf_main_1} at each $t\in\{-T_{pre},\dots,T_{post}\}$.

\begin{theorem}[Pointwise Asymptotic Normality]\label{thm:pw_normality} Under Assumptions \ref{assumption: data_str a} and \ref{assumption: moments a}, we have for the functional oracle estimator $\widehat{\beta}_n(t)$ in \eqref{eq:BetaHat} and the panel data estimator $\widehat{\beta}^{PD}_{n,t}$ in \eqref{eq:beta_hat_transf_main_1} that pointwise\\[-7ex]

\spacingset{1.1}
\begin{align*}
    \begin{array}{lll}
        \sqrt{n}\left(\widehat{\beta}_n(t)-\beta(t)\right)&\stackrel{d}{\to} \mathcal{N}\left(0, C_{\beta}(t,t)\right)\quad&\text{for each}\quad t\in[-T_{pre},T_{post}]\\
        \sqrt{n}\left(\widehat{\beta}^{PD}_{n,t}-\beta(t)\right)&\stackrel{d}{\to} \mathcal{N}\left(0, C_{\beta}(t,t)\right)\quad&\text{for each}\quad t\in\{-T_{pre},\dots,T_{post}\},
    \end{array}
\end{align*}
\spacingset{1.5}
as $n\to\infty$, where 
$C_{\beta}(t,t)=\mathbb{E}[\dot{D}^2 (\dot{\varepsilon}(t)-\dot{\varepsilon}(-1))^2] \mathbb{E}[\dot{D}^2]^{-2}$.
\end{theorem}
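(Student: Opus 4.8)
The plan is to reduce the estimator to a ratio of sample averages, isolate a mean-zero numerator, and conclude by the central limit theorem combined with Slutsky's theorem. First I would substitute Model \eqref{eq:fct_data_reg_simple} into the oracle estimator \eqref{eq:BetaHat}. Writing $Y_i(t)=\beta(t)D_i+\lambda_i+\phi(t)+\varepsilon_i(t)$ and forming the time-difference $Y_i(t)-Y_i(0)$, the time-invariant individual effect $\lambda_i$ cancels and, using $\beta(0)=0$, one obtains $Y_i(t)-Y_i(0)=\beta(t)D_i+(\phi(t)-\phi(0))+(\varepsilon_i(t)-\varepsilon_i(0))$. Since the deterministic term $\phi(t)-\phi(0)$ is common across units, it is annihilated by the cross-sectional demeaning that defines $\dot{Y}_i(\cdot)$, leaving $\dot{Y}_i(t)-\dot{Y}_i(0)=\beta(t)\dot{D}_i+(\dot{\varepsilon}_i(t)-\dot{\varepsilon}_i(0))$, where $\dot{D}_i=D_i-\bar D$ and $\dot{\varepsilon}_i(\cdot)=\varepsilon_i(\cdot)-\bar\varepsilon(\cdot)$ denote sample-centered quantities. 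Substituting into \eqref{eq:BetaHat} then gives the exact decomposition
\[
\sqrt{n}\bigl(\widehat{\beta}_n(t)-\beta(t)\bigr)=\Bigl(\tfrac1n\textstyle\sum_{i=1}^n \dot{D}_i^2\Bigr)^{-1}\cdot \tfrac{1}{\sqrt n}\sum_{i=1}^n \dot{D}_i\bigl(\dot{\varepsilon}_i(t)-\dot{\varepsilon}_i(0)\bigr).
\]

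For the denominator, $\tfrac1n\sum_i\dot{D}_i^2=\tfrac1n\sum_iD_i^2-\bar D^2\xrightarrow{p}\mathbb{E}[D^2]-\mathbb{E}[D]^2=\mathbb{E}[\dot D^2]$ by the weak law of large numbers, which applies under the moment bound $\mathbb{E}[D^4]<\infty$ of Assumption \ref{assumption: moments a}. The key step is the numerator. Setting $e_i:=\varepsilon_i(t)-\varepsilon_i(0)$ and $\mu_D:=\mathbb{E}[D]$, I would expand the product of the two demeaned factors and collect terms to show that the double demeaning reduces, up to an asymptotically negligible error, to a single population centering:
\[
\tfrac{1}{\sqrt n}\sum_{i=1}^n \dot{D}_i(e_i-\bar e)=\tfrac{1}{\sqrt n}\sum_{i=1}^n (D_i-\mu_D)e_i-\sqrt n\,\bar e\,(\bar D-\mu_D).
\]
Because $\sqrt n\,\bar e=O_p(1)$ by the CLT (note $\mathbb{E}[e_i]=\mathbb{E}[\varepsilon(t)-\varepsilon(0)]=0$) while $\bar D-\mu_D=o_p(1)$, the second term is $o_p(1)$. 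The leading term is an i.i.d.\ average of the mean-zero summands $(D_i-\mu_D)e_i$: indeed $\mathbb{E}[(D-\mu_D)e]=\mathbb{E}[De]=\mathbb{E}[D\,\mathbb{E}(\varepsilon(t)-\varepsilon(0)\mid D)]=0$ by the conditional mean-zero assumption on $\varepsilon$. Its variance $\mathbb{E}[(D-\mu_D)^2e^2]=\mathbb{E}[\dot D^2(\dot\varepsilon(t)-\dot\varepsilon(0))^2]$ is finite by Cauchy--Schwarz using $\mathbb{E}[D^4]<\infty$ and $\mathbb{E}[Y(t)^4]<\infty$, so the CLT yields $\tfrac{1}{\sqrt n}\sum_i(D_i-\mu_D)e_i\xrightarrow{d}\mathcal{N}\bigl(0,\mathbb{E}[\dot D^2(\dot\varepsilon(t)-\dot\varepsilon(0))^2]\bigr)$.

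Finally, Slutsky's theorem combines the convergent denominator with the asymptotically normal numerator to deliver $\sqrt n(\widehat\beta_n(t)-\beta(t))\xrightarrow{d}\mathcal N(0,C_\beta(t,t))$ with $C_\beta(t,t)=\mathbb{E}[\dot D^2(\dot\varepsilon(t)-\dot\varepsilon(0))^2]\,\mathbb{E}[\dot D^2]^{-2}$, as claimed; the statement for $\widehat\beta^{PD}_{n,t}$ at $t\in\{-T_{pre},\dots,T_{post}\}$ follows immediately from the pointwise equivalence established in Theorem \ref{thm:beta_hat_equiv}. I expect the main obstacle to be the bookkeeping around the sample demeaning: one must verify that neither the centering of $D_i$ nor that of $\varepsilon_i$ contaminates the first-order limit, which is exactly the purpose of the algebraic identity above showing that the cross term factorizes into an $O_p(1)$ piece times an $o_p(1)$ piece. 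Everything else—the law of large numbers, the conditional mean-zero reduction, and the concluding Slutsky argument—is routine once this reduction is in place.
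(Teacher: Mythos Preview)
Your proof is correct and follows essentially the same strategy as the paper---LLN for the denominator, CLT for the numerator, Slutsky to combine---but with two minor differences in execution. First, the paper detours through $\widehat{\gamma}_n(t)$, establishing $\sqrt{n}(\widehat{\gamma}_n(t)-\gamma(t))\stackrel{d}{\to}\mathcal{N}(0,C_\gamma(t,t))$ and then deducing the limit for $\widehat{\beta}_n(t)=\widehat{\gamma}_n(t)-\widehat{\gamma}_n(0)$ by computing $C_\beta(t,t)=C_\gamma(t,t)+C_\gamma(0,0)-2C_\gamma(t,0)$; you work with $\widehat{\beta}_n(t)$ directly via the simplified form \eqref{eq:BetaHat}, which is shorter and avoids the implicit need for joint convergence of $(\widehat{\gamma}_n(t),\widehat{\gamma}_n(0))$. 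Second, the paper applies the CLT to $n^{-1/2}\sum_i\dot{D}_i\ddot{\varepsilon}_i(t)$ without explicitly separating sample from population centering, whereas you supply the algebraic identity $n^{-1/2}\sum_i\dot{D}_i(e_i-\bar e)=n^{-1/2}\sum_i(D_i-\mu_D)e_i-\sqrt{n}\,\bar e\,(\bar D-\mu_D)$ and dispose of the cross term as $O_p(1)\cdot o_p(1)$; this is a cleaner justification for applying the i.i.d.\ CLT. Both routes reach the same limit with the same variance formula.
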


Theorem \ref{thm:pw_normality} shows that, for each time point $t\in[-T_{pre},T_{post}]$, $\widehat{\beta}_n(t)$ is pointwise asymptotically normal under standard assumptions. At the reference time point $t=-1$, we have that $\widehat{\beta}_n(-1)=\beta(-1)=0$ such that the variance function $C_{\beta}(t,t)$ is zero at $t=-1$. 

Typically, the variance function $C_{\beta}(t,t)$ is unknown and has to be estimated from the data using 
\begin{equation*}
	\widehat{C}_{\beta,n}(t,t) = \left(\frac{1}{n}\sum_{i=1}^n \dot{D}_i^2 \left( \Delta_0 \dot{Y}_i(t) \right)^2 \right)\left(\frac{1}{n} \sum_{i=1}^n \dot{D}_i^2\right)^{-2},
\end{equation*}
where $\Delta_0 \dot{Y}_i(t)=(\dot{Y}_i(t)-\dot{Y}_i(-1))-\widehat{\beta}_n(t)\dot{D}_i$ with $\widehat{\beta}_n(t)$ defined in \eqref{eq:BetaHat} and $\dot{D}_i=D_i-n^{-1}\sum_{i=1}^n D_i$. It follows from standard arguments (Slutsky's Theorem) that the pointwise asymptotic normality of Theorem \ref{thm:pw_normality} still holds after replacing $C_{\beta}(t,t)$ with $\widehat{C}_{\beta,n}(t,t)$ such that
\begin{equation}\label{eq:T_statistic}
	T_{\beta,n}(t)=\frac{\sqrt{n}(\widehat{\beta}_n(t)-\beta(t))}{\sqrt{\widehat{C}_{\beta,n}(t,t)}}\overset{a}\sim \mathcal{N}(0,1),
\quad\text{as}\quad n\to\infty,
\end{equation}
pointwise for each $t\in[-T_{pre},T_{post}]$. Inverting the above statistic, allows us to construct the pointwise $(1-\alpha)\times 100\%$  confidence band for $\beta(t)$, with significance level $\alpha\in(0,1)$, 
\begin{equation}\label{eq:pw_ci_t}
	\widehat{\operatorname{CI}}_{1-\alpha}(t)=\left[\widehat{\beta}_n(t)\pm t_{1-\alpha/2, \text{df}}\sqrt{\widehat{C}_{\beta,n}(t,t)/n}\right]
\end{equation}
for each $t\in[-T_{pre}, T_{post}]$, where the critical value $t_{1-\alpha/2, \text{df}}$ is given by the ($1-\alpha/2)$-quantile of the $t$-distribution with $\text{df}=n-1$ degrees of freedom, allowing for a finite sample correction and being asymptotically equivalent to the $(1-\alpha/2)$-quantile of the standard normal distribution. 

The pointwise confidence intervals $\widehat{\operatorname{CI}}_{1-\alpha}(t)$ in \eqref{eq:pw_ci_t} are typically displayed in conventional event study plots for selected pre- and post-treatment event times $t\in\{-T_{pre},\dots,T_{post}\}$ \citep[cf.][]{lovenheim_willen_2019,chen_chen_yang_2025}. However, such pointwise intervals do not account for the simultaneous testing problem that arises when interpreting event study plots across multiple event times.

%%%%%%%%%%%%%%%%%%%%%%%%%%%%%%%%%%%%%%%%%%%%%%%%%%%%%%%%%%%%
\subsubsection{Simultaneous Inference (Oracle Scenario)}\label{sec:unif_inf}
%%%%%%%%%%%%%%%%%%%%%%%%%%%%%%%%%%%%%%%%%%%%%%%%%%%%%%%%%%%%

The following theorem generalizes the pointwise asymptotic normality of Theorem \ref{thm:pw_normality} to a uniform asymptotic normality result over the time span $[-T_{pre},-1] \cup [0,T_{post}]$. 

\begin{theorem}[Uniform Asymptotic Normality of the Oracle Estimator (\ref{eq:BetaHat})]\label{thm:gauss}
	Under Assumptions \ref{assumption: data_str a}, \ref{assumption: moments a}, \ref{assumption: moments b}, \ref{assumption: smooth a} and \ref{assumption: smooth b}, we have for the functional oracle estimator $\widehat{\beta}_n(t)$ in \eqref{eq:BetaHat} that 
	\begin{equation*}
		\sqrt{n}\left(\widehat{\beta}_n-\beta\right) \stackrel{d}{\to} \mathcal{GP}(0,C_{\beta}),\quad\text{as}\quad n\to\infty,
	\end{equation*}
	in $C([-T_{pre},-1] \cup [0,T_{post}])$, i.e.~uniformly for all $s,t\in[-T_{pre},-1] \cup [0,T_{post}]$, where $C_{\beta}=\{C_{\beta}(s,t): s,t \in [-T_{pre},-1] \cup [0,T_{post}]\}$ with $C_{\beta}(s, t) =  \mathbb{E}[\dot{D}^2 \left(\dot{\varepsilon}(s)-\dot{\varepsilon}(-1)\right) \left(\dot{\varepsilon}(t)-\dot{\varepsilon}(-1)\right)]\mathbb{E}[\dot{D}^2]^{-2}$.
\end{theorem}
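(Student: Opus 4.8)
The plan is to reduce the estimator to a normalized sum of i.i.d.\ random functions and then invoke a central limit theorem in the Banach space $C[-T_{pre},T_{post}]$. Substituting Model~\eqref{eq:fct_data_reg_simple} into \eqref{eq:BetaHat} and using $\beta(0)=0$ (so that the individual effect $\lambda_i$ cancels in the difference $Y_i(t)-Y_i(0)$), I would first establish the exact identity $\dot{Y}_i(t)-\dot{Y}_i(0)=\beta(t)\dot{D}_i+(\dot{\varepsilon}_i(t)-\dot{\varepsilon}_i(0))$, in which the deterministic trend $\phi$ also cancels. Plugging this into the numerator of \eqref{eq:BetaHat} yields the representation
\[
\sqrt{n}\bigl(\widehat{\beta}_n(t)-\beta(t)\bigr)=\Bigl(\tfrac{1}{n}\sum_{i=1}^n\dot{D}_i^2\Bigr)^{-1}W_n(t),\qquad W_n(t):=\tfrac{1}{\sqrt{n}}\sum_{i=1}^n\dot{D}_i\bigl(\dot{\varepsilon}_i(t)-\dot{\varepsilon}_i(0)\bigr),
\]
valid for all $t$. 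Since the scalar $\tfrac1n\sum_i\dot{D}_i^2\stackrel{p}{\to}\mathbb{E}[\dot{D}^2]=\mathrm{Var}(D)>0$ by the law of large numbers (Assumption~\ref{assumption: moments a}, trivially satisfied as $D$ is binary), the whole functional limit is governed by the process $W_n$, and a final Slutsky step will carry the constant factor through.

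Next I would remove the sample demeaning in $W_n$. Writing $\dot{D}_i=\tilde{D}_i-\bar{\tilde{D}}$ with $\tilde{D}_i:=D_i-\mathbb{E}[D]$ and $\eta_i(t):=\varepsilon_i(t)-\varepsilon_i(0)$, expanding the product produces a leading term $W_n^0(t):=\tfrac{1}{\sqrt{n}}\sum_i\tilde{D}_i\eta_i(t)$ plus three cross terms, each of the schematic form $(\sqrt{n}\,\bar{\tilde{D}})\,\sup_t|\bar{\eta}(t)|$ or $\bar{\tilde{D}}\,(\sqrt{n}\,\bar{\eta}(t))$. Because $D$ is binary, $\tilde{D}$ is bounded, so $\sqrt{n}\,\bar{\tilde{D}}=O_p(1)$; meanwhile $\sup_t|\bar{\eta}(t)|=o_p(1)$ by a uniform law of large numbers, which holds because $\eta_i(0)=0$ and $\eta_i\in C^2$ (Assumption~\ref{assumption: smooth a}) give $\sup_t|\eta_i(t)|\le(T_{pre}+T_{post})\sup_t|\eta_i'(t)|$, integrable via Assumption~\ref{assumption: moments b}, so $\bar{\eta}\to\mathbb{E}[\eta]=0$ in $\|\cdot\|_\infty$. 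Hence all cross terms are $o_p(1)$ uniformly and $W_n=W_n^0+o_p(1)$ in $C[-T_{pre},T_{post}]$, where the summands $\xi_i(t):=\tilde{D}_i\eta_i(t)$ are i.i.d.\ and mean zero since $\mathbb{E}[\varepsilon(t)\mid D]=0$.

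The core of the argument is the functional CLT for $W_n^0$ in $C[-T_{pre},T_{post}]$, for which I would verify the two standard ingredients. For the finite-dimensional distributions, the multivariate CLT together with the Cram\'er--Wold device, using the finite second moments of Theorem~\ref{thm:pw_normality} and Assumption~\ref{assumption: moments a}, gives $(W_n^0(t_1),\dots,W_n^0(t_k))\stackrel{d}{\to}\mathcal{N}(0,\Gamma)$ with $\Gamma(t_j,t_l)=\mathbb{E}[\tilde{D}^2\eta(t_j)\eta(t_l)]$. For tightness, I would exploit the smoothness of the paths: by Assumption~\ref{assumption: smooth a} each $\eta_i\in C^2$, so $\eta_i(s)-\eta_i(t)=\int_t^s\eta_i'(r)\,\dd r$, and using boundedness of $\tilde{D}$ together with $\mathbb{E}[\sup_r\varepsilon'(r)^2]<\infty$ (from Assumption~\ref{assumption: moments b} and the bounded derivatives of $\beta,\phi$ in Assumptions~\ref{assumption: smooth a}--\ref{assumption: smooth b}) one obtains the increment bound
\[
\mathbb{E}\bigl[(W_n^0(s)-W_n^0(t))^2\bigr]=\mathbb{E}\bigl[\tilde{D}^2(\eta(s)-\eta(t))^2\bigr]\le K\,|s-t|^2 .
\]
Combined with the anchoring $W_n^0(0)=0$ (since $\eta_i(0)=0$), Billingsley's moment criterion yields tightness in $C[-T_{pre},T_{post}]$; equivalently, one may invoke the Jain--Marcus CLT for Lipschitz random functions with square-integrable Lipschitz constant $|\tilde{D}_i|\sup_r|\eta_i'(r)|$. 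Finite-dimensional convergence and tightness together give $W_n^0\stackrel{d}{\to}\mathcal{GP}(0,\Gamma)$.

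Finally, a functional Slutsky (continuous-mapping) argument combines $W_n=W_n^0+o_p(1)\stackrel{d}{\to}\mathcal{GP}(0,\Gamma)$ with the consistent scalar factor $\bigl(\tfrac1n\sum_i\dot D_i^2\bigr)^{-1}\stackrel{p}{\to}\mathbb{E}[\dot D^2]^{-1}$, delivering $\sqrt{n}(\widehat{\beta}_n-\beta)\stackrel{d}{\to}\mathcal{GP}(0,C_{\beta})$ with $C_{\beta}(s,t)=\mathbb{E}[\dot D^2]^{-2}\Gamma(s,t)$, which matches the stated covariance once $\dot{\varepsilon}$ is read as the population-centered $\varepsilon$ (legitimate because $\mathbb{E}[\varepsilon(t)]=0$). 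I expect the tightness step to be the main obstacle: the finite-dimensional part is essentially Theorem~\ref{thm:pw_normality} via Cram\'er--Wold, whereas establishing asymptotic equicontinuity rigorously requires the smoothness and derivative-moment assumptions and care in propagating the $o_p(1)$ demeaning remainders \emph{uniformly} over the whole continuum $[-T_{pre},T_{post}]$.
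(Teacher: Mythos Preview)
Your proposal is correct and follows essentially the same route as the paper: write $\sqrt{n}(\widehat\beta_n-\beta)$ as a scalar factor times a normalized sum, establish a functional CLT for that sum via finite-dimensional convergence plus tightness from the Lipschitz/differentiability structure, and conclude by Slutsky. The paper derives the same increment bound $\mathbb{E}[(\tilde X(s)-\tilde X(t))^2]\le K|s-t|^2$ from the Mean Value Theorem and Assumption~\ref{assumption: moments b}, but instead of invoking Billingsley's moment criterion or Jain--Marcus directly, it routes the bound through a result of Hahn (1977) to obtain a path-level modulus $|\tilde X(s)-\tilde X(t)|\le A\,\phi(|s-t|)$ with $\mathbb{E}[A^2]<\infty$, and then applies Billingsley's Theorems~7.1--7.2. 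These two tightness arguments are interchangeable here. One point where you are actually more careful than the paper: you explicitly replace the sample-demeaned summands $\dot D_i(\dot\varepsilon_i(t)-\dot\varepsilon_i(0))$ by the population-centered i.i.d.\ versions $\tilde D_i\eta_i(t)$ and control the cross terms uniformly; the paper applies its general i.i.d.\ CLT (Theorem~I) directly to the sample-demeaned quantities, tacitly absorbing that step.
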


Theorem \ref{thm:gauss} demonstrates that the functional estimator $\widehat{\beta}_n$ is asymptotically a Gaussian process in the Banach space of continuous functions, $C([-T_{pre},-1] \cup [0,T_{post}])$. In contrast to Theorem \ref{thm:pw_normality}, Theorem \ref{thm:gauss} considers the total covariance structure $C_{\beta}(s,t)=\lim_{n\to\infty} n \operatorname{Cov}(\widehat{\beta}_n(s),\widehat{\beta}_n(t))$ of $\widehat{\beta}_n$. At the reference time point $t=-1$, we have that $\widehat{\beta}_n(-1)=\beta(-1)=0$ such that the covariance function $C_{\beta}(s, t)$ is zero at $s=t=-1$. Considering the covariance function $C_{\beta}(s,t)$ instead of only the variance function $C_{\beta}(t,t)$ is important for constructing simultaneous confidence bands for the functional DiD parameter $\beta$. Usually, the covariance function $C_{\beta}(s,t)$ is unknown and has to be estimated from the data using 

\begin{equation}\label{eq:cov_est}
\widehat{C}_{\beta,n}(s,t)=\left(\frac{1}{n}\sum_{i=1}^n \dot{D}_i^2 \left(\Delta_0\dot{Y}_i(s)\right) \left(\Delta_0\dot{Y}_i(t)\right)\right)\left(\frac{1}{n}\sum_{i=1}^n \dot{D}_i^2\right)^{-2},
\end{equation}
where 
$\Delta_0\dot{Y}_i(t) = (\dot{Y}_i(t)-\dot{Y}_i(-1)) -\widehat{\beta}_n(t)\dot{D}_i$ with $\widehat{\beta}_n(t)$ defined in \eqref{eq:BetaHat} and $\dot{D}_i=D_i-n^{-1}\sum_{i=1}^n D_i$. Theorem \ref{thm:consistency_cov} establishes that the covariance estimator $\widehat{C}_{\beta,n}(s,t)$ in \eqref{eq:cov_est} is uniformly consistent to $C_{\beta}(s,t)$ over all $s,t \in [-T_{pre},-1] \cup [0,T_{post}]$. 

\begin{theorem}[Uniform Consistency of Empirical Covariance]\label{thm:consistency_cov}
Under Assumptions \ref{assumption: data_str a}, \ref{assumption: moments}, \ref{assumption: smooth a}, \ref{assumption: smooth b} and \ref{assumption: smooth c}, concerning the covariance function $C_{\beta}$, we have
\begin{equation*}
	\sup_{s,t\in[-T_{pre},-1] \cup [0,T_{post}]}\left| \widehat{C}_{\beta,n}(s,t) - C_{\beta}(s,t) \right| \stackrel{a.s.}{\to} 0 \,,
\end{equation*}
where $\widehat{C}_{\beta,n}=\{\widehat{C}_{\beta,n}(s,t):s,t\in[-T_{pre},-1] \cup [0,T_{post}]\}$ with $\widehat{C}_{\beta,n}(s,t)$ defined in \eqref{eq:cov_est}.
\end{theorem}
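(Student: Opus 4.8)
The plan is to reduce the uniform convergence of $\widehat{C}_{\beta,n}$ to a uniform strong law of large numbers (ULLN) for a single well-behaved continuous random field on the compact square $[-T_{pre},T_{post}]^2$, working throughout in the oracle scenario of Assumption \ref{assumption: data_str a}. First I would dispose of the scalar denominator $(n^{-1}\sum_i \dot D_i^2)^{-2}$, which does not depend on $(s,t)$: by the ordinary SLLN together with the overlap Assumption III one has $n^{-1}\sum_i \dot D_i^2 = n^{-1}\sum_i D_i^2 - \bar D_n^2 \to \mathbb{E}[\dot D^2] > 0$ almost surely, so its squared inverse converges a.s.\ to $\mathbb{E}[\dot D^2]^{-2}$. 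Since a convergent deterministic factor passes through a supremum, it suffices to prove that the numerator average converges uniformly a.s.\ to $\mathbb{E}[\dot D^2(\dot\varepsilon(s)-\dot\varepsilon(0))(\dot\varepsilon(t)-\dot\varepsilon(0))]$ and then combine the two limits by a Slutsky-type argument.

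Second, I would substitute the model-implied identity $\Delta_0\dot Y_i(t) = (\dot\varepsilon_i(t)-\dot\varepsilon_i(0)) - (\widehat\beta_n(t)-\beta(t))\dot D_i$, which follows from $\dot Y_i(t)-\dot Y_i(0) = \beta(t)\dot D_i + (\dot\varepsilon_i(t)-\dot\varepsilon_i(0))$ using $\beta(0)=0$ and the definition of $\widehat\beta_n$ in \eqref{eq:BetaHat}. Expanding the product $\Delta_0\dot Y_i(s)\Delta_0\dot Y_i(t)$ produces a leading term $n^{-1}\sum_i \dot D_i^2 e_i(s)e_i(t)$, with $e_i(t):=\dot\varepsilon_i(t)-\dot\varepsilon_i(0)$, plus three remainder terms each carrying a factor $\widehat\beta_n(s)-\beta(s)$ or $\widehat\beta_n(t)-\beta(t)$. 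I would argue these remainders vanish uniformly a.s.: the factor $\sup_t|\widehat\beta_n(t)-\beta(t)|\to 0$ a.s.\ (a uniform consistency statement obtainable from the same ULLN applied to $n^{-1}\sum_i\dot D_i\, e_i(t)$, whose population mean is zero because $\mathbb{E}[\varepsilon(t)\mid D]=0$), while the accompanying averages $n^{-1}\sum_i\dot D_i^4$ and $\sup_s|n^{-1}\sum_i\dot D_i^3 e_i(s)|$ stay a.s.\ bounded under Assumptions \ref{assumption: moments a} and \ref{assumption: moments c}.

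Third, and this is the analytic heart, I would establish the ULLN for the leading random field $(s,t)\mapsto n^{-1}\sum_i \dot D_i^2 e_i(s)e_i(t)$. Because the summands are i.i.d.\ elements of the separable Banach space $C([-T_{pre},T_{post}]^2)$, the cleanest route is a Banach-space SLLN (Mourier), which yields a.s.\ convergence in sup-norm once the single integrability condition $\mathbb{E}\big[\sup_{s,t}|\dot D^2 e(s)e(t)|\big]<\infty$ holds; this follows from $\mathbb{E}[D^4]<\infty$ (Assumption \ref{assumption: moments a}) and $\mathbb{E}[\sup_t Y(t)^2]<\infty$ (Assumption \ref{assumption: moments c}) via Cauchy–Schwarz, after noting $\sup_t|e(t)|\le 2\sup_t|\dot\varepsilon(t)|$. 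Alternatively, to make the smoothness assumptions explicit, I would combine the pointwise SLLN on a countable dense subset of the square with stochastic equicontinuity: Assumptions \ref{assumption: smooth a}--\ref{assumption: smooth b} ensure each path $e_i$ is continuously differentiable, and the derivative moment bound $\mathbb{E}[\sup_t Y'(t)^2]<\infty$ (Assumption \ref{assumption: moments b}) furnishes an integrable Lipschitz constant, so an Arzel\`a--Ascoli covering argument upgrades pointwise to uniform convergence.

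Finally, I would reconcile the empirical (sample-)centering with the population centering. Writing $\dot D_i = \tilde D_i-(\bar D_n-\mathbb{E}[D])$ and $\dot\varepsilon_i(\cdot)=\tilde\varepsilon_i(\cdot)-\bar\varepsilon_n(\cdot)$, the deviations $\bar D_n-\mathbb{E}[D]\to 0$ and $\sup_t|\bar\varepsilon_n(t)|\to 0$ hold a.s.\ (the latter again by Mourier applied to the centered paths), so replacing sample means by population means contributes only uniformly negligible terms. Collecting the limits for numerator and denominator then gives the claimed uniform a.s.\ convergence. I expect the main obstacle to be the ULLN over the two-dimensional index set in the third step—specifically, verifying sup-norm integrability and the equicontinuity needed to pass from pointwise to uniform convergence—together with the bookkeeping required to show that the sample-centering couplings across units do not disturb this uniform convergence.
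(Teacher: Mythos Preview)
Your proposal is correct and largely parallels the paper's strategy---both separate off the scalar denominator, establish a uniform SLLN for the numerator field, and combine via Slutsky/continuous mapping---but the execution differs in one substantive way. The paper's proof of the ULLN step does not invoke Mourier's Banach-space SLLN; instead it proves strong stochastic equicontinuity of the empirical covariance field directly, using a modulus-of-continuity bound from \cite{Hahn_1977} (the same bound used in the proof of Theorem \ref{thm:gauss}) together with Cauchy--Schwarz, and then appeals to Theorem~22.8 of Davidson (2021), which upgrades pointwise a.s.\ convergence plus strong stochastic equicontinuity to uniform a.s.\ convergence. Your ``alternative'' route via derivative-based Lipschitz bounds and Arzel\`a--Ascoli is essentially this argument. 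Your \emph{primary} route via Mourier is more direct and makes the role of Assumption~\ref{assumption: moments c} completely transparent (sup-norm integrability is all that is needed), whereas the paper's equicontinuity argument makes explicit use of the derivative moment bound \ref{assumption: moments b} and the $C^2$ smoothness \ref{assumption: smooth a}--\ref{assumption: smooth b}. The paper is also less explicit than you are about expanding $\Delta_0\dot Y_i(t)$ and controlling the $(\widehat\beta_n-\beta)$ remainder terms and the sample-versus-population centering; it effectively absorbs these into its definition of $\widehat C_{Z,n}$ and the abstract Theorem~II framework.
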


Analogous to the pointwise confidence band in \eqref{eq:pw_ci_t}, Theorems \ref{thm:gauss} and \ref{thm:consistency_cov} provide the foundation for constructing Simultaneous Confidence Bands (SCBs). In this work, we consider a two-sided supremum-based $(1-\alpha)\times 100\%$ SCB for relevance testing in the post-treatment period $[0,T_{post}]$ (Section \ref{sec:honest_testing}) and two one-sided infimum-based $(1-\alpha)\times 100\%$ SCBs for equivalence testing in the pre-treatment period $[-T_{pre},-1]$ (Section \ref{sec:validation}),\\[-7ex]

\spacingset{1.2}
\begin{align}
\widehat{\operatorname{SCB}}^{\sup}_{1-\alpha}(t) & = \left[\widehat{\beta}_n(t) \pm u^{\sup}_{1-\alpha/2}\sqrt{\widehat{C}_{\beta,n}(t, t) /n}\right]&& \text{for all} \quad t\in[0,T_{post}],\label{eq:supSCB}\\
\widehat{\operatorname{SCB}}^{\inf,+}_{1-\alpha}(t) & = \left(\left.-\infty, \; \widehat{\beta}_n(t) + u^{\,\inf}_{1-\alpha}\sqrt{\widehat{C}_{\beta,n}(t, t) /n}\right]\right. && \text{for all} \quad t\in[-T_{pre},-1],\label{eq:infSCB_plus}\\
\widehat{\operatorname{SCB}}^{\inf,-}_{1-\alpha}(t) & = \left[\left.\widehat{\beta}_n(t) - u^{\,\inf}_{1-\alpha}\sqrt{\widehat{C}_{\beta,n}(t, t) /n},\; \infty\right)\right. && \text{for all} \quad t\in[-T_{pre},-1],\label{eq:infSCB_minus}
\end{align}
\spacingset{1.5}

\noindent where 
$u_{1-\alpha/2}^{\sup}$ and 
$u_{1-\alpha}^{\,\inf}$ denote the $(1-\alpha/2)$ and $(1-\alpha)$-quantiles of the distributions of the supremum- and infimum-based test statistics $\sup_{t\in[0,T_{post}]}T_{\beta,n}(t)$
and 
$\inf_{t\in[-T_{pre},-1]}T_{\beta,n}(t)$
with 
$T_{\beta,n}(t)=\sqrt{n}(\widehat{\beta}_n(t)-\beta(t))/\left(\widehat{C}_{\beta,n}(t,t)\right)^{1/2}$. Theorem \ref{thm:SCBs} establishes that the simultaneous non-coverage probabilities of the SCBs in \eqref{eq:supSCB}--\eqref{eq:infSCB_minus} converge to their nominal levels, rendering them suitable for relevance and equivalence testing; see Sections \ref{sec:honest_testing} and \ref{sec:validation}.

\begin{theorem}[Non-Coverage Probabilities of SCBs]\label{thm:SCBs}
    Under Assumptions \ref{assumption: data_str a}, \ref{assumption: moments}, \ref{assumption: smooth a}, \ref{assumption: smooth b} and \ref{assumption: smooth c}, we have
\begin{align*}
\lim_{n\to\infty}\quad & P\left(\beta(t)\not\in\widehat{\operatorname{SCB}}^{\sup}_{1-\alpha}(t)\;\text{for at least one}\; t \in [0,T_{post}] \right) = \alpha,\\ 
\lim_{n\to\infty}\quad & P\left(\beta(t)\not\in\widehat{\operatorname{SCB}}^{\inf,\square}_{1-\alpha}(t)\;\text{for all}\; t \in [-T_{pre},-1] \right) = \alpha\quad\text{for each}\quad\square\in\{+,-\}.
\end{align*}
\end{theorem}

There exists no exact formula for the distribution of suprema and infima of general Gaussian processes \citep[see, for instance,][p.~5]{Adler_1990}. Therefore, we follow the usual approach and estimate the quantiles $u_{1-\alpha/2}^{\sup}$ and $u_{1-\alpha}^{\inf}$ using bootstrap  methods. For the case of the supremum-based quantile, we also make use of the Kac-Rice formula approach considered in \cite{Liebl_Reimherr_2023}. Justification for the validity of these approaches follows from the fact that $\sqrt{n}(\widehat{\beta}_n-\beta)$ is asymptotically distributed as a Gaussian process in the Banach space $C([-T_{pre},-1] \cup [0,T_{post}])$ (Theorem \ref{thm:gauss}) and consistency arguments along the lines of, for instance, those in \cite{Belloni_et_al_2017} and Theorem 3.2 in \citet{Liebl_Reimherr_2023}.

%%%%%%%%%%%%%%%%%%%%%%%%%%%%%%%%%%%%%%%%%%%%%%%%%%%%%%%%%%%%%%%%%%%%%%%%%%%%%%
\subsubsection{Simultaneous Inference (Practical Scenario)}\label{sec:spline_intrpl}
%%%%%%%%%%%%%%%%%%%%%%%%%%%%%%%%%%%%%%%%%%%%%%%%%%%%%%%%%%%%%%%%%%%%%%%%%%%%%%

This subsection extends our theoretical results from the oracle setting to the practical panel data context, where the functional DiD parameter $\beta(t)$ is approximated by the interpolation estimator $\doublewidehat{\beta}_n(t)$ defined in \eqref{eq:BetaHatHat}. Theorem \ref{thm:intrpl_consistency} establishes the uniform consistency of $\doublewidehat{\beta}_n(t)$ over the time span $t\in[-T_{pre},-1] \cup [0,T_{post}]$, which is fundamental to deriving the uniform asymptotic normality result in Theorem \ref{thm:intrpl_dist} for the interpolation estimator $\doublewidehat{\beta}_n(t)$.

\begin{theorem}[Uniform Consistency of Interpolation Estimator (\ref{eq:BetaHatHat})]\label{thm:intrpl_consistency}
Under Assumptions \ref{assumption: data_str b}, \ref{assumption: smooth b} and \ref{assumption: smooth d}, we have that
\begin{equation*}
\sup_{t \in [-T_{pre},-1] \cup [0,T_{post}]} \left|\doublewidehat{\beta}_n(t)-\beta(t)\right| \le c_1\max_{t\in\{-T_{pre},\dots,T_{post}\}} \left|\widehat{\beta}_n(t) - \beta(t)\right| +  \frac{c_2 K_{\beta^*}}{T},
\end{equation*}
where $0<c_1,c_2,K_{\beta^*}<\infty$ are finite constants with $K_{\beta^*}=\sup_{t^\star\in(-1,-\frac{1}{T_{pre}}) \cup(0,1)}|\beta^{*\prime}(t^\star)|$. 
\end{theorem}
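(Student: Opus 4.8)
The plan is to split the total error into a \emph{statistical} part, controlling how the pointwise errors $\widehat{\beta}_n(t)-\beta(t)$ at the integer grid propagate through the interpolation, and a \emph{deterministic} approximation part, controlling how well a natural cubic spline interpolating the true $\beta$ at the integer grid reproduces $\beta$ itself. Let $\mathcal{S}$ denote the natural cubic spline interpolation operator on the integer grid $\{-T_{pre},\dots,T_{post}\}$, i.e.\ $\mathcal{S}[g]$ is the natural cubic spline matching the grid values $\{g(t)\}$. Since $\mathcal{S}$ is linear and depends only on grid values, and since $\doublewidehat{\beta}_n=\mathcal{S}[\widehat{\beta}_n]$ by construction while the grid values of $\mathcal{S}[\beta]$ coincide with those of $\beta$, I would write
\[
\doublewidehat{\beta}_n(t)-\beta(t)=\underbrace{\mathcal{S}[\widehat{\beta}_n-\beta](t)}_{\text{estimation}}+\underbrace{\bigl(\mathcal{S}[\beta](t)-\beta(t)\bigr)}_{\text{approximation}}.
\]
The triangle inequality then reduces the claim to bounding these two terms separately by $c_1\max_{t}|\widehat{\beta}_n(t)-\beta(t)|$ and $c_2 K_{\beta^*}/T$.

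For the estimation term, the key fact is that the natural cubic spline interpolation operator is uniformly bounded in the supremum norm, with a Lebesgue constant that does \emph{not} grow with the number of grid points on a uniform grid; concretely $\|\mathcal{S}[g]\|_\infty\le c_1\max_{t\in\{-T_{pre},\dots,T_{post}\}}|g(t)|$ with $c_1$ an absolute constant. Applying this with $g=\widehat{\beta}_n-\beta$ yields the first summand of the bound directly. This step carries all of the statistical randomness, but only through the grid maximum, which is precisely the quantity appearing on the right-hand side; no concentration or distributional input is needed here.

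For the approximation term I would invoke a first-order cubic-spline interpolation error estimate, which needs only a bound on the first derivative of the interpolated function. The rescaling in Assumption \ref{assumption: smooth d} is what makes this term small: although the integer grid spacing in $t$ equals one and does not shrink, differentiating $\beta(t)=\beta^\star(t^\star(t))$ gives $\beta'(t)=\beta^{*\prime}(t^\star(t))\,(t^\star)'(t)$, and since $(t^\star)'(t)\in\{1/T_{pre},\,1/T_{post}\}$ we obtain $\|\beta'\|_\infty\le K_{\beta^*}/\min(T_{pre},T_{post})$. Under Assumption \ref{assumption: data_str b}, $T_{pre}/T_{post}\to c\in(0,\infty)$, so $\min(T_{pre},T_{post})\ge c''\,T$ for large $n$, giving $\|\beta'\|_\infty=O(K_{\beta^*}/T)$. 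Plugging this Lipschitz constant, together with unit grid spacing $h=1$, into the standard error estimate $\|\mathcal{S}[\beta]-\beta\|_\infty\le c_2'\,h\,\|\beta'\|_\infty$ (valid for $C^1$, and in particular for $\beta\in C^2$ by Assumption \ref{assumption: smooth b}) delivers the $c_2 K_{\beta^*}/T$ term. I would note that the \emph{first}-order bound is the natural one to state here: the map $t^\star$ has a kink at $t=0$, so the globally available control is the Lipschitz bound above, and one does not want to rely on the second-derivative rate across the reference point.

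The main obstacle is the estimation-term step, namely proving that the Lebesgue constant $c_1$ of natural cubic spline interpolation is bounded independently of $T$. On a uniform grid this is classical and follows from the banded, diagonally dominant structure of the tridiagonal system defining the spline moments, which forces geometric decay of the influence coefficients and hence a uniformly bounded operator norm; I would cite \citet{Atkinson_1968} and \citet{de_Boor_2001}. A secondary technical point is the endpoint behaviour, where the natural (zero second-derivative) boundary conditions can degrade accuracy; the first-order bound I employ is robust to this, so no special endpoint treatment is required. Finally, one must confirm that the derivative control on $\beta$ holds uniformly across $t=0$: since $\beta$ is continuous with one-sided derivatives bounded by $K_{\beta^*}\max(1/T_{pre},1/T_{post})$ on either side, its global Lipschitz constant is controlled as claimed, and the $C^2$-smoothness from Assumption \ref{assumption: smooth b} guarantees the pieces fit together so that the spline error estimate applies on the whole interval.
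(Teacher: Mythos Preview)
Your proposal is correct and uses the same decomposition as the paper: split $\doublewidehat{\beta}_n-\beta$ into the spline of the pointwise errors plus the spline approximation error of $\beta$ itself, then bound each piece separately. The paper first passes to the rescaled domain $[-1,1]$ via Assumption~\ref{assumption: smooth d} and applies a generic interpolation lemma there (so the $1/T$ appears as the shrinking grid spacing $\delta_T$ in the modulus of continuity of $\beta^\star$), whereas you stay in the original domain with unit spacing and instead extract the $1/T$ from $\|\beta'\|_\infty$ via the chain rule; these are equivalent bookkeeping choices. Your invocation of the uniformly bounded Lebesgue constant for natural cubic splines on a uniform grid is in fact sharper than the paper's somewhat informal ``bounded interval $\Rightarrow$ bounded interpolant'' argument for the same term.
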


Theorem \ref{thm:intrpl_consistency} shows that the supremum norm of the estimation error of $\doublewidehat{\beta}_n(t)$ decomposes into two components. The first captures the maximum estimation error of $\widehat{\beta}_n(t)$ at the observable time points $t\in\{-T_{pre},\dots,T_{post}\}$, which, by Theorem \ref{thm:gauss}, is uniformly $O_P(1/\sqrt{n})$. The second term reflects the interpolation error, which decays uniformly at rate $1/T$. For more complex functional DiD parameters $\beta(t)$---i.e., those with larger smoothness constants $K_{\beta^*}$---a larger $T$ may be necessary to sufficiently reduce interpolation bias; see also the simulation evidence in Section \ref{SIM:intrpl_error}. The following theorem generalizes Theorem \ref{thm:gauss} to the case of the practically relevant spline-based interpolation estimator $\doublewidehat{\beta}_n(t)$ in \eqref{eq:BetaHatHat}.

\begin{theorem}[Uniform Asymptotic Normality of the Interpolation Estimator (\ref{eq:BetaHatHat})]\label{thm:intrpl_dist}
Under Assumptions \ref{assumption: data_str b}, \ref{assumption: moments}, \ref{assumption: smooth a}, \ref{assumption: smooth b}, \ref{assumption: smooth d} and the additional condition $\sqrt{n}/T \to 0$, we have for the interpolation estimator $\doublewidehat{\beta}_n(t)$ in \eqref{eq:BetaHatHat} that
    \begin{equation*}
		\sqrt{n}\left(\doublewidehat{\beta}_n-\beta\right) \stackrel{d}{\to} \mathcal{GP}(0,C_{\beta}),\quad\text{as}\quad n\to\infty,
	\end{equation*}
	in $C([-T_{pre},-1]\cup[0,T_{post}])$, i.e.~uniformly for all $s,t\in[-T_{pre},-1]\cup[0,T_{post}]$.
\end{theorem}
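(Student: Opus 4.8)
The plan is to show that replacing the continuous oracle estimator $\widehat{\beta}_n$ by its spline interpolant $\doublewidehat{\beta}_n$ is asymptotically negligible in the supremum norm after scaling by $\sqrt{n}$, so that the limit law of Theorem \ref{thm:gauss} carries over unchanged. Writing $\mathcal{I}$ for the (linear) natural cubic spline interpolation operator on the grid $\{-T_{pre},\dots,T_{post}\}$, we have $\doublewidehat{\beta}_n=\mathcal{I}[\widehat{\beta}_n]$, and I would start from the converging-together decomposition
\begin{equation*}
\sqrt{n}\big(\doublewidehat{\beta}_n-\beta\big)=\underbrace{\sqrt{n}\big(\widehat{\beta}_n-\beta\big)}_{=:G_n}+\underbrace{\sqrt{n}\big(\doublewidehat{\beta}_n-\widehat{\beta}_n\big)}_{=:\Delta_n}.
\end{equation*}
By Theorem \ref{thm:gauss} (whose proof I would read in the rescaled coordinates of Assumption \ref{assumption: smooth d}, so that it continues to apply in the present joint-asymptotic regime) we have $G_n\stackrel{d}{\to}\mathcal{GP}(0,C_{\beta})$ in $C[-T_{pre},T_{post}]$. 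Hence, by Slutsky's lemma (the converging-together lemma) in the metric space $C[-T_{pre},T_{post}]$, it suffices to prove that $\sup_t|\Delta_n(t)|\stackrel{P}{\to}0$.

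Using linearity of $\mathcal{I}$ together with property (1) of the interpolant, I would split $\Delta_n$ into a deterministic interpolation-bias term and a stochastic interpolation-error term,
\begin{equation*}
\Delta_n=\sqrt{n}\big(\mathcal{I}[\beta]-\beta\big)+\big(\mathcal{I}[G_n]-G_n\big).
\end{equation*}
The first term is controlled exactly as in Theorem \ref{thm:intrpl_consistency}: the spline-approximation bound there gives $\sup_t|\mathcal{I}[\beta](t)-\beta(t)|\le c_2 K_{\beta^*}/T$, so that $\sqrt{n}\,\sup_t|\mathcal{I}[\beta](t)-\beta(t)|\le c_2 K_{\beta^*}\,\sqrt{n}/T\to 0$ under the maintained condition $\sqrt{n}/T\to 0$. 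The substance of the proof is therefore the second term, the interpolation error of the random process $G_n$.

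For that term I would pass to the rescaled time $t^\star(t)$ of Assumption \ref{assumption: smooth d}, on which the grid becomes quasi-uniform with spacing $h^\star\asymp 1/T\to 0$ (using $T_{pre,n}/T_{post,n}\to c$). Two ingredients then close the argument. First, a deterministic spline fact: the $L^\infty$-operator norm of cubic spline interpolation (the Lebesgue constant) is bounded by a finite constant uniformly over refining quasi-uniform grids \citep[cf.][]{Atkinson_1968,de_Boor_2001}---this is the same constant $c_1$ that appears in Theorem \ref{thm:intrpl_consistency}---so the Lebesgue inequality yields $\|\mathcal{I}[G_n]-G_n\|_\infty\le(1+c_1)\,E_T(G_n)$, where $E_T(G_n)$ is the best cubic-spline approximation error and is itself controlled by a Jackson-type estimate $E_T(G_n)\le C\,\omega(G_n,h^\star)$ in terms of the modulus of continuity of the rescaled $G_n$. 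Second, the weak convergence of $G_n$ from Theorem \ref{thm:gauss}, read in rescaled coordinates, implies tightness of $\{G_n\}$ in $C[-1,1]$ and hence asymptotic equicontinuity: for every $\eta>0$, $\lim_{\delta\downarrow 0}\limsup_n P\big(\omega(G_n,\delta)>\eta\big)=0$. Since $h^\star\to 0$, these combine to give $\omega(G_n,h^\star)\stackrel{P}{\to}0$, whence $\|\mathcal{I}[G_n]-G_n\|_\infty\stackrel{P}{\to}0$.

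I expect this stochastic interpolation-error term to be the main obstacle, for a specific reason: the naive bounds---linear in $\sup_t|\widehat{\beta}_n'(t)-\beta'(t)|$, or in $\sup_t|\widehat{\beta}_n''(t)-\beta''(t)|$---are \emph{not} sharp enough. In the original time scale the grid spacing is fixed at one and never shrinks, and the scale factors picked up by differentiating under $t^\star$ exactly cancel whatever is gained, so such bounds remain $O_P(1)$ rather than $o_P(1)$ after multiplication by $\sqrt{n}$. The argument only closes once one (i) works on the rescaled domain $[-1,1]$, where the grid genuinely refines, and (ii) replaces derivative bounds by the Lebesgue-inequality-plus-asymptotic-equicontinuity route, which uses only the continuity and tightness already supplied by Theorem \ref{thm:gauss} and not any (unavailable, since Assumption \ref{assumption: moments b} controls only $Y'$) uniform control of second derivatives of $\widehat{\beta}_n$. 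A secondary technical point is that the affine rescaling is only piecewise affine, breaking at the interior knot $t=0$ where the pre- and post-treatment scales differ; since $t=0$ is itself a node and $T_{pre}/T_{post}\to c$, the reparametrized interpolant is a spline for a slightly modified knot condition that remains uniformly bounded and dense, so the Lebesgue-inequality and Jackson bounds continue to apply without change.
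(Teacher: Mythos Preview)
Your proposal is correct and shares the paper's outer decomposition
\[
\sqrt{n}\bigl(\doublewidehat{\beta}_n-\beta\bigr)=\sqrt{n}\bigl(\widehat{\beta}_n-\beta\bigr)+\sqrt{n}\bigl(\doublewidehat{\beta}_n-\widehat{\beta}_n\bigr),
\]
with Theorem \ref{thm:gauss} handling the first piece. The difference lies in how the interpolation error $\sqrt{n}(\doublewidehat{\beta}_n-\widehat{\beta}_n)$ is controlled. The paper's argument is a one-line application of Theorem \ref{thm:intrpl_consistency} with $\widehat{\beta}_n$ substituted for $\beta$: since $\doublewidehat{\beta}_n$ agrees with $\widehat{\beta}_n$ at the nodes, only the second term of that bound survives, which the paper records as $c_2K_{\beta^*}/T$, and $\sqrt{n}/T\to 0$ finishes. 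You instead split linearly into a deterministic part $\sqrt{n}(\mathcal{I}[\beta]-\beta)$, which is legitimately bounded by $c_2K_{\beta^*}\sqrt{n}/T$, and a stochastic part $(\mathcal{I}-I)[G_n]$, which you handle via Lebesgue's inequality, a Jackson estimate, and asymptotic equicontinuity drawn from the tightness of $G_n$ in $C$. What this buys you is robustness: your stochastic step needs only the weak convergence of $G_n$, not any derivative control on the rescaled oracle estimator. In the paper's route, the constant that strictly arises after substituting $\widehat{\beta}_n$ for $\beta$ in Theorem \ref{thm:intrpl_consistency} is $K_{\widehat{\beta}_n^*}=\sup_{t^\star}|\widehat{\beta}_n^{*\prime}(t^\star)|$, a random quantity; the paper writes the deterministic $K_{\beta^*}$, which either presumes $K_{\widehat{\beta}_n^*}=O_P(1)$ (plausible under Assumption \ref{assumption: moments b} and the rescaling, but not argued) or is simply shorthand. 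Your approach sidesteps exactly this gap---the very derivative-bound obstacle you anticipated---at the cost of a longer argument.
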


Theorem \ref{thm:intrpl_dist} shows that the spline-based interpolation estimator $\doublewidehat{\beta}_n$ has the same uniform asymptotic normal distribution as the functional oracle DiD estimator $\widehat{\beta}_n$ in \eqref{eq:BetaHat}. The additional condition $\sqrt{n}/T \to 0$ is typically not restrictive as $\sqrt{n}$ goes to infinity at a relatively slow rate. 

Similarly to the interpolation estimator $\doublewidehat{\beta}_n(t)$ defined in \eqref{eq:BetaHatHat}, we propose interpolating the pointwise covariance estimates $\widehat{C}_{\beta, n}(s,t)$ at $s,t\in\{-T_{pre}, \dots, T_{post}\}$ using tensor-product natural cubic spline interpolation \citep[cf.][Ch.~17]{de_Boor_2001},
\begin{equation}\label{eq:CovHatHat}
    \doublewidehatCL{C}_{\beta,n}= \{ \doublewidehatCL{C}_{\beta, n}(s,t): s,t \in[-T_{pre},-1]\cup[0,T_{post}] \} \quad\text{with}\quad \doublewidehatCL{C}_{\beta,n}(s,t)=\sum_{l=1}^{N_s}\sum_{k=1}^{N_t} c_{lk}B_{l}(s)B_{k}(t),
\end{equation}
where $B_l(s)$ and $B_k(t)$ denote the $l$-th and $k$-th spline basis function in dimensions $s$ and $t$, and the coefficients $c_{lk}$ are determined algebraically by the pointwise covariance estimates $\widehat{C}_{\beta, n}(s,t)$ at $s,t\in\{-T_{pre},\dots, T_{post}\}$. Such interpolations can be easily done using the \texttt{cov\_spline()} function in our \textsf{R}-package \texttt{fdid}. In parallel to Theorem \ref{thm:intrpl_consistency}, Theorem \ref{thm:cov_intrpl_consistency} establishes the uniform consistency of $\doublewidehatCL{C}_{\beta,n}(s,t)$ over the time surface $s,t \in [-T_{pre},-1]\cup[0,T_{post}]$.

\begin{theorem}[Uniform Consistency of Interpolation Estimator (\ref{eq:CovHatHat})]\label{thm:cov_intrpl_consistency}Under Assumptions \ref{assumption: data_str b}, \ref{assumption: smooth c} and \ref{assumption: smooth e}, we have that 
\begin{equation*}
\sup_{s,t \in [-T_{pre},-1]\cup[0,T_{post}]} \left|\doublewidehatCL{C}_{\beta,n}(s,t)-C_{\beta}(s,t)\right| \le c_3\max_{s,t\in\{-T_{pre},\dots,T_{post}\}} \left|\widehat{C}_{\beta,n}(s,t) - C_{\beta}(s,t)\right| +  \frac{c_4 K_{C_{\beta}^\star}}{T},
\end{equation*}
where $0<c_3,c_4,K_{C_{\beta}^\star}<\infty$ are finite constants.
\end{theorem}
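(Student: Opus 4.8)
The plan is to mirror the one-dimensional argument behind Theorem~\ref{thm:intrpl_consistency}, exploiting the tensor-product structure of the interpolant in \eqref{eq:CovHatHat}. Let $\mathcal{I}_s$ and $\mathcal{I}_t$ denote the one-dimensional natural cubic spline interpolation operators acting in the $s$- and $t$-directions on the grid $\{-T_{pre},\dots,T_{post}\}$, and let $\mathcal{I}=\mathcal{I}_s\mathcal{I}_t=\mathcal{I}_t\mathcal{I}_s$ be the associated tensor-product operator (the two orders commute on a product grid), so that $\doublewidehatCL{C}_{\beta,n}=\mathcal{I}[\widehat{C}_{\beta,n}]$ by construction. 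Since $\mathcal{I}$ is linear and reproduces its nodal values, I would split
\begin{equation*}
\doublewidehatCL{C}_{\beta,n}-C_{\beta} = \underbrace{\mathcal{I}\big[\widehat{C}_{\beta,n}-C_{\beta}\big]}_{\text{(I): propagated estimation error}} + \underbrace{\big(\mathcal{I}[C_{\beta}]-C_{\beta}\big)}_{\text{(II): deterministic interpolation bias}},
\end{equation*}
and bound the two terms separately in the supremum norm over $[-T_{pre},T_{post}]^2$.

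For term (I), the key fact is that the tensor-product natural cubic spline operator is a bounded projection in the supremum norm, with Lebesgue constant factorizing as $\Lambda=\Lambda_s\Lambda_t$, where $\Lambda_s,\Lambda_t<\infty$ are the one-dimensional Lebesgue constants of natural cubic spline interpolation. Because $\mathcal{I}[g]$ depends on $g$ only through its nodal values, this yields
\begin{equation*}
\sup_{s,t\in[-T_{pre},T_{post}]}\big|\mathcal{I}[\widehat{C}_{\beta,n}-C_{\beta}](s,t)\big| \le \Lambda_s\Lambda_t \max_{s,t\in\{-T_{pre},\dots,T_{post}\}}\big|\widehat{C}_{\beta,n}(s,t)-C_{\beta}(s,t)\big|,
\end{equation*}
so that $c_3=\Lambda_s\Lambda_t$ serves as the first constant.

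For term (II), I would use the standard error-splitting identity $C_{\beta}-\mathcal{I}_s\mathcal{I}_t C_{\beta} = (C_{\beta}-\mathcal{I}_s C_{\beta}) + \mathcal{I}_s(C_{\beta}-\mathcal{I}_t C_{\beta})$, whence
\begin{equation*}
\big\|C_{\beta}-\mathcal{I}[C_{\beta}]\big\|_\infty \le \big\|C_{\beta}-\mathcal{I}_s C_{\beta}\big\|_\infty + \Lambda_s\,\big\|C_{\beta}-\mathcal{I}_t C_{\beta}\big\|_\infty.
\end{equation*}
Each one-dimensional term is then controlled exactly as in Theorem~\ref{thm:intrpl_consistency}: rescaling the fixed-spacing integer grid to the reference domain $[-1,1]^2$ via Assumption~\ref{assumption: smooth e} (so that $C_{\beta}(s,t)=C_{\beta}^\star(s^\star,t^\star)$ with node spacing of order $1/T$) and applying the classical first-order spline interpolation bound to the $C^2$ function $C_{\beta}^\star$ (Assumption~\ref{assumption: smooth c}) gives a bound of order $K_{C_{\beta}^\star}/T$, with $K_{C_{\beta}^\star}$ a uniform bound on the first-order partial derivatives of $C_{\beta}^\star$. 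Combining the two one-dimensional estimates produces the second term $c_4 K_{C_{\beta}^\star}/T$.

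The main obstacle is establishing that the Lebesgue constants $\Lambda_s,\Lambda_t$ remain bounded uniformly as $T\to\infty$, i.e.~independently of the growing number of interpolation nodes; this is the classical $L^\infty$-stability of natural cubic spline interpolation on (quasi-)uniform meshes \citep[cf.][]{de_Boor_2001}, and the rescaling in Assumption~\ref{assumption: smooth e} is precisely what converts the unit-spacing integer grid---whose cardinality grows with $T$---into a mesh of width $O(1/T)$ on the fixed domain $[-1,1]$, so that both the stability bound and the interpolation-error bound apply uniformly. Once these one-dimensional ingredients are in place, the tensor-product bookkeeping is routine, and the stated inequality follows with finite constants $c_3,c_4$.
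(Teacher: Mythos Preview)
Your proposal is correct and follows the same two-term decomposition as the paper: a propagated estimation-error term plus a deterministic interpolation-bias term, with the rescaling of Assumption~\ref{assumption: smooth e} converting both into bounds on the fixed domain $[-1,1]^2$. The difference is in execution. For the bias term the paper (via an auxiliary lemma in Online Appendix~\ref{app:cov_intrpl_error}) works directly with a bivariate modulus of continuity evaluated at the argsup point $(\mathcal{S}^\star,\mathcal{T}^\star)$ and splits it additively into two one-dimensional moduli, whereas you use the standard tensor-product identity $C_\beta-\mathcal{I}_s\mathcal{I}_tC_\beta=(C_\beta-\mathcal{I}_sC_\beta)+\mathcal{I}_s(C_\beta-\mathcal{I}_tC_\beta)$ together with explicit Lebesgue constants. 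Your route is cleaner and makes the reliance on uniform $L^\infty$-stability of natural cubic splines on quasi-uniform meshes explicit (you correctly flag this as the crux and cite \citet{de_Boor_2001}); the paper's hands-on argument is more elementary but implicitly needs the same stability to keep $c_3,c_4$ independent of $T$. Substantively the two arguments coincide.
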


Theorems \ref{thm:intrpl_dist} and \ref{thm:cov_intrpl_consistency} justify the use of the interpolation estimators $\doublewidehat{\beta}_n$ and $\doublewidehatCL{C}_{\beta,n}$ in place of $\widehat{\beta}_n$ and $\widehat{C}_{\beta,n}$ in \eqref{eq:supSCB}--\eqref{eq:infSCB_minus}, leading to the interpolation-based versions of SCBs in \eqref{eq:supSCB_practical}--\eqref{eq:infSCB_minus_practical},\\[-7ex]

\spacingset{1.2}
\begin{align}
    \doublewidehatCL{\operatorname{SCB}}{}^{\sup}_{1-\alpha}(t) & = \left[\doublewidehat{\beta}_n(t) \pm u^{\sup}_{1-\alpha/2}\sqrt{\doublewidehatCL{C}_{\beta,n}(t, t) /n}\,\right]\quad \quad \; \text{for all}\quad t\in[0,T_{post}]\label{eq:supSCB_practical}\\
    \doublewidehatCL{\operatorname{SCB}}{}^{\inf,+}_{1-\alpha}(t) & = \left(\left.-\infty,\; \doublewidehat{\beta}_n(t) + u^{\,\inf}_{1-\alpha}\sqrt{\doublewidehatCL{C}_{\beta,n}(t, t) /n}\,\right]\right.\, \text{for all}\quad t\in[-T_{pre},-1]\label{eq:infSCB_plus_practical}\\
    \doublewidehatCL{\operatorname{SCB}}{}^{\inf,-}_{1-\alpha}(t) & = \left[\left.\doublewidehat{\beta}_n(t) - u^{\,\inf}_{1-\alpha}\sqrt{\doublewidehatCL{C}_{\beta,n}(t, t) /n},\; \infty\right)\right.\quad\, \text{for all}\quad t\in[-T_{pre},-1]\label{eq:infSCB_minus_practical}. 
\end{align}
\spacingset{1.5}

\noindent Theorems \ref{thm:intrpl_dist} and \ref{thm:cov_intrpl_consistency}, together with Continuous Mapping Theorem \citep[Theorem 12 in][]{pollard_1984} imply that also the interpolation-based SCBs in \eqref{eq:supSCB_practical}--\eqref{eq:infSCB_minus_practical} have asymptotically correct uniform non-coverage probabilities (Corollary \ref{cor:SCBs_Interpolation}), rendering them suitable for relevance and equivalence testing; see Sections \ref{sec:honest_testing} and \ref{sec:validation}.

\begin{corollary}[Non-Coverage Probabilities of Interpolation SCBs]\label{cor:SCBs_Interpolation}
    Under Assumptions \ref{assumption: data_str b}, \ref{assumption: moments}, \ref{assumption: smoothness} and the additional condition $\sqrt{n}/T \to 0$, we have
\begin{align*}
\lim_{n\to\infty}\quad & P\left(\beta(t)\not\in \doublewidehatCL{\operatorname{SCB}}{}^{\sup}_{1-\alpha}(t)\;\text{for at least one}\; t \in [0,T_{post}] \right) = \alpha,\\ 
\lim_{n\to\infty}\quad & P\left(\beta(t)\not\in\doublewidehatCL{\operatorname{SCB}}{}^{\inf,\square}_{1-\alpha}(t)\;\text{for all}\; t \in [-T_{pre},-1] \right) = \alpha\quad\text{for each}\quad\square\in\{+,-\}
\end{align*}
\end{corollary}

%%%%%%%%%%%%%%%%%%%%%%%%%%
\subsection{Extension: Control Variables}\label{sec:covariates}
%%%%%%%%%%%%%%%%%%%%%%%%%%
In practice, additional covariates are often used to enhance the efficiency of estimation and inference and to make the identifying assumptions more credible. In DiD models \citep[cf.][]{Lechner_2011, dette_schumann_2020,callaway_santanna_2021}, it is common to include time-invariant, pre-treatment covariates
$W_i = (W_{i1}, W_{i2}, \dots, W_{ik})^{\top} \in \mathbb{R}^k$,
which are not affected by $D_i$; i.e.~which satisfy the exogeneity condition
%\begin{equation*}
$W_i(1)=W_i(0)=W_i$, 
%\end{equation*}
where $W_i(1)$ and $W_i(0)$ denote the potential outcomes of the covariates for treated and untreated units, respectively. Using such covariates $W_i$, we can replace the identification Assumptions I-III by the following weaker, conditional versions:
\begin{description}[itemsep=2pt, parsep=0pt, topsep=5pt]
\item[Assumption $\!\text{I}^*\!\!$ (Conditional \hspace{-0.5em} No \hspace{-0.5em} Anticipation):] $\!\!\!\mathbb{E}[Y_i(-1,1)\!-\!Y_i(-1,0) | D_i\!\!=\!\!1,\!W_i]\!\!=\!\!\Delta_{TA}(W_i\!)\!\!=\!\!0$  
\item[Assumption $\text{II}^*$ (Conditional Parallel Trends):]\ \\ 
\hspace*{-1cm}$\mathbb{E}[Y_{i}(t,0)\!-\!Y_{i}(-1,0) |  D_i\!=\!1,W_i]\!-\! \mathbb{E}[Y_{i}(t,0)\!-\!Y_{i}(-1,0) | D_i\!\!=\!\!0,W_i]\!=\!\Delta_{DT}(W_i)\!=\!0,\;\forall\; t\!\in\![-T_{pre},T_{post}]$ 
\item[Assumption $\text{III}^*$ (Conditional Overlap):]  
        $\exists ~ \epsilon>0$ such that $\epsilon<P(D_i=1\mid W_i)< 1-\epsilon$ 
\end{description}
Model \eqref{eq:fct_data_reg_simple} can then be extended as
\begin{equation}\label{eq:fct_data_CoV_reg}
Y_i(t) = \beta(t) D_i + W_i^{\top} \xi(t) + u_i(t)
\quad \text{with} \quad
u_i(t) = \lambda_i + \phi(t) + \varepsilon_i(t),
\end{equation}
for $t\in[-T_{pre}, T_{post}]$ and $i=1,\dots, n$, where the coefficient $\xi(t)=(\xi_1(t), \xi_2(t), \dots, \xi_k(t))^{\top}$ and $\xi_1(t), \xi_2(t), \dots, \xi_k(t) \in C^2([-T_{pre}, -1] \cup [0, T_{post}])$ are time-varying, and all other model components remain as in Model \eqref{eq:fct_data_reg_simple}. The DiD parameter $\beta$ in Model \eqref{eq:fct_data_CoV_reg} is equivalent to the DiD parameter $\beta$ in the following function-on-scalar model:
\begin{equation}\label{eq:fct_on_scalar_CoV_model}
	\ddot{Y}_i(t)=\underbrace{\left( \beta(t)-\frac{1}{T_{post}+T_{pre}}\int_{-T_{pre}}^{T_{post}} \beta(s)\dd s \right)}_{\gamma(t)}\dot{D}_i +\dot{W}_i^{\top}\breve{\xi}(t) + \ddot{\varepsilon}_i(t)
\end{equation}
with $\beta(-1) = 0$, where 
$\dot{W}_i=(\dot{W}_{i1}, \dot{W}_{i2}, \dots, \dot{W}_{ik})^{\top}$ 
and 
$\breve{\xi}(t)=(\breve\xi_1(t), \breve\xi_2(t), \dots, \breve\xi_k(t))^{\top}$ 
with 
$\dot{W}_{ij} = W_{ij}-n^{-1}\sum_{i=1}^nW_{ij}$, 
and 
$\breve\xi_j(t) = \xi_j(t)-(T_{pre}+T_{post})^{-1}\int_{-T_{pre}}^{T_{post}}\xi_j(s)\dd s$
for 
$j=1,\dots,k$, 
and where 
$\ddot{Y}_i(t)$ and $\ddot{\varepsilon}_i(t)$ are defined as in Model \eqref{eq:fct_data_reg_final}. 

Using the Frisch-Waugh-Lovell (FWL) Theorem (\citealp{frisch_waugh_1933, lovell_1963}) and $\beta(t)=\gamma(t)-\gamma(-1)$, allows estimating $\beta(t)$ in \eqref{eq:fct_data_CoV_reg} by
\begin{align}\label{eq:BetaHat_Covariate}
    \widehat{\beta}_n^{FWL}(t)
    = \left( \frac{1}{n}\sum_{i=1}^n \widetilde{D}_i^2 \right)^{-1} \left( \frac{1}{n}\sum_{i=1}^n \widetilde{D}_i (\widetilde{Y}_i(t)-\widetilde{Y}_i(-1)) \right), 
\end{align}
where $\widetilde{D}_i=\sum_{j=1}^n L_{ij} \dot{D}_j$ and $\widetilde{Y}_i(t)=\sum_{j=1}^n L_{ij} \dot{Y}_j(t)$ and $L_{ij}$ is the $(i,j)$-th entry of ($n \times n$) matrix $L=I-\dot{W}(\dot{W}^{\top}\dot{W})^{-1}\dot{W}^{\top}$ with $I$ as an ($n \times n$) identity matrix and $\dot{W}=(\dot{W}_1, \dot{W}_2, \dots, \dot{W}_n)^{\top}$ as an ($n \times k$) matrix. The asymptotic covariance function of $\widehat{\beta}_n^{FWL}(t)$ can be estimated using
\begin{equation}\label{eq:cov_est_Covariate}
\widehat{C}_{\beta,n}^{FWL}(s,t)= \left(\frac{1}{n}\sum_{i=1}^n \widetilde{D}_i^2 \left(\Delta_0 \widetilde{Y}_i(s)\right)\left( \Delta_0 \widetilde{Y}_i(t)\right) \right)\left(\frac{1}{n}\sum_{i=1}^n \widetilde{D}_i^2\right)^{-2},
\end{equation}
where $\Delta_0 \widetilde{Y}_i(t) = (\widetilde{Y}_i(t)-\widetilde{Y}_i(-1))-\widehat{\beta}^{FWL}_n(t)\widetilde{D}_i$ with $\widehat\beta_n^{FWL}(t)$ defined in \eqref{eq:BetaHat_Covariate}. In the practical panel data scenario, where estimation can only be done at the discretely observable time points $s,t\in\{-T_{pre},\dots,T_{post}\}$, we need to use interpolation estimators $\doublewidehat{\beta}_{n}{}^{\!\!\!\!FWL}$ and $\doublewidehatCL{C}_{\beta,n}{}^{\!\!\!\!\!\!\!\!\!FWL}$ defined exactly in parallel to $\doublewidehat{\beta}_n$ in \eqref{eq:BetaHatHat} and $\doublewidehatCL{C}_{\beta,n}$ in \eqref{eq:CovHatHat}. Our theoretical results in Section \ref{sec:theory} apply to the estimators in this section. For more details, see Online Appendix \ref{app:cov_est_Covariate}.

%%%%%%%%%%%%%%%%%%%%%%%%%%%%%%%%%%%%%%%%%%%%%%%%%%%%%%%%
\subsection{Extension: Staggered Treatment Adoption}\label{sec:staggered_design}
%%%%%%%%%%%%%%%%%%%%%%%%%%%%%%%%%%%%%%%%%%%%%%%%%%%%%%%%

Recent studies show that the TWFE approach faces a ``negative weighting'' problem when treatment effects are heterogeneous across groups with staggered treatment timing \citep{de_Chaisemartin_Haultfoeuille_2020, callaway_santanna_2021, Goodman-Bacon_2021, sun_abraham_2021, de_Chaisemartin_Haultfoeuille_2023, Borusyak_2024}.

Following the convention in Section \ref{sec:att_fda}, we take the last pre-treatment period as the reference period. Each treatment group $g$ is indexed by its reference period, $g\in\mathcal{G} \subset \{-T_{pre}, \dots,-1, 0, \dots, T_{post}\}$, where $\mathcal{G}$ contains at least group $g=-1$, which serves as the time-normalizing group. For group $g=-1$, the first observed post-treatment period is $0$, and for group $g=-3$, if it exists, it is period $-2$. Each unit $i$ belongs to exactly one group, denoted by $G_i \in \bar{\mathcal{G}}=\mathcal{G} \cup \{\infty\}$, where $G_i=\infty$ indicates the never-treated units (control group). The total number of groups $|\bar{\mathcal{G}}|=|\mathcal{G}|+1$ is finite and fixed such that asymptotically $|\mathcal{G}|+1\ll T=|\{-T_{pre}, \dots, -1, 0, \dots, T_{post}\}|$; our asymptotic assumptions in Section \ref{sec:theory} apply then to each group. The group-specific ATT parameter is defined as
$$
\theta_{ATT,g}(e)=\mathbb{E}[Y_i(g+e+1,g)-Y_i(g+e+1, \infty) \mid G_i=g] \quad \text{for all} \quad e\in[-T_{pre}-g-1, T_{post}-g-1],
$$
where $Y_i(g+e+1,g)$ denotes the potential outcome of unit $i$ in group $G_i=g\in\mathcal{G}$ under treatment at event time $e$ measured relative to the first post-treatment period $g+1$, and $Y_i(g+e+1, \infty)$ denotes the potential outcome of unit $i$ under non-treatment. Analogous to the non-staggered DiD setting in Section \ref{sec:att_fda}, we index the pre-treatment period by $e \in [-T_{pre}-g-1, -1]$ and the post-treatment period by $e \in [0, T_{post}-g-1]$. We assume that $-T_{pre}$ and $T_{post}$ are chosen such that there is a non-empty common time window $[-T_{pre,A}, T_{post,A}]:=\bigcap_{g\in\mathcal{G}}[-T_{pre}-g-1, T_{post}-g-1]\cap[-T_{pre}^\infty, T_{post}^\infty]$ observed for all groups, where $[-T_{pre}^\infty, T_{post}^\infty]$ denotes the observed time span of the never-treated units. In staggered design, we replace the identification Assumptions I-III by the following groupwise versions: %identification

\begin{description}[itemsep=2pt, parsep=0pt, topsep=5pt]
\item[Assumption $\!\text{I}^{**}\!\!\!$ (Groupwise \hspace{-0.5em} No \hspace{-0.5em} Anticipation):]\hspace{-0.8em} $\mathbb{E}[Y_i(g,g)\!\!-\!\!Y_i(g, \infty) | G_i\!\!=\!\!g]\!\!=\!\!\Delta_{TA}^g\!\!=\!\!0,\forall g \in \mathcal{G}$
\item[Assumption $\!\text{II}^{**}\!$ (Groupwise Parallel Trends):]\ \\
\hspace*{-1cm}$\mathbb{E}[Y_{i}(g+e+1,\infty)-Y_{i}(g,\infty) \mid G_i=g] - \mathbb{E}[Y_{i}(g+e+1,\infty)-Y_{i}(g,\infty) \mid G_i=\infty]=\Delta_{DT}^g(e)=0, \\ 
\forall\, g\in \mathcal{G} \text{ and } e\in[-T_{pre}-g-1,T_{post}-g-1]$ 
\item[Assumption $\!\text{III}^{**}\!$ (Groupwise Overlap):]\ 
$\exists\,\epsilon>0$ s.t.~$\epsilon<P(G_i=g)< 1-\epsilon,\,\forall g \in \bar{\mathcal{G}}$
\end{description}

Under Assumptions $\text{I}^{**}$-$\text{III}^{**}$, $\theta_{ATT,g}(e)$ can be identified as $\theta_{ATT,g}(e)=\beta_g(e)$, where
\begin{equation}\label{eq:staggered_DiD_ggs_1}
    \beta_g(e)
    =\mathbb{E}[Y_i(g+e+1)-Y_i(g) \mid G_i=g]- \mathbb{E}[Y_{i}(g+e+1)-Y_{i}(g) \mid G_i=\infty],
\end{equation} 
which follows arguments analogous to those in Section \ref{sec:att_fda}; see also Online Appendix \ref{app:did_att_relation}. Often, the group-specific ATT parameters $\theta_{ATT,g}(e)$ are of main interest. One can, however, also consider the aggregated staggered-design ATT parameter such as, for instance, 
\begin{equation}\label{eq:staggered_DiD_parameter}
    \theta_{ATT,A}(e)=\beta_{A}(e)=\sum_{g\in\mathcal{G}} w_g\beta_g(e) \quad \text{for all} \quad e\in[-T_{pre,A}, T_{post,A}],
\end{equation}
where $w_g$ are non-negative group-specific weights satisfying $\sum_{g\in\mathcal{G}} w_g=1$. Restricting the focus to the common time span $[-T_{pre,A}, T_{post,A}]$ avoids hardly interpretable artifacts in $\beta_{A}(e)$ due to groups $g$ being unobserved for some periods $e$. Under Assumptions $\!\text{I}^{**}\!$ (Groupwise No Anticipation) and Assumptions $\!\text{II}^{**}\!$ (Groupwise Parallel Trends), $\beta_{A}(e)=0$ for all $e\leq -1$.

To estimate $\beta_{A}(e)$ in \eqref{eq:staggered_DiD_parameter}, we first estimate the group-specific DiD parameters $\beta_g(e)$ in \eqref{eq:staggered_DiD_ggs_1} using the following group-specific version of the functional DiD estimator in \eqref{eq:BetaHat}, 
\begin{align*}
\widehat{\beta}_{g}(e)
 = \left(\frac{1}{\bar{n}_{g}}\sum_{i\in\bar{\mathcal{I}}_g} \dot{D}_{gi}^2\right)^{-1}\left(\frac{1}{\bar{n}_g}\sum_{i\in\bar{\mathcal{I}}_g} \dot{D}_{gi} (\dot{Y}_i(g+e+1) - \dot{Y}_i(g))\right),
\end{align*}
where 
$\bar{\mathcal{I}}_{g}=\{i=1,\dots,n: G_i=g \text{ or } G_i=\infty\}$ denotes the index set of units in treatment group $g$ and the control group, $\bar{n}_g=|\bar{\mathcal{I}}_g|$ is the corresponding size of the index set, and
$\dot{D}_{gi}=D_{gi}-\bar{n}_g^{-1}\sum_{i\in\bar{\mathcal{I}}_g} D_{gi}$ with 
$D_{gi}=\mathbbm{1}_{\{G_i=g\}}$ indicating the treatment status for units $i\in\bar{\mathcal{I}}_g$. Then, $\beta_{A}(e)$ can be estimated as the weighted average 
$$
\widehat{\beta}_{A}(e)=\sum_{g\in\mathcal{G}}\widehat{\omega}_{g}\widehat{\beta}_{g}(e),\quad e\in[-T_{pre,A}, T_{post,A}],
$$
where $\widehat{\omega}_{g}=n_g/n_{\mathcal{G}}$ with $n_g=\sum_{i=1}^n\mathbbm{1}_{\{G_i=g\}}$ and $n_{\mathcal{G}}=\sum_{i=1}^n\mathbbm{1}_{\{G_i\in\mathcal{G}\}}$ denotes the proportion of treated units in group $g$ among all treated units; see \citet{callaway_santanna_2021} for alternative weighting schemes. The asymptotic covariance function of $\widehat{\beta}_{A}(e)$ is 
\begin{equation}\label{eq:cov_stagger}
    C_{\beta_{A}}(e_1,e_2)=\lim_{n\to\infty}n\operatorname{Cov}(\widehat{\beta}_{A}(e_1),\widehat{\beta}_{A}(e_2))=\sum_{g\in\mathcal{G}}\omega_g^2 \,C_{\beta_g}(e_1,e_2),\quad e_1,e_2\in[-T_{pre,A}, T_{post,A}],
\end{equation}
where $C_{\beta_g}(e_1,e_2)=\lim_{n\to\infty}n\operatorname{Cov}(\widehat{\beta}_{g}(e_1),\widehat{\beta}_{g}(e_2))$ is the asymptotic covariance function of the group-specific DiD estimator $\widehat{\beta}_{g}(e)$; see Online Appendix \ref{app:cov_staggered_did}. $C_{\beta_g}(e_1,e_2)$ can be estimated using a group-specific version of the covariance estimator in \eqref{eq:cov_est},
\begin{equation*}
\widehat{C}_{\beta_g}(e_1,e_2)=\left(\frac{1}{\bar{n}_g}\sum_{i\in\bar{\mathcal{I}}_g} \dot{D}_{gi}^2 \left( \Delta_0\dot{Y}_i(g+e_1+1)\right)\left( \Delta_0\dot{Y}_i(g+e_2+1)\right) \right)\left(\frac{1}{\bar{n}_g}\sum_{i\in\bar{\mathcal{I}}_g} \dot{D}_{gi}^2\right)^{-2},
\end{equation*}
with $\Delta_0\dot{Y}_i(g+e+1) =(\dot{Y}_i(g+e+1)-\dot{Y}_i(g))-\widehat{\beta}_{g}(e)\dot{D}_{gi}$ and $\widehat{\beta}_{g}(e)$ as defined above. The aggregated covariance function $C_{\beta_A}(e_1,e_2)$ in \eqref{eq:cov_stagger} is then estimated as
\begin{equation*}
    \widehat{C}_{\beta_{A}}(e_1,e_2)=\sum_{g\in\mathcal{G}} \widehat{\omega}_{g}^2\, \widehat{C}_{\beta_g}(e_1,e_2),\quad e_1,e_2\in[-T_{pre,A}, T_{post,A}].
\end{equation*}

In practice, when estimation is performed at discrete time points $e_1,e_2\in\{-T_{pre,A},\dots,T_{post,A}\}$, interpolation estimators, 
$\doublewidehat{\beta}_{A}(e)=\sum_{g\in\mathcal{G}}\widehat{\omega}_{g}\doublewidehat{\beta}_{g}(e)$ and 
$\doublewidehatCL{C}_{\beta_{A}}(e_1,e_2)=\sum_{g\in\mathcal{G}}\widehat{\omega}_{g}^2\,\doublewidehatCL{C}_{\beta_{g}}(e_1,e_2)$
are used, where $\doublewidehat{\beta}_{g}$ is defined analogously to $\doublewidehat{\beta}_n$ in \eqref{eq:BetaHatHat} and $\doublewidehatCL{C}_{\beta_{g}}$ parallels $\doublewidehatCL{C}_{\beta, n}$ in \eqref{eq:CovHatHat}.

\begin{remark}
\normalfont In cases where all units are treated, one can view units in the last treated group as controls and remove periods at which the controls are treated \citep{sun_abraham_2021}.
\end{remark}

%%%%%%%%%%%%%%%%%%%%%%%%%%%%%%%%%%%%%%%%%%%%
\section{Honest Causal Inference}\label{sec:testing}
%%%%%%%%%%%%%%%%%%%%%%%%%%%%%%%%%%%%%%%%%%%%

\subsection{Honest Hypothesis Testing in the Post-Treatment Period}\label{sec:honest_testing}

In causal inference, we are usually interested in testing the no-effect null hypothesis that $\theta_{ATT}(t)=0$ in the post-treatment period $t \in[0,T_{post}]$. Under Assumptions I (No Anticipation) and Assumption II (Parallel Trends), we have by \eqref{eq:DiD_parameter} that $\beta(t)=\theta_{ATT}(t)$. Thus, under Assumptions I and II, we can formulate the simultaneous no-effect hypothesis for $\theta_{ATT}(t)$ equivalently in terms of $\beta(t)$ as\\[-7ex]

\spacingset{1}
\begin{align*}
	\begin{array}{rlrl}
        H_0:& \theta_{ATT}(t) = 0, \quad \forall \; t\in[0,T_{post}] &\Leftrightarrow\quad
		H_0:& \beta(t) = 0, \quad \forall \; t\in[0,T_{post}]\\
		H_1:& \exists \; t \in[0,T_{post}] \;\text{ s.t. } \; \theta_{ATT}(t) \neq 0
		&\Leftrightarrow\quad 
		H_1:& \exists \; t \in[0,T_{post}] \; \text{ s.t. } \;\beta(t) \neq 0	.
	\end{array}
\end{align*}%
\spacingset{1.50}%

To conduct a simultaneous hypothesis test for this testing problem, we use $\doublewidehatCL{\operatorname{SCB}}{}^{\sup}_{1-\alpha}(t)$ in \eqref{eq:supSCB_practical}, which, by Corollary \ref{cor:SCBs_Interpolation}, allows controlling the size of the test uniformly over $t\in[0,T_{post}]$: 
\begin{align*}
\lim_{n\to\infty}\quad & P_{H_0}\left(0\not\in \doublewidehatCL{\operatorname{SCB}}{}^{\sup}_{1-\alpha}(t)\;\text{for at least one}\; t \in [0,T_{post}] \right) = \alpha,
\end{align*}
where $P_{H_0}$ denotes the probability under the null hypothesis $H_0\colon \beta(t) = 0$. That is, if there exists at least one time point $t\in[0,T_{post}]$ for which $0\not\in\doublewidehatCL{\operatorname{SCB}}{}^{\sup}_{1-\alpha}(t)$, we can reject the null hypothesis $H_0$ at the significance level $\alpha$. 

If, however, Assumption I (No Anticipation) and/or Assumption II (Parallel Trends) are violated, we have, by \eqref{eq:DiD_parameter}, that
$$
\beta(t)= \theta_{ATT}(t) + \Delta_{DT}(t)- \Delta_{TA} :=  \theta_{ATT}(t) + \Delta(t).
$$ 
Thus, the honest version of the no-effect null hypothesis and its alternative hypothesis are\\[-7ex]

\spacingset{1}
\begin{align*}
	\begin{array}{rlrl}
		H_0:& \theta_{ATT}(t) = 0, \quad \forall \; t\in[0,T_{post}] &\Leftrightarrow\quad
		H_0:& \beta(t) = \Delta(t), \quad \forall \; t\in[0,T_{post}]\\
		H_1:& \exists \; t \in[0,T_{post}] \;\text{ s.t. } \; \theta_{ATT}(t) \neq 0 &\Leftrightarrow\quad 
		H_1:& \exists \; t \in[0,T_{post}] \;\text{ s.t. } \; \beta(t) \neq \Delta(t)	.
	\end{array}
\end{align*}%
\spacingset{1.50}%
\noindent It is generally infeasible to do a statistical hypothesis test in this case, since $\Delta(t)$ is typically unknown. One may, however, use specific domain knowledge to derive a range of values 
$[\Delta_\ell(t), \Delta_u(t)] = [\Delta(t) - \Delta^c_{\ell}(t), \Delta(t) + \Delta^c_{u}(t)]$ for which one can credibly argue that 
$$
\Delta(t) \in [\Delta_\ell(t), \Delta_u(t)]\quad\text{for all}\quad t\in[-T_{pre},T_{post}].
$$ 
Given such a credible honest reference band $[\Delta_\ell(t), \Delta_u(t)]$ of possible bias values, we can rewrite the hypothesis testing problem as a \textit{relevance} testing problem; namely,\\[-7ex]

\spacingset{1}
\begin{align*}
	\begin{array}{rrclcl}
		H_0:& \theta_{ATT}(t) &\in& [\;\;\;0\;\;-\Delta_\ell^c(t),\;\;\;0\;\;+\Delta_u^c(t)]&\;\text{for all}\;&t\in[0,T_{post}]\\
		\Leftrightarrow\quad
		H_0:& \;\;\;\;\;\beta(t) &\in& [\Delta(t) - \Delta^c_{\ell}(t), \Delta(t) + \Delta^c_{u}(t)]&\;\text{for all}\;& t\in[0,T_{post}]\\[2ex]
		H_1:& \theta_{ATT}(t) &\not\in& [-\Delta_\ell^c(t), \Delta_u^c(t)]&\;\text{for at least one}\;&t \in[0,T_{post}]\\
		\Leftrightarrow\quad 
		H_1:& \beta(t) &\not\in& [\Delta_\ell(t), \Delta_u(t)]&\;\text{for at least one}\;&t \in[0,T_{post}].		
	\end{array}
\end{align*}
\spacingset{1.50}

To conduct a simultaneous hypothesis test for this relevance testing problem, we can again use the $\doublewidehatCL{\operatorname{SCB}}{}^{\sup}_{1-\alpha}(t)$, which, by Corollary \ref{cor:SCBs_relevance}, allows controlling the size of the test uniformly over $t\in[0,T_{post}]$ also in the case of compound null hypotheses as used in relevance tests.

\begin{corollary}[Size Control in Relevance Testing]\label{cor:SCBs_relevance}
    Under Assumptions \ref{assumption: data_str b}, \ref{assumption: moments}, \ref{assumption: smoothness} and the additional condition $\sqrt{n}/T \to 0$, we have
\begin{align*}
\lim_{n\to\infty}\quad & \sup_{H_0} P\left([\Delta_\ell(t), \Delta_u(t)]\cap \doublewidehatCL{\operatorname{SCB}}{}^{\sup}_{1-\alpha}(t)=\emptyset\;\text{for at least one}\; t \in [0,T_{post}] \right) \leq \alpha
\end{align*}
\end{corollary}

That is, if there exists at least one time point $t\in[0, T_{post}]$ for which the reference band $ [\Delta_\ell(t), \Delta_u(t)]$ and the simultaneous confidence band $\doublewidehatCL{\operatorname{SCB}}{}^{\sup}_{1-\alpha}(t)$ do \emph{not intersect}, we can reject the null hypothesis $H_0$ at the significance level $\alpha$. This relevance testing approach facilitates honest causal inference in the post-treatment period using event study plots showing simultaneous confidence bands along with reference bands, which take into account violations of the no-anticipation and/or parallel trends assumptions (e.g., Figure \ref{fig:ESP_honest}).

%%%%%%%%%%%%%%%%%%%%%%%%%%%%%%%%%%%%%%%%%%%%%%%%%%%%%%%%%
\subsection{Choosing Honest Reference Bands}\label{sec:ChoosingHRBs} 
%%%%%%%%%%%%%%%%%%%%%%%%%%%%%%%%%%%%%%%%%%%%%%%%%%%%%%%%%%

The choice of the reference band $[\Delta_\ell(t), \Delta_u(t)]$ typically requires domain knowledge about the specific application at hand. The chosen reference band can be constructed over the entire time span $t \in [-T_{pre}, T_{post}]$, with the post-treatment period $[0,T_{post}]$ used for testing causal effects (Section \ref{sec:honest_testing}) and the pre-anticipation period $[-T_{pre},t_A]$ used for validation (Section \ref{sec:validation}). In the following, we propose honest reference bands for common scenarios where either Assumption I (No Anticipation) or Assumption II (Parallel Trends) is violated. These reference bands can be used as a starting point for applied researchers. Typically, one would want to consider multiple reference bands reflecting different scenarios and degrees of violations of the identification assumptions to assess the robustness of causal conclusions.

%%%%%%%%%%%%%%%%%%%%%%%%%%%%%%%%%%%%%%%%%%%%%%%%%%%%%%%%%
\subsubsection*{Honest Reference Band for Violated No Anticipation Assumption}
%%%%%%%%%%%%%%%%%%%%%%%%%%%%%%%%%%%%%%%%%%%%%%%%%%%%%%%%%

If Assumption II (Parallel Trends) holds, but Assumption I (No Anticipation) is violated, we have by \eqref{eq:AnticipationBias} that 
$\Delta(t) = - \Delta_{TA} = \beta(t) \neq 0$ for all $t \in [-T_{pre},t_A]$, where $t_A$ denotes the time point after which treated units begin responding to the treatment (Section \ref{sec:att_fda}). This bias $-\Delta_{TA}$ can be estimated using the average of $\widehat{\beta}_{n}(t)$ in \eqref{eq:BetaHat} over all observable pre-anticipation periods $t \in \{-T_{pre}, \dots, t_A\}$. To account for the estimation uncertainty, we propose the reference band
\begin{align}\label{eq:ta_bounds}
	\begin{split}
        [\widehat\Delta_\ell(t), \widehat\Delta_u(t)] = \left[ \frac{1}{T_A}\sum_{s=-T_{pre}}^{t_A}\widehat{\beta}_{n}(s) - S_\ell , \quad \frac{1}{T_A}\sum_{s=-T_{pre}}^{t_A}\widehat{\beta}_{n}(s) + S_u \right]
	\end{split}
\end{align}
with constant width for all $t\in[-T_{pre},T_{post}]$, where $T_A=|\{-T_{pre}, \dots, t_A\}|$ denotes the total number of observable pre-anticipation periods, and $S_\ell, S_u > 0$ are control parameters.

%%%%%%%%%%%%%%%%%%%%%%%%%%%%%%%%%%%%%%%%%%%%%%%%%%%%%%%%%
\subsubsection*{Honest Reference Band for Violated Parallel Trends Assumption}
%%%%%%%%%%%%%%%%%%%%%%%%%%%%%%%%%%%%%%%%%%%%%%%%%%%%%%%%%

If Assumption I (No Anticipation) holds, but Assumption II (Parallel Trends) is violated, we have by \eqref{eq:DTBias} that 
$
\Delta(t) = \Delta_{DT}(t),  
$
where $\Delta_{DT}(t) \neq 0$ for one or more time points in the post-treatment time period, $t\in[0,T_{post}]$. To define a useful reference band $[\Delta_\ell(t),\Delta_u(t)]$, we suggest extrapolating information from the pre-treatment period $t\in[-T_{pre},-1]$. Under the considered scenario, $\theta_{ATT}(t)=0$ for $t\in[-T_{pre},-1]$, such that $\beta(t)=\Delta_{DT}(t)$ for all $t\in[-T_{pre},-1]$. Thus, we can estimate $\Delta_{DT}(t)$ by $\widehat{\beta}_n(t)$ at all observable pre-treatment periods $t\in\{-T_{pre},\dots,-1\}$. Assuming $\Delta_{DT}(t)$ to be linear, we can further estimate the slope of $\Delta_{DT}(t)$ relative to the reference period $t=-1$ by taking average of the slopes of $\widehat{\beta}_n(t)$ over all $t\in\{-T_{pre},\dots,-2\}$, and extrapolate this information into the post-treatment time period, $t\in[0,T_{post}]$. To do so, we propose the following honest reference band:
\begin{align}\label{eq:rmtrb_bounds}
        [\widehat\Delta_\ell(t), \widehat\Delta_u(t)] = \left[ \left(\frac{1}{T_{pre}-1} \sum_{s=-T_{pre}}^{-2} \frac{\widehat{\beta}_n(s)}{s+1} \right) (t+1)-M_\ell , \quad \left(\frac{1}{T_{pre}-1} \sum_{s=-T_{pre}}^{-2} \frac{\widehat{\beta}_n(s)}{s+1} \right) (t+1)+M_u  \right]
\end{align}
with constant width for all $t\in[-T_{pre}, T_{post}]$, where $M_\ell, M_u > 0$ are control parameters. The honest reference band in \eqref{eq:rmtrb_bounds} essentially forms a band centered around the average slope of the pre-trend periods. 

\begin{remark}
The reference bands in \eqref{eq:ta_bounds} and \eqref{eq:rmtrb_bounds} can be merged to incorporate both anticipation effects and differential trends. Specifically, we are supposed to first construct the band addressing anticipatory effects by \eqref{eq:ta_bounds}, and then further build the bands addressing differential trends by \eqref{eq:rmtrb_bounds} with $t_A$ as the new reference period, respectively on the upper and lower band constructed from \eqref{eq:ta_bounds} at the previous step. The final band consists of the infimum and supremum of the total area defined altogether.
\end{remark}

In Corollary \ref{cor:SCBs_relevance}, the reference band $[\Delta_\ell(t), \Delta_u(t)]$ usually needs to be deterministic to achieve the size control in relevance testing. Estimating a reference band using the same data as used for testing the causal effects can lead to anti-conservative, invalid inference, also known as the double-dipping issue \citep[cf.][]{wang_2024}. However, by Corollary \ref{cor:SCBs_relevance_practical}, our reference bands in \eqref{eq:ta_bounds} and \eqref{eq:rmtrb_bounds} do not suffer from this issue even if we use the same data to estimate the reference band and to test causal effects in practice.

\begin{corollary}[Size Control in Relevance Testing with \eqref{eq:ta_bounds} and \eqref{eq:rmtrb_bounds}]\label{cor:SCBs_relevance_practical}
    Under Assumptions \ref{assumption: data_str b}, \ref{assumption: moments}, \ref{assumption: smoothness} and the additional condition $\sqrt{n}/T \to 0$, we have
\begin{align*}
\lim_{n\to\infty}\quad & \sup_{H_0} P\left([\widehat\Delta_\ell(t), \widehat\Delta_u(t)]\cap \doublewidehatCL{\operatorname{SCB}}{}^{\sup}_{1-\alpha}(t)=\emptyset\;\text{for at least one}\; t \in [0,T_{post}] \right) \leq \alpha,
\end{align*}
where $[\widehat\Delta_\ell(t), \widehat\Delta_u(t)]$ is derived from \eqref{eq:ta_bounds} or \eqref{eq:rmtrb_bounds}.
\end{corollary}
The key to Corollary \ref{cor:SCBs_relevance_practical} is that Assumption \ref{assumption: data_str b} implies $T_{pre} \to \infty$, which makes the average of $\widehat{\beta}_n(t)$ in \eqref{eq:ta_bounds} over the pre-anticipation periods and the average of the slopes of $\widehat{\beta}_n(t)$ in \eqref{eq:rmtrb_bounds} over the pre-treatment periods converge at a faster rate of $1/\sqrt{nT_{pre}}$ than $\doublewidehat{\beta}_n(t)$ at $1/\sqrt{n}$ as shown in Theorem \ref{thm:intrpl_dist}. This validates the use of \eqref{eq:ta_bounds} and \eqref{eq:rmtrb_bounds} to construct the reference band in the practical scenario.

\vspace{-0.5em}

%%%%%%%%%%%%%%%%%%%%%%%%%%%%%%%%%%%%%%%%%%%%%%%%%%%%%%%%%%%%%%%%%%%%%%%%
\subsection{Validating Reference Bands in the Pre-Anticipation Period}\label{sec:validation}
%%%%%%%%%%%%%%%%%%%%%%%%%%%%%%%%%%%%%%%%%%%%%%%%%%%%%%%%%%%%%%%%%%%%%%%%

In event study plots, the pre-anticipation period $[-T_{pre},t_A]$, $t_A \leq -1$, can be used to assess the validity of a chosen honest reference band $[\Delta_\ell(t),\Delta_u(t)]$ via \textit{equivalence} testing; see \cite{Wellek_2002} for an introduction to equivalence testing in case of pointwise testing. If the honest reference band reflects only violations of the parallel trends assumption, then $t_A=-1$. 

Throughout the pre-anticipation period $t\in[-T_{pre},t_A]$, we have $\theta_{ATT}(t)=0$ and hence $\beta(t)=\Delta(t)$. The goal is to verify whether $\beta(t)$, and thus $\Delta(t)$, lies entirely \emph{within} the reference band $[\Delta_\ell(t),\Delta_u(t)]$ for all $t\in[-T_{pre},t_A]$. If so, the reference band can be regarded as validated---at least for the pre-anticipation period. This equivalence problem can be expressed using two one-sided hypotheses; namely,\\[-7ex]

\spacingset{1}  
\begin{align*}
    \begin{array}{rllll}
    H_0^{-}\colon \exists \; t \in[-T_{pre}, t_A] &\text{ s.t. } \; \beta(t) \leq \Delta_\ell(t) & \quad \text{vs.} \quad &  H_1^{-}: \Delta_\ell(t) < \beta(t), & \forall \; t\in[-T_{pre}, t_A]\\
    H_0^{+}\colon \exists \; t \in[-T_{pre}, t_A] &\text{ s.t. } \; \Delta_u(t)\leq \beta(t) & \quad \text{vs.} \quad &  H_1^{+}: \beta(t) < \Delta_u(t), & \forall \; t\in[-T_{pre}, t_A].
    \end{array}
\end{align*}
\spacingset{1.50} 

\noindent The global hypotheses of the combined test represent the hypotheses of the equivalence test\\[-7ex]

\spacingset{1}  
\begin{align*}
    \begin{array}{rlcl}
H_0\colon& \beta(t) \not\in [\Delta_\ell(t),\Delta_u(t)]&\;\text{for at least one}\;&t \in[-T_{pre}, t_A]\\
H_1\colon& \beta(t) \in [\Delta_\ell(t),\Delta_u(t)]&\;\text{for all} \; &t \in[-T_{pre}, t_A].
    \end{array}
\end{align*}
\spacingset{1.50} 

\noindent The global null hypothesis $H_0=H_0^{-}\,\cup\, H_0^{+}$ holds if at least one of the two one-sided null hypotheses $H_0^{-}$ or $H_0^{+}$ is true. An honest reference band $[\Delta_\ell(t),\Delta_u(t)]$ is said to be validated at significance level $\alpha$ if $H_0$ can be rejected, so that the alternative $H_1=H_1^{-}\cap H_1^{+}$ is accepted at level $\alpha$---that is, if \emph{both} one-sided null hypotheses $H_0^{-}$ and $H_0^{+}$ are rejected. 

This Two-One-Sided-Testing (TOST) procedure consitutes an Intersection-Union Test (IUT) \citep[see][Theorem 8.3.23]{Casella_Berger_2024}. As a result, conducting each one-sided test at level $\alpha$ guarantees that the probability of incorrectly validating an invalid reference band is controlled at level $\alpha$.

Note that the hypotheses in equivalence testing revert the roles of the null and alternative hypotheses compared to relevance testing (Section \ref{sec:honest_testing}), which requires switching from supremum-based to infimum-based testing. The one-sided null hypothesis $H_0^{-}$ can be rejected at level $\alpha$ if the infimum-based $(1-\alpha)\times100\%$ simultaneous confidence band $\doublewidehatCL{\operatorname{SCB}}{}^{\inf,-}_{1-\alpha}(t)$ in \eqref{eq:infSCB_minus_practical} is strictly larger than the lower bound of the reference band $\Delta_\ell(t)$ for all $t\in[-T_{pre}, t_A]$,
\begin{equation}
\Delta_\ell(t)<\doublewidehat{\beta}_n(t) - \doublewidehatSL{u}^{\,\inf}_{1-\alpha}\sqrt{\doublewidehatCL{C}_{\beta,n}(t, t) /n}\quad\text{for all}\quad t\in[-T_{pre}, t_A].\label{eq:rejectionRule_minus}
\end{equation}  
Similarly, the one-sided null hypothesis $H_0^{+}$ can be rejected at level $\alpha$ if the one-sided infimum-based $(1-\alpha)\times100\%$ simultaneous confidence band $\doublewidehatCL{\operatorname{SCB}}{}^{\operatorname{inf,+}}_{1-\alpha}(t)$ in \eqref{eq:infSCB_plus_practical} is strictly smaller than the upper bound of the reference band $\Delta_u(t)$ for all $t\in[-T_{pre}, t_A]$,
\begin{equation}
\doublewidehat{\beta}_n(t) + \doublewidehatSL{u}^{\,\inf}_{1-\alpha}\sqrt{\doublewidehatCL{C}_{\beta,n}(t, t) /n} < \Delta_u(t)\quad\text{for all}\quad t\in[-T_{pre}, t_A].\label{eq:rejectionRule_plus}
\end{equation}  
Thus, we can reject the global null hypothesis $H_0$ at level $\alpha$ if the two-sided infimum-based $(1-2\alpha)\times100\%$ simultaneous confidence band lies entirely \emph{within} the reference band $\left[\Delta_\ell(t), \Delta_u(t)\right]$,
\begin{equation}
\doublewidehatCL{\operatorname{SCB}}{}^{\inf}_{1-2\alpha}(t)=\left[\doublewidehat{\beta}_n(t) \pm \doublewidehatSL{u}^{\,\inf}_{1-\alpha}\sqrt{\doublewidehatCL{C}_{\beta,n}(t, t) /n}\right]\subsetneqq \left[\Delta_\ell(t), \Delta_u(t)\right]\;\text{for all}\; t\in[-T_{pre}, t_A],\label{eq:rejectionRuleEquivalenceTesting}
\end{equation}
where the latter rejection rule follows from combining the two rejection rules in \eqref{eq:rejectionRule_minus} and \eqref{eq:rejectionRule_plus}. This generalizes the idea of confidence interval inclusion \citep[cf.][Ch.~3]{Wellek_2002} to the case of simultaneous confidence bands. By Corollary \ref{cor:SCBs_Interpolation} and the IUT principle, the overall size is  controlled at level $\alpha$.

\begin{corollary}[Size Control in Equivalence Testing]\label{cor:SCBs_equivalence}
    Under Assumptions \ref{assumption: data_str b}, \ref{assumption: moments}, \ref{assumption: smoothness} and the additional condition $\sqrt{n}/T \to 0$, we have
$$
\lim_{n\to\infty} \sup_{H_0} P\Big(\doublewidehatCL{\operatorname{SCB}}{}^{\inf}_{1-2\alpha}(t)\subsetneqq \left[\Delta_\ell(t), \Delta_u(t)\right]\quad\text{for all}\quad t\in[-T_{pre}, t_A]\Big) \le \alpha.
$$
\end{corollary}

\begin{remark}
This approach can only validate the reference band in pre-anticipation period; it does not necessarily guarantee the reference band to be valid also in post-treatment period. 
\end{remark} 

%%%%%%%%%%%%%%%%%%%%%%%%%%%%%%%%%%%%%%%%%%%%
\section{Simulations}\label{sec:sim}
%%%%%%%%%%%%%%%%%%%%%%%%%%%%%%%%%%%%%%%%%%%%

We generate data at $T$ equidistant time points $t_k=-T_{pre}+\frac{k-1}{T-1}(T_{pre}+T_{post})$ for $k=1,\dots,T$, with fixed $-T_{pre}=-10$ and $T_{post}=10$, based on the function-on-scalar model\\[-2.5ex]
$$
Y_i(t)=\beta(t)D_i+\lambda_i+\phi(t)+\varepsilon_{i}(t),\quad t\in[-T_{pre},T_{post}],\\[-.5ex]
$$
where $\beta(-1)=0$. Individual fixed effects are drawn from a uniform distribution, $\{\lambda_i\}_{i=1}^n \stackrel{\text{i.i.d.}}{\sim} U[-3,3]$. The temporal fixed effects are given by the polynomial 
$\phi(t)=4^{-5}5^{-4}[2000(t+10)^3-150(t+10)^4+3(t+10)^5]$, 
$t\in[-T_{pre},T_{post}]$. 
Treatment assignment is binary, with $D_i \mid \lambda_i \sim \text{Bernoulli}(\pi_i)$ and
$\pi_i=\pi(\lambda_i)=\frac{\exp(3\lambda_i)}{1+\exp(3\lambda_i)},$ 
$i=1,\dots,n$.  
$D_i$ is independent of $\varepsilon_i$ but depends on $\lambda_i$, and units with larger $\lambda_i$ are more likely to be selected into treatment. 

For the functional DiD parameter, we have 
$\beta(t)=\theta_{ATT}(t)-\Delta_{TA}+\Delta_{DT}(t)$
as shown in \eqref{eq:DiD_parameter}. In Section \ref{SIM:intrpl_error}, we assume Assumption I (No Anticipation) and Assumption II (Parallel Trends) to hold, i.e. $\Delta_{TA}=\Delta_{DT}(t)=0$, when evaluating the accuracy of the interpolation estimator $\doublewidehat{\beta}_n$. In Section \ref{ssec:sim_pt}, we relax Assumption II by defining a differential trend as $\Delta_{DT}(t)=0.1(t+1)$, when analyzing our honest hypothesis testing in the post-treatment period. Whereas, in Section \ref{ssec:validation}, we relax Assumption I by presuming a treatment anticipation as $\Delta_{TA} \neq 0$, when assessing the reference band in the pre-anticipation period. We study two scenarios for $\theta_{ATT}(t)$; see Figure \ref{fig:ATTs}:\\[-7ex]

\spacingset{1}
\begin{align*}
\text{(Simple") } \textbf{ATT1}\colon\; &\theta_{ATT}(t)=a\cdot\left[ \frac{2(t+0.5)}{3+(t+0.5)} \mathbbm{1}_{\{t\in(-0.5,T_{post}]\}}\right], \\ \text{(Complex") } \textbf{ATT2}\colon\; &\theta_{ATT}(t)=a\cdot\left[\left( \frac{2(t+0.5)}{3+(t+0.5)}+0.3\cos(3(t+0.5))-0.3\right)\mathbbm{1}_{\{t\in(-0.5,T_{post}]\}}\right],\\[-5ex]
\end{align*}%
\spacingset{1.50}% 
\noindent where $\mathbbm{1}_{\{\cdot\}}$ is the indicator function. In both scenarios, the treatment occurs at $t=-0.5 \in (-1,0)$, and $\theta_{ATT}(t)=0$ for all $t \le -0.5$. Considering different values of $a\in\mathbb{R}$ allows considering different effect sizes. 

\begin{figure}[!htbp]
\centering
\includegraphics[width=0.98\linewidth]{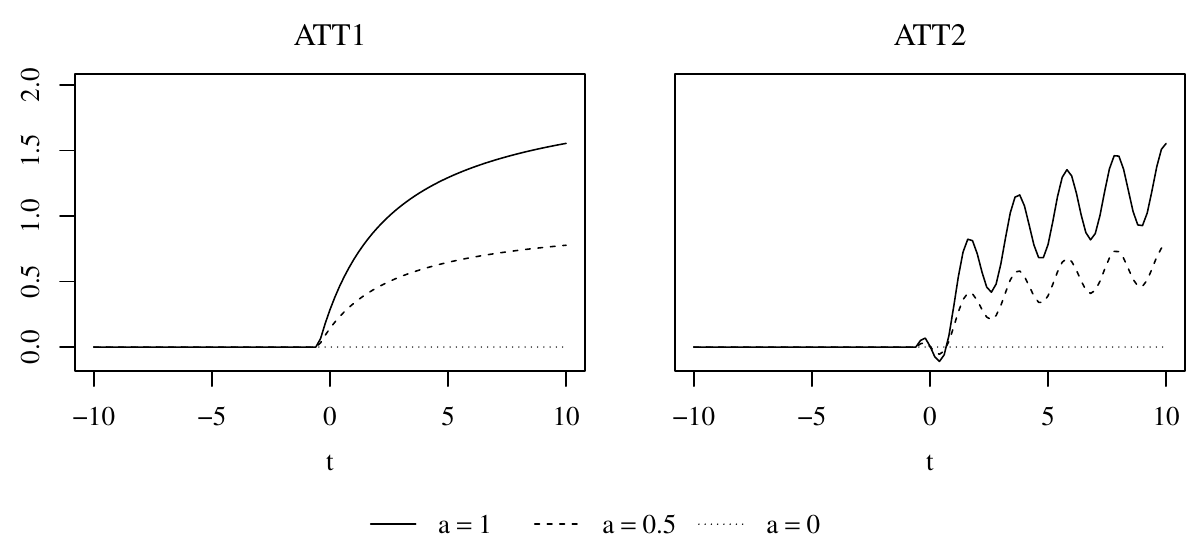}
\caption[]{ATT parameter $\theta_{ATT}(t)$ for the two scenarios. }
\label{fig:ATTs}
\end{figure}

The error term follows a Gaussian process, $\{\varepsilon_i\}_{i=1}^n \stackrel{\text{i.i.d.}}{\sim}\mathcal{GP}(0,C_{\varepsilon})$, with Matérn covariance
$C_{\varepsilon}(s,t)=\sigma^2 (2^{1-\nu}/\Gamma(\nu))(\sqrt{2\nu}|(s-t)/10|)^{\nu}K_{\nu}(\sqrt{2\nu}|(s-t)/10|)$, 
$s,t \in[-T_{pre},T_{post}]$,
where $\sigma^2$ is the variance, $\Gamma$ the Gamma function, $K_{\nu}$ the modified Bessel function of the second kind, and $\nu \ge 0$ controls curve roughness. We consider two cases:\\[-5ex]
\begin{align*}
\text{(Smooth") } \textbf{Cov1}\colon\; \sigma^2=1,\; \nu=3/2, \quad\text{and}\quad \text{(Rough") } \textbf{Cov2}\colon\; \sigma^2=1,\; \nu=2/3.\\[-6ex]
\end{align*}
Cov1 yields smooth, differentiable error processes with high temporal dependence; Cov2 produces rough, non-differentiable paths (violating Assumption \ref{assumption: smooth a}) with low temporal dependence. In simulations, functional curves are observed only at $T\in\{21,41\}$ equidistant time points $t\in\{-T_{pre},\dots,-1,0,\dots,T_{post}\}$, yielding a standard panel data structure with $n\in\{100,200,400,800\}$. Figure \ref{fig:SamplePaths} shows exemplary outcome curves $Y_i=\{Y_i(t):t\in[-T_{pre},T_{post}]\}$ for ATT1 with $a=1$ and $T=21$, along with their actually observed discrete-time panel data points (triangles and dots).

%%%%%%
\begin{figure}[!htbp]
	\centering
	\includegraphics[width=0.98\linewidth]{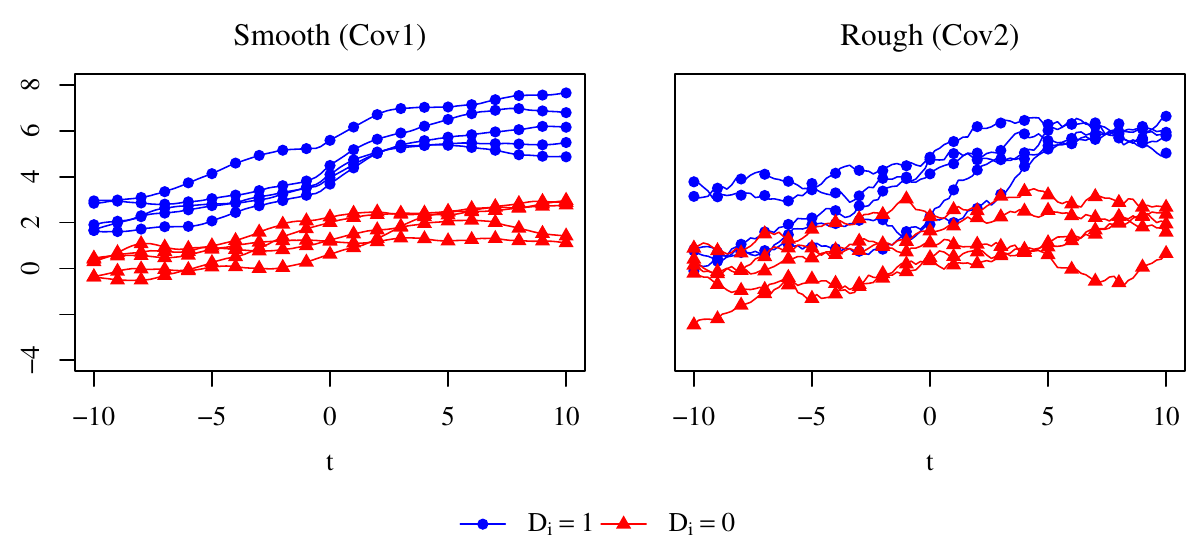}
	\caption[]{Exemplary outcome curves $Y_i$ for ATT1 with $a=1$ and $T=21$, along with their actually observed discrete-time panel data points (triangles and dots).}
	\label{fig:SamplePaths}
\end{figure}
%%%%%%

\subsection{Estimation Accuracy}\label{SIM:intrpl_error}

To evaluate the accuracy of the interpolation estimator $\doublewidehat{\beta}_n=\{\doublewidehat{\beta}_n(t):t\in[-T_{pre},-1] \cup [0,T_{post}]\}$ in \eqref{eq:BetaHatHat}, we use the metric
\begin{equation*}
    \doublewidehat{Q}=\max_{j=1,\dots,101}\left|\doublewidehat{\beta}_n(t_j)-\beta(t_j)\right|
\end{equation*}
based on a grid of $101$ time points over $[-T_{pre},-1]\cup[0,T_{post}]$, chosen equidistantly within the pre- and post-treatment periods separately. This metric captures the uniform distance between $\doublewidehat{\beta}_n$ and the true parameter $\beta$, reflecting both estimation and interpolation errors. For each $(n,T)$-combination, we run $500$ simulations and report the mean of $\doublewidehat{Q}$ with confidence intervals.

%{%
\spacingset{1}
\begin{table}[h!tb]
\centering
\caption{Mean of $\protect\doublewidehat{Q}$ and confidence intervals for ATT1 and ATT2 with $a=1$}  
\label{tbl:ATT1InterpolationError}
\setlength{\tabcolsep}{15pt}
\begin{tabular}{rcccc}
\toprule
& \multicolumn{2}{c}{Smooth (Cov1)} & \multicolumn{2}{c}{Rough (Cov2)} \\ 
\cmidrule(lr){2-3} \cmidrule(lr){4-5}
& $T$=21 & $T$=41 & $T$=21 & $T$=41  \\ 
\cmidrule(lr){2-2} \cmidrule(lr){3-3}\cmidrule(lr){4-4}\cmidrule(lr){5-5}
ATT1\;\;$n$=100 & ${}_{0.24\;\;}0.25_{\;\;0.26}$ 
& ${}_{0.24\;\;}0.25_{\;\;0.26}$ 
& ${}_{0.30\;\;}0.31_{\;\;0.32}$ 
& ${}_{0.30\;\;}0.31_{\;\;0.32}$       \\ 
    $n$=200 & ${}_{0.17\;\;}0.18_{\;\;0.19}$ 
& ${}_{0.17\;\;}0.18_{\;\;0.19}$ 
& ${}_{0.21\;\;}0.22_{\;\;0.22}$ 
& ${}_{0.22\;\;}0.23_{\;\;0.24}$       \\ 
    $n$=400 & ${}_{0.12\;\;}0.13_{\;\;0.13}$ 
& ${}_{0.12\;\;}0.13_{\;\;0.13}$ 
& ${}_{0.15\;\;}0.15_{\;\;0.16}$ 
& ${}_{0.15\;\;}0.16_{\;\;0.16}$      \\ 
    $n$=800 & ${}_{0.08\;\;}0.09_{\;\;0.09}$ 
& ${}_{0.09\;\;}0.09_{\;\;0.09}$ 
& ${}_{0.11\;\;}0.11_{\;\;0.11}$ 
& ${}_{0.11\;\;}0.11_{\;\;0.12}$      \\ 
\midrule
ATT2\;\;$n$=100 & ${}_{0.37\;\;}0.38_{\;\;0.39}$ 
& ${}_{0.25\;\;}0.26_{\;\;0.27}$ 
& ${}_{0.41\;\;}0.42_{\;\;0.43}$ 
& ${}_{0.30\;\;}0.31_{\;\;0.32}$      \\ 
        $n$=200 & ${}_{0.32\;\;}0.32_{\;\;0.33}$ 
& ${}_{0.18\;\;}0.19_{\;\;0.20}$ 
& ${}_{0.34\;\;}0.35_{\;\;0.35}$ 
& ${}_{0.23\;\;}0.23_{\;\;0.24}$         \\ 
        $n$=400 & ${}_{0.28\;\;}0.29_{\;\;0.29}$ 
& ${}_{0.13\;\;}0.14_{\;\;0.14}$ 
& ${}_{0.30\;\;}0.31_{\;\;0.31}$ 
& ${}_{0.16\;\;}0.16_{\;\;0.17}$      \\ 
        $n$=800 & ${}_{0.26\;\;}0.27_{\;\;0.27}$ 
& ${}_{0.10\;\;}0.10_{\;\;0.11}$ 
& ${}_{0.28\;\;}0.28_{\;\;0.28}$ 
& ${}_{0.12\;\;}0.12_{\;\;0.12}$       \\ 
\bottomrule
\end{tabular}
\footnotesize \newline In format ${}_{A\;} B_{\;C}$, where $B$ is the mean of $\doublewidehat{Q}$ over all $500$ simulations and $[A,C]=[B\pm 1.96\cdot \doublewidehatSL\sigma_B/\sqrt{500}]$ denotes the $95\%$ confidence interval with $\doublewidehatSL\sigma_B$ representing the standard deviation of $\doublewidehat{Q}$ over all simulations.
\end{table}
%}
\spacingset{1.50}

In this section, we suppose Assumption I (No Anticipation) and Assumption II (Parallel Trends) to hold, thereby leading to $\beta(t)=\theta_{ATT}(t)$ for all $t\in[-T_{pre},-1] \cup [0,T_{post}]$. The upper panel of Table \ref{tbl:ATT1InterpolationError} shows results for ATT1 with $a=1$. The mean of $\doublewidehat{Q}$ decreases markedly as $n$ grows, while $T$ has little effect since $T=21$ already ensures negligible interpolation error for the simple ATT1 curve. The lower panel reports results for the more complex ATT2 curve, where interpolation errors are larger but vanish as $T$ increases from $T=21$ to $T=41$. These findings align with our uniform consistency result (Theorem \ref{thm:intrpl_consistency}).

%%%%%%%%%%%%%%%%%%%%%%%%%%%%%%%%%%%%%%%%%%%%%%%%%%%%%%%%%%%%%%%%%%%%%%%%%%
\subsection{Honest Hypothesis Testing in the Post-Treatment Period}\label{ssec:sim_pt}
%%%%%%%%%%%%%%%%%%%%%%%%%%%%%%%%%%%%%%%%%%%%%%%%%%%%%%%%%%%%%%%%%%%%%%%%%%

To assess the honest hypothesis testing using our simultaneous confidence band in the post-treatment period, we suppose that Assumption II (Parallel Trends) is violated and define\\[-2.5ex]
$$
\beta(t)=\theta_{ATT}(t)+\Delta_{DT}(t),\quad t\in[-T_{pre},T_{post}],
$$ 
where $\Delta_{DT}(t)=0.1(t+1)$. We are interested in testing the no-effect null hypothesis $\theta_{ATT}(t)=0$ in the post-treatment period $t \in[0,T_{post}]$. Due to the presence of differential trends bias $\Delta_{DT}(t)$, the no-effect null hypothesis can not be directly tested using the classical hypothesis testing (see Online Appendix \ref{SIM:classic_hypothesis_test} for a simulation where classical hypothesis testing is suitable). Instead, by Section \ref{sec:honest_testing}, we can reframe the no-effect null hypothesis as the following relevance hypothesis test:
\begin{equation*}
    H_0: \beta(t) \in [\Delta_{\ell}(t), \Delta_{u}(t)], \; \forall \; t\in[0,T_{post}] \quad \text{vs.} \quad H_1:\exists \; t\in[0,T_{post}] \; \text{s.t.} \; \beta(t) \not\in [\Delta_{\ell}(t), \Delta_{u}(t)],
\end{equation*}
where the reference band $[\Delta_{\ell}(t), \Delta_{u}(t)]$ is estimated using the reference band in \eqref{eq:rmtrb_bounds}, considering control parameter values $M_{\ell}=M_u=M\in\{0.1,0.01\}$. The sample parameters in \eqref{eq:rmtrb_bounds} are determined using separate training data with sample size $n_T=n$ generated in the same way as the testing data, so that the reference band can be kept fixed over all simulations for each $(n,T)$-combination---except for varying the parameter $M$ to investigate the effect of the width of reference band.

Data under $H_0$ are generated by setting $a=0$ for the ATT parameter $\theta_{ATT}(t)$, while values $|a|>0$ generate data under $H_1$ for constructing power curves. For each $(n,T,M)$-combination, we run 500 simulations under the specified data generation process. In each run, we check whether $\doublewidehatCL{\operatorname{SCB}}{}^{\sup}_{1-\alpha}(t)$, with significance level $\alpha=0.05$, intersects with $[\Delta_{\ell}(t), \Delta_{u}(t)]$ at a grid of $101$ equidistant time points over the post-treatment period $[0, T_{post}]$. We reject $H_0$ if they do not intersect for at least one of the $101$ grid points. Otherwise, we retain $H_0$.  The empirical uniform size is the proportion of simulations rejecting $H_0$, when it is true.

We evaluate the performance of the supremum-based simultaneous confidence band 
$\doublewidehatCL{\operatorname{SCB}}{}^{\sup}_{1-\alpha}(t)$ in \eqref{eq:supSCB_practical} using three approaches to estimate the critical value $u^{\sup}_{1-\alpha}$: the Parametric Bootstrap (SCB-PB (Sup); see Online Appendix \ref{ssec:SCB_PB_SUP}), the Multiplier Bootstrap (SCB-MB (Sup); see Online Appendix \ref{ssec:SCB_MB_SUP}), and the Kac-Rice formula (SCB-KR (Sup); see Online Appendix \ref{ssec:SCB_KR}). We compare these three variants with two commonly used benchmarks: the Naive pointwise $t$-band $\doublewidehatCL{\operatorname{CI}}{}^{\!\text{Naive}}_{1-\alpha}(t)$, defined in \eqref{eq:pw_ci_t}, and the Bonferroni-corrected $t$-band
\begin{align*}
\doublewidehatCL{\operatorname{CI}}{}^{\!\text{Bonf}}_{1-\alpha}(t)\!\!=\!\!\left[\doublewidehat{\beta}_n(t)\!\pm\! t_{1-\frac{\alpha/101}{2}, \text{df}}\sqrt{\doublewidehatCL{C}_{\beta,n}(t,t)/n}\;\right],
\end{align*}
where $\text{df}=n-1$ and $\alpha/101$ is the Bonferroni adjustment for testing at $101$ grid points over $[0,T_{post}]$. The Naive pointwise $t$-band is widely used in empirical practice (\citealp{bosch_campos-vazquez_2014, bailey_goodman-bacon_2015, lovenheim_willen_2019}), while the Bonferroni band provides a conservative multiple-testing benchmark.

\begin{figure}[ht!]
	\centering
	\includegraphics[width=0.98\linewidth]{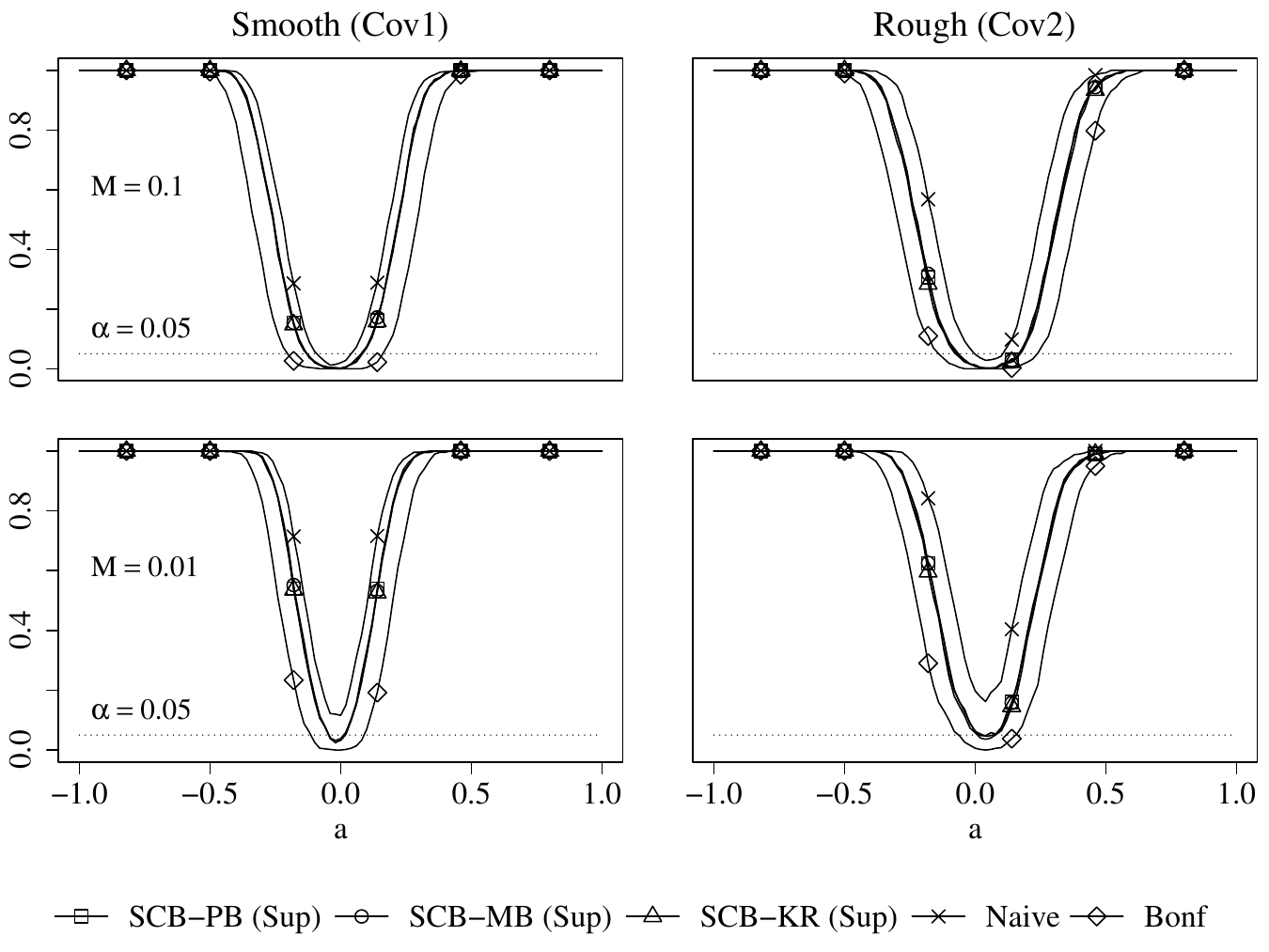}
	\caption[]{Power curves, under violated Assumption II, for ATT1, $n=200$, and $T=21$.}
	\label{fig:N200ATT1PowerCurves_PT}
\end{figure}

Figure \ref{fig:N200ATT1PowerCurves_PT} shows power curves using ATT1 with $n=200$ and $T=21$ for both covariance scenarios and both $M\in\{0.1,0.01\}$. When the reference band is wide ($M=0.1$), all confidence bands are conservative, reflecting the loss of power that arises in honest inference when model assumptions are violated. The three $\doublewidehatCL{\operatorname{SCB}}{}^{\sup}_{1-\alpha}(t)$ bands power-dominate the Bonferroni band, while the Naive band is invalid, which becomes obvious in the narrow reference band scenario ($M=0.01$). When the reference band gets narrower, the Naive band shows its invalidity, failing to control the uniform size under $H_0$ ($a=0$). The Bonferroni band remains overly conservative and uniformly power-dominated by the three $\doublewidehatCL{\operatorname{SCB}}{}^{\sup}_{1-\alpha}(t)$ bands. Results for ATT2 (Online Appendix \ref{SIM:honest_hypothesis_test}) are qualitatively equivalent.

\subsection{Validating Reference Bands in the Pre-Anticipation Period}\label{ssec:validation}

To assess our testing procedure for validating a chosen reference band in the pre-anticipation period, we suppose that Assumption I (No Anticipation) is violated with a treatment anticiaption starting after event time $t_A=-4$. In this case, we consider ATT parameter functions that are non-zero starting at $t=-3.5 \in (-4,-3)$; see Figure \ref{fig:ATTs_Anticipation}: \\[-7ex]

\spacingset{1}
\begin{align*}
\text{(Simple")} \textbf{$\text{ATT1}^*$}\colon\; &\theta_{ATT}(t)\!=\!\!\frac{2(t+3.5)}{3+(t+3.5)} \mathbbm{1}_{\{t\in(-3.5,T_{post}]\}}, \\ 
\text{(Complex")} \textbf{$\text{ATT2}^*$}\colon\;&\theta_{ATT}(t)\!=\!\!\left(\frac{2(t+3.5)}{3+(t+3.5)}+\!0.3\cos(3(t\!+\!3.5))\!-\!0.3\right)\mathbbm{1}_{\{t\in(-3.5,T_{post}]\}}.\\[-5ex]
\end{align*}%
\spacingset{1.50}%
%2(t\!+\!4)^{3/2}/(3\!+\!(t\!+\!4)^{3/2})\!
\begin{figure}[!htbp]
\centering
\includegraphics[trim = 0cm 1.1cm 0cm 0cm, clip,width=0.98\linewidth]{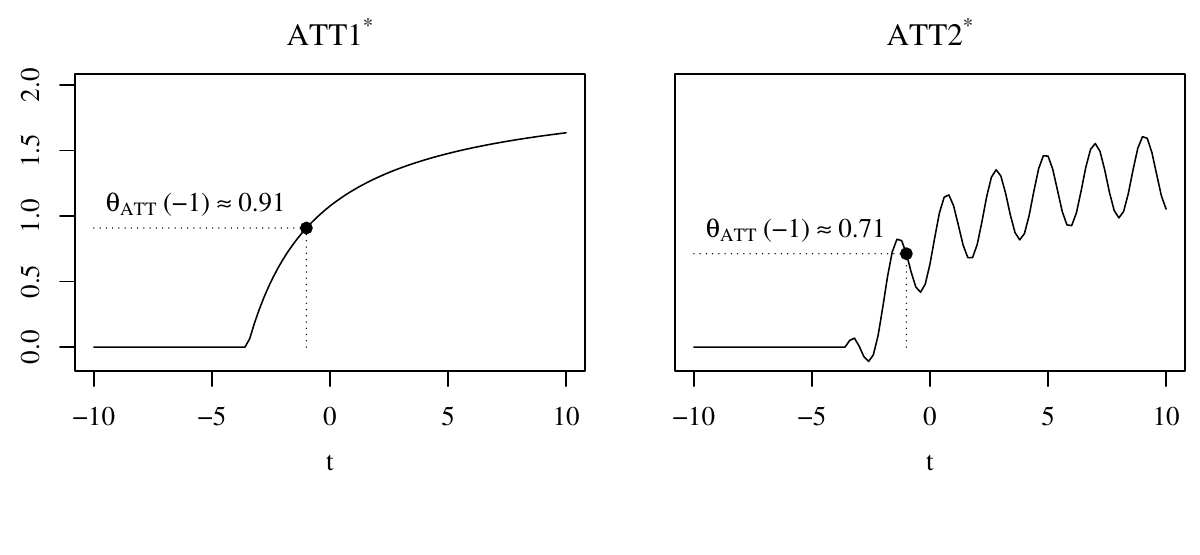}
\caption[]{ATT parameter $\theta_{ATT}(t)$ for the two scenarios under treatment anticipation. }
\label{fig:ATTs_Anticipation}
\end{figure}

By \eqref{eq:DiD_parameter}, we obtain $\beta(t)=\theta_{ATT}(t)-\Delta_{TA}$, $t\in[-T_{pre},T_{post}]$, where $\Delta_{TA}=\theta_{ATT}(-1)$, and $\beta(-1)=0$ by construction. Moreover, following \eqref{eq:AnticipationBias}, we have that $\beta(t)=-\Delta_{TA}$ for any $t\in [-T_{pre}, t_A]$. We use equivalence testing to validate the reference band (Section \ref{sec:validation}) over the pre-anticipation period $[-T_{pre}, t_A]$:\\[-7ex]

\spacingset{1}  
\begin{align*}
    \begin{array}{rlcl}
H_0\colon& \beta(t) \not\in [\Delta_\ell(t),\Delta_u(t)]&\;\text{for at least one}\;&t \in[-T_{pre}, t_A]\\
H_1\colon& \beta(t) \in [\Delta_\ell(t),\Delta_u(t)]&\;\text{for all} \; &t \in[-T_{pre}, t_A],
    \end{array}
\end{align*}
\spacingset{1.50} 

\noindent where the reference band $[\Delta_{\ell}(t),\; \Delta_u(t)]=[-\theta_{ATT}(-1)+1 \pm S]$, and $-\theta_{ATT}(-1)$ is estimated by $-\widehat\theta_{ATT}(-1)$, the average of $\widehat\beta_n(t)$ over all observable pre-anticipation periods $t \in \{-T_{pre},\dots,t_A\}$, as shown in \eqref{eq:ta_bounds}. The sample parameters in \eqref{eq:ta_bounds} are determined using separate training data with sample size $n_T=n$ generated in the same way as the testing data, so that the reference band can be kept fixed over all simulations for each $(n, T)$-combination. The value $-\widehat\theta_{ATT}(-1)+1$ stands for an imperfect estimate of the bias $-\Delta_{TA}=-\theta_{ATT}(-1)\neq 0$, and choosing different values for $S$ allows us to consider the effect of the width of reference band.

Setting $S=1$ leads to data generated under the global null hypothesis, since $\beta(t)=\Delta_{\ell}(t)=-\theta_{ATT}(-1)+1-1$ for all $t\in[-T_{pre}, t_A]$, allowing us to check the empirical size of the equivalence testing procedure. Increasing the control parameter $S>1$ generates data under the alternatives for constructing power curves. For each $(n,T,S)$-combination, we run 500 simulations. Following the rejection rule in \eqref{eq:rejectionRuleEquivalenceTesting}, we reject the global null hypothesis if $\doublewidehatCL{\operatorname{SCB}}{}^{\inf}_{1-2\alpha}(t)$, with $\alpha=0.05$, lies strictly within the chosen reference band $[\Delta_{\ell}(t), \Delta_u(t)]$ for all 101 grid points over the pre-anticipation period $[-T_{pre},t_A]$. Otherwise, we fail to reject the global null hypothesis. For estimating the critical value $u^{\inf}_{1-\alpha}$ in $\doublewidehatCL{\operatorname{SCB}}{}^{\inf}_{1-2\alpha}(t)$, we use the same two bootstrap approaches as in Section \ref{ssec:sim_pt}, but for bootstrapping infimum-based statistics: Parametric Bootstrap (SCB-PB (Inf); see Online Appendix \ref{ssec:SCB_PB_INF}), and Multiplier Bootstrap (SCB-MB (Inf); see Online Appendix \ref{ssec:SCB_MB_INF}).

We compare our two versions of the infimum-based simultaneous confidence band $\doublewidehatCL{\operatorname{SCB}}{}^{\inf}_{1-2\alpha}(t)$ band with two benchmarks: the Naive pointwise $t$-band $\doublewidehatCL{\operatorname{CI}}{}^{\!\text{Naive}}_{1-2\alpha}(t)$, which coincides with the band proposed by \cite{Fogarty_Small_2014} for equivalence testing, and the Bonferroni-corrected $t$-band $\doublewidehatCL{\operatorname{CI}}{}^{\!\text{Bonf}}_{1-2\alpha}(t)$, using a correction of $2\alpha/101$. 

\begin{figure}[!htbp]
	\centering
	\includegraphics[width=0.98\linewidth]{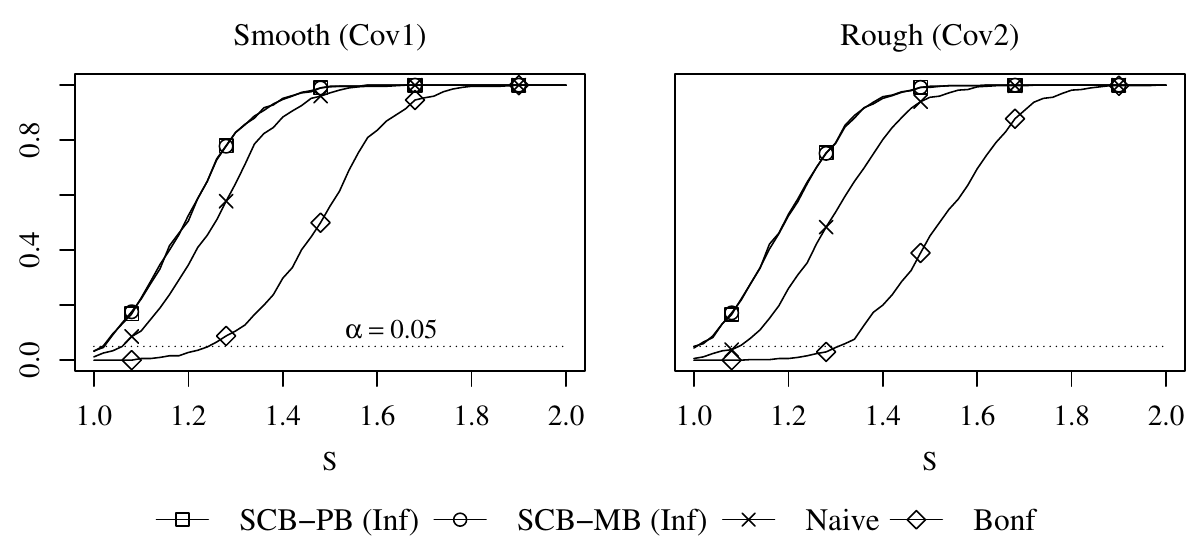}
	\caption[]{Power curves, under violated Assumption I, for $\text{ATT1}^*$, $n=200$, and $T=21$.}
	\label{fig:N200ATT1starPowerCurves_RB}
\end{figure}

Figure \ref{fig:N200ATT1starPowerCurves_RB} shows power curves using $\text{ATT1}^*$ with $n=200$ and $T=21$ for both covariance scenarios. All bands are valid under the global null hypothesis ($S=1$), but the two $\doublewidehatCL{\operatorname{SCB}}{}^{\inf}_{1-2\alpha}(t)$ bands, SCB-PB (Inf) and SCB-MB (Inf), have empirical sizes ($0.032$ and $0.034$ in Cov1; $0.050$ and $0.044$ in Cov2) substantially closer to the nominal level $5\%$ than \cite{Fogarty_Small_2014}'s Naive band ($0.012$ in Cov1; $0.006$ in Cov2) and the Bonferroni band ($0$ in both Cov1 and Cov2). Even when taking the different sizes into account, the two $\doublewidehatCL{\operatorname{SCB}}{}^{\inf}_{1-2\alpha}(t)$ bands clearly power-dominate Naive and Bonferroni bands. Results for $\text{ATT2}^*$ (Online Appendix \ref{SIM:reference_band_validating}) are qualitatively equivalent.

Drawing on the simulation results in Sections \ref{ssec:sim_pt} and \ref{ssec:validation}, there is an important trade-off in choosing the control parameter for honest reference band: a wide reference band facilitates its validation in the pre-anticipation period, but can lead to an overly conservative honest hypothesis testing in the post-treatment period; vice versa for a narrow reference band.

%%%%%%%%%%%%%%%%%%%%%%%%%%%%%%%%%%%%%%%%%%%%
\section{Application}\label{sec:applications} 
%%%%%%%%%%%%%%%%%%%%%%%%%%%%%%%%%%%%%%%%%%%%

In this section we illustrate how researchers can use our honest event study plots in two case studies. Our method is implemented using our \textsf{R}-package \texttt{fdid} \citep{fang_liebl_2025_R}. Users can also adjust their honest reference bands interactively via our Shiny app: \href{https://ccfang2.shinyapps.io/fdidHonestInference/}{https://ccfang2.shinyapps.io/fdidHonestInference/}.

\subsection{The Effect of Working From Home on Work-Home Distance}\label{sec:application_work}

\citet{Coskun_2026} study how the realization of Working-From-Home (WFH) potential during and after the Covid-19 pandemic impacts the labor market and locality choices. As part of their research, \citet{Coskun_2026} employ a dynamic fixed effects model to estimate how much the commuting distance of individuals is dependent on the interaction terms between the WFH potential of occupations and a sequence of dummies indicating the relative event time to the onset of the Covid-19 pandemic.

We augment the classical event study plot (Figure \ref{fig:ESP}) using a supremum-based $(1-\alpha)\times 100\%$ simultaneous confidence band for uniformly testing the causal effects of the WFH potential in the post-treatment period, and an infimum-based $(1-2\alpha)\times 100\%$ simultaneous confidence band for validating the reference band \eqref{eq:ta_bounds} with $S_\ell=1.5$, $S_u=1.8$ and $t_A=-2$ in the pre-anticipation period (Figure \ref{fig:ESP_honest}), where we set $\alpha=0.05$. The chosen reference band takes into account the possible treatment-anticipation bias and can be considered validated at the $\alpha=0.05$ significance level. Using the validated reference band, we find that the treatment effect is honestly and uniformly significant over $t \in [23,44]$. This gives a strong and honest confirmation for the positive effect of the WFH potential on work-home distance.

\subsection{The Effect of Duty-to-Bargain Laws on Employment}\label{sec:application_laws}

In some cases, validating a given reference band can be challenging, as doing so may require selecting a very wide reference band—thereby making subsequent testing in the post-treatment period overly conservative. 
Such non-rejection of the equivalence null hypothesis (Section \ref{sec:validation}) often reflects limited sample size or high variability, and must be viewed as a lack of evidence against the null—not confirmation of it. Thus, a reference band failing to pass the validation can still be used for honest inference when its specification can be supported by domain-specific justification.

%%%%%%
\begin{figure}[!htbp]
	\centering
	\begin{subfigure}[b]{0.48\textwidth} 
		\centering
		\includegraphics[trim={0 0 0 0\linewidth},clip, width=\textwidth]{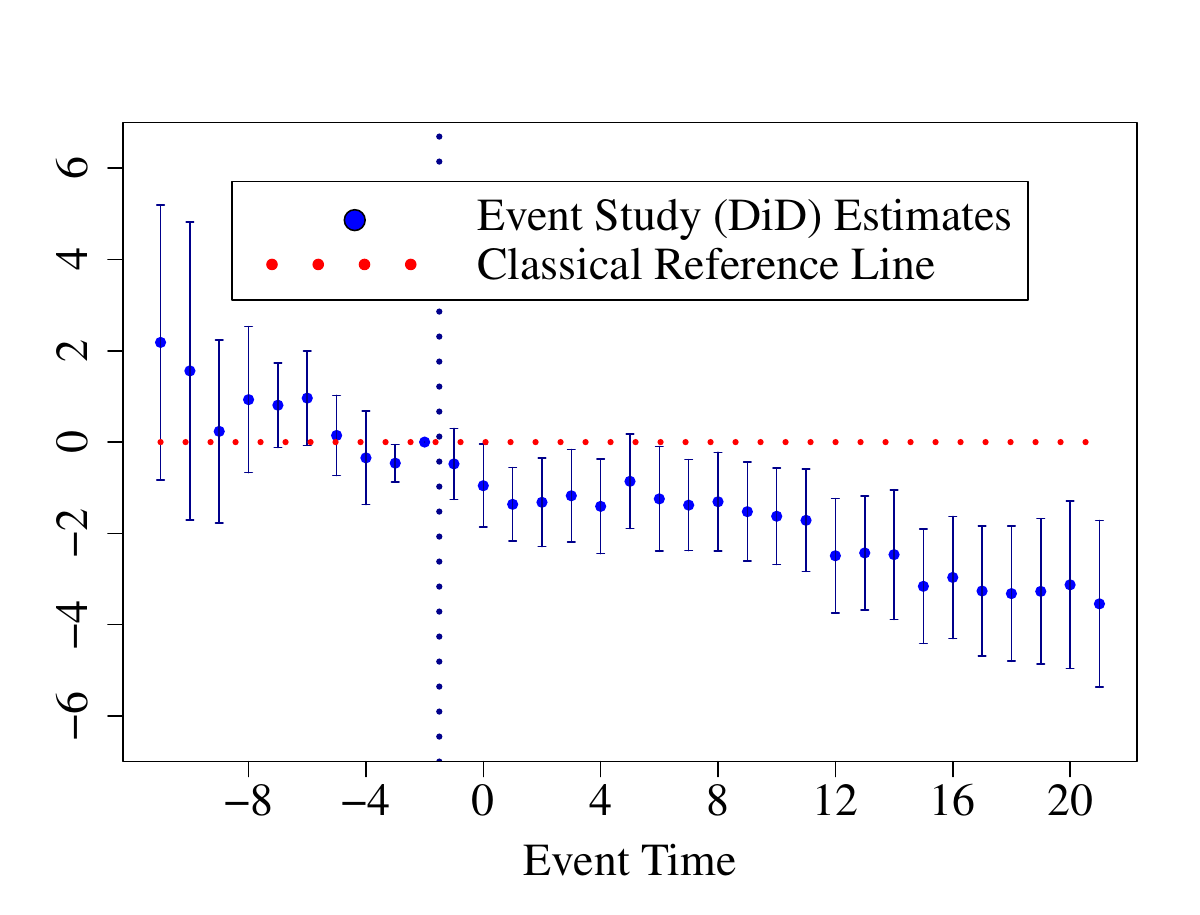} 
		\caption{Classical event study plot with pointwise confidence intervals }\label{fig:ESP_APP2}
	\end{subfigure}\hfill
	\begin{subfigure}[b]{0.48\textwidth} 
		\centering
		\includegraphics[trim={0 0 0 0\linewidth},clip, width=\textwidth]{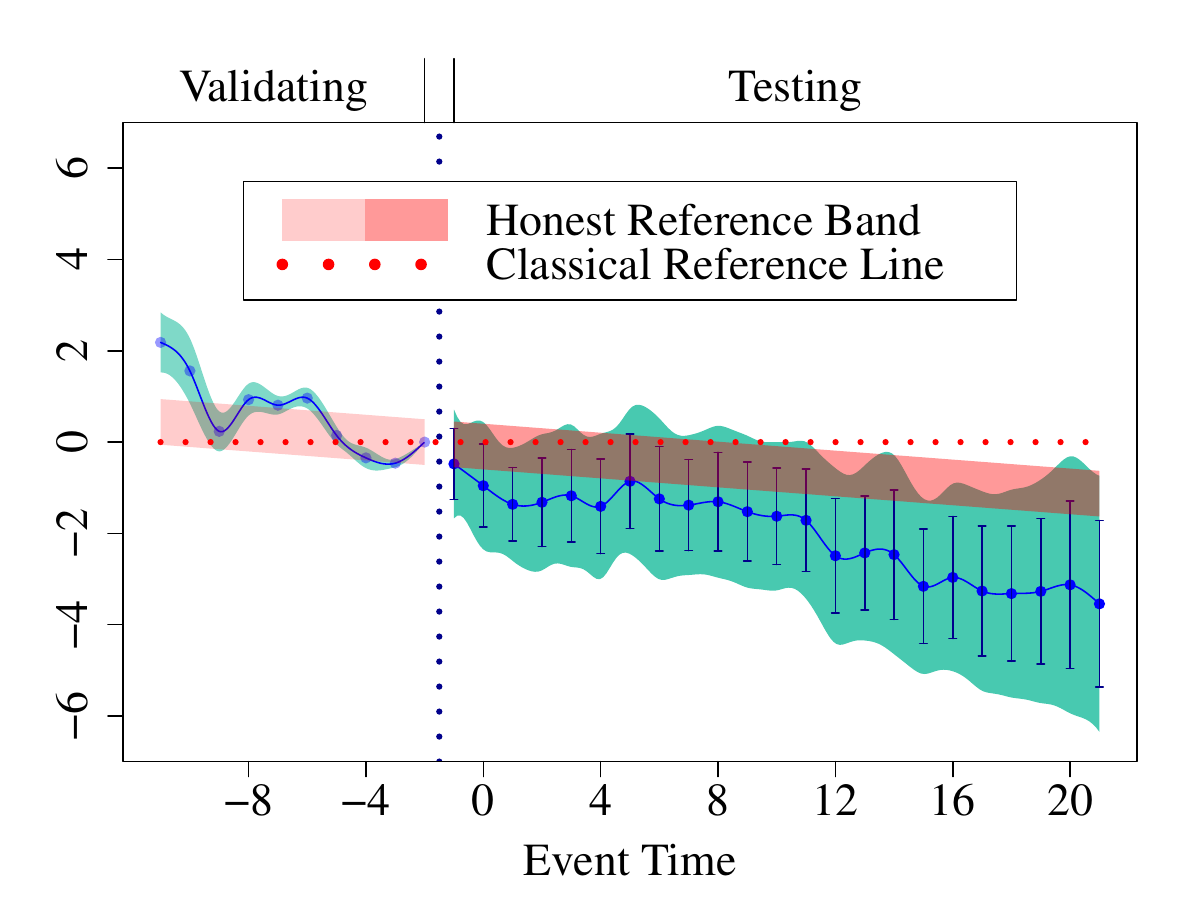} 
		\caption{Honest event study plot with inf- and sup-based simultaneous confidence bands}\label{fig:ESP_honest_APP2}
	\end{subfigure}
        \caption{Effects of DTB laws on employment.}
	\label{fig:ESP_compare_APP2}
\end{figure}
%%%%%%

Figure \ref{fig:ESP_compare_APP2} displays honest causal inference for a study originally published in \citet{lovenheim_willen_2019}, who investigate the effects of Duty-to-Bargain (DTB) laws in the U.S.~education sector on the employment of female workers. In Figure \ref{fig:ESP_APP2}, the original pointwise 95\% confidence intervals suggest that the rollout of DTB laws led to a significant decline in female employment over time, by approximately 1 to 3 percentage points. 

In Figure \ref{fig:ESP_honest_APP2}, we convert the classical event-study plot into an honest event-study plot by adding a supremum-based $(1-\alpha)\times 100\%$ simultaneous confidence band in the post-treatment period and an infimum-based $(1-2\alpha)\times 100\%$ band in the pre-treatment period to validate the reference band \eqref{eq:rmtrb_bounds} with $M_\ell=M_u = 0.5$ (using $\alpha=0.05$). Although the reference band cannot be validated by equivalence testing due to high data variability, it captures the visible downward pre-trend with a width comparable to that of the infimum-based band, providing substantive justification. Using this reference band, we detect no statistically significant treatment effect in the post-treatment period.

%%%%%%%%%%%%%%%%%%%%%%%%%%%%%%%%%%%%%%%%%%%%
\section{Discussion}\label{sec:discussion}
%%%%%%%%%%%%%%%%%%%%%%%%%%%%%%%%%%%%%%%%%%%%

In this paper, we propose reconstructing DiD from a new functional data perspective, which considers the individual time series processes of panel data as smooth processes in continuous time. Our functional estimator of the DiD parameter asymptotically converges to a Gaussian process in the Banach space of continuous functions, facilitating the construction of powerful simultaneous confidence bands. This theoretical result provides a formal foundation for transforming event study plots into rigorous visual tools for honest causal inference via equivalence and relevance testing. In particular, equivalence testing in the pre-anticipation period allows for the validation of the reference bands, while relevance testing in the post-treatment period enables evaluating treatment effects honestly and uniformly.

Notably, we profit from the smoothness assumption on all functional components in our model in many ways. For example, it enables us to interpolate the treatment effect between observable time points, in a belief that the true treatment effect changes smoothly and gradually over time. However, our equivalence and relevance testings can also be easily applied in traditional discrete-data scenario without any smoothness assumption, as indicated in the simulation where the validity of our methodology still holds under rough covariance term.

Our methodology could be extended in several ways. For instance, the assumption of independence between individual realizations of the stochastic processes could be relaxed. In many applications, subjects share a within cluster correlation or are spatially correlated. Adapting our method to such cases would, therefore, be practically relevant. Another non-trivial contribution of the functional data perspective is functional ``registration''. If, for instance, countries have differently fast/slow bureaucratic procedures, there might be heterogeneity in the speed at which treatment effects are realized. Such differences in individual time of the treatment effect development lead to biased treatment effect estimation. Functional registration methods could help to align the functional trajectories. 

\vspace*{1em}

\noindent\textbf{Fundings.} This work was funded by the Deutsche Forschungsgemeinschaft (DFG, German Research Foundation) under Germany's Excellence Strategy -- EXC-2047/1 -- 390685813.

%%%%%%%%%%%%%%%%%%%%%%%%%%%%%
% References
%%%%%%%%%%%%%%%%%%%%%%%%%%%%%
\spacingset{0.8}
\bibliographystyle{Chicago}
\bibliography{bibfile}
%%%%%%%%%%%%%%%%%%%%%%%%%%%%%

% \renewcommand{\thesection}{Appendix~\Roman{section}} 
% \renewcommand{\thesubsection}{\Roman{section}.\arabic{subsection}}
% \setcounter{section}{0}

% \vspace{-1em}
% \spacingset{1.5}

%%%%%%%%%%%%%%%%%%%%%%%%%%%%%%%%%%%%%%%%
% Appendix-Supplement Paper
%%%%%%%%%%%%%%%%%%%%%%%%%%%%%%%%%%%%%%%%
\newpage

\setcounter{page}{1}   
\thispagestyle{plain}
\pagenumbering{Roman}

% \AtBeginDocument{%
  \setlength{\abovedisplayskip}{6pt}
  \setlength{\belowdisplayskip}{6pt}
  \setlength{\abovedisplayshortskip}{5pt}
  \setlength{\belowdisplayshortskip}{5pt}
  \everydisplay{\setstretch{1.3}} % <- THIS is the key line
%}

\appendix

\renewcommand{\thetheorem}{\Roman{theorem}}

\begingroup
\hypersetup{hidelinks, draft}

\counterwithin{figure}{section}
\counterwithin{table}{section}
\counterwithin{equation}{section} 
%\setcounter{equation}{0}
%\vspace*{.5cm}
\spacingset{1.50}

%%%%%%%%%%%%%%%%%%%%%%%%%%%%%%%%%%%%%%%%%%
%\noindent{\large \bf Online supplement for:}\\[2ex]

\if0\blind { \begin{center}
  {\LARGE \bf Online Appendix for\\[1.5ex] ``Making Event Study Plots Honest: A Functional Data Approach to Causal Inference"} 

  \bigskip
  
  \renewcommand*{\thefootnote}{\fnsymbol{footnote}}

  {\large Chencheng Fang\footnotemark[1]\footnotemark[2]\footnotemark[3] \hspace{2.5em} Dominik Liebl\footnotemark[2]\footnotemark[3] \footnotetext[1]{Corresponding author: \href{mailto:ccfang@uni-bonn.de}{ccfang@uni-bonn.de}}
  \footnotetext[2]{Institute of Finance and Statistics, University of Bonn, Adenauerallee 24-42, 53113 Bonn, Germany} \footnotetext[3]{Hausdorff Center for Mathematics, Endenicher Allee 62, 53115 Bonn, Germany}}  

  \renewcommand*{\thefootnote}{\arabic{footnote}}
  \end{center}
} \fi

\if1\blind
{
  \bigskip
  \bigskip
  \bigskip
  \begin{center}
    {\LARGE \bf Supplement to\\[1.5ex] ``Making Event Study Plots Honest: A Functional Data Approach to Causal Inference''}
\end{center}
  \medskip
} \fi

\bigskip

\renewcommand{\thesection}{\Alph{section}}
\renewcommand{\thesubsection}{\Alph{section}.\arabic{subsection}}
\setcounter{section}{0}

%%%%%%%%%%%%%%%%%%%%%%%%%%%%%%%%%%%%%%%%%%%%%%%%%%%%%%%%%%%%%%%%%%%%
%%%%%%%%%%%%%%%%%%%%%%%%%%%%%%%%%%%%%%%%%%%%%%%%%%%%%%%%%%%%%%%%%%%%

This online appendix is a supplement for the paper entitled ``Making Event Study Plots Honest: A Functional Data Approach to Causal Inference". Section \ref{sec:DERIVATIONS} contains our derivations; Sections \ref{sec:PROOFS_Theorems} and \ref{sec:PROOFS_Corollaries} contain proofs; Section \ref{sec:ALGORITHMS} contains our algorithms; and Section \ref{sec:SIM_MORE} contains some additional simulation results. 

%%%%%%%%%%%%%%%%%%%%%%%%%%%%%%%%%%%%%%%%%%%%%%%%%%%%%%%%%%%%%%%%%%
\section{Derivations}\label{sec:DERIVATIONS} 
%%%%%%%%%%%%%%%%%%%%%%%%%%%%%%%%%%%%%%%%%%%%%%%%%%%%%%%%%%%%%%%%%%

%%%%%%%%%%%%%%%%%%%%%%%%%%%%%%%%%%%%%%%%%%%%%%%%%%%%%%%%%%%%%%%%%%
\subsection{Derivation of Equation \eqref{eq:DiD_parameter}}\label{app:did_att_relation} 
%%%%%%%%%%%%%%%%%%%%%%%%%%%%%%%%%%%%%%%%%%%%%%%%%%%%%%%%%%%%%%%%%%
We start with the definition of DiD parameter, and then use the swithing equation to write the observable outcome as potential outcomes. Subsequently, we add and substract some same terms. Finally, we rearrange the terms to derive Equation \eqref{eq:DiD_parameter}.
\begin{align*}
\begin{split}
    \beta(t)
    =&\mathbb{E}[Y_{i}(t)-Y_{i}(-1) \mid D_i=1]- \mathbb{E}[Y_{i}(t)-Y_{i}(-1) \mid D_i=0]\\
    =&\mathbb{E}[Y_{i}(t,1)-Y_{i}(-1,1) \mid D_i=1]- \mathbb{E}[Y_{i}(t,0)-Y_{i}(-1,0) \mid D_i=0] \\
    =&\mathbb{E}[Y_{i}(t,1)-Y_{i}(-1,1) \mid D_i=1]- \mathbb{E}[Y_{i}(t,0)-Y_{i}(-1,0) \mid D_i=0]+ \\
    &\mathbb{E}[Y_{i}(t,0)-Y_{i}(t,0)+Y_{i}(-1,0)-Y_{i}(-1,0)\mid D_i=1] \\
    =&\underbrace{\mathbb{E}[Y_i(t,1)-Y_i(t,0) \mid D_i=1]}_{\theta_{ATT}(t)} -\underbrace{\mathbb{E}[Y_i(-1,1)-Y_i(-1,0) \mid D_i=1]}_{\Delta_{TA}=\theta_{ATT}(-1)}+\\
    &\underbrace{\mathbb{E}[Y_{i}(t,0)-Y_{i}(-1,0) \mid D_i=1]- \mathbb{E}[Y_{i}(t,0)-Y_{i}(-1,0) \mid D_i=0]}_{\Delta_{DT}(t)},
\end{split}
\end{align*} 
where $-\Delta_{TA}=-\theta_{ATT}(-1)$ denotes the bias from the violation of Assumption I (No Anticipation) and $\Delta_{DT}(t)$ denotes the bias from the violation of Assumption II (Parallel Trends).

%%%%%%%%%%%%%%%%%%%%%%%%%%%%%%%%%%%%%%%%%%%%%%%%%%%%%%%%%%%%%%%%%%
\subsection{Derivation of Model \eqref{eq:fct_data_reg_final}}\label{app:SIMPLIFIED_FCT_MODEL} 
%%%%%%%%%%%%%%%%%%%%%%%%%%%%%%%%%%%%%%%%%%%%%%%%%%%%%%%%%%%%%%%%%%

Basically, we only need to apply two-way transformation on Model \eqref{eq:fct_data_reg_simple} to derive Model \eqref{eq:fct_data_reg_final}. 

\textbf{First}, we take average over units on both sides of Model \eqref{eq:fct_data_reg_simple}:
\begin{equation}\label{eq:fct_data_reg_bw}
    \frac{1}{n}\sum_{i=1}^n Y_i(t)=\beta(t)\cdot \frac{1}{n}\sum_{i=1}^n D_i +\frac{1}{n}\sum_{i=1}^n \lambda_i +\phi(t) + \frac{1}{n}\sum_{i=1}^n\varepsilon_i(t).
\end{equation}

\textbf{Second}, we take average over time on both sides of Model \eqref{eq:fct_data_reg_simple}:
\begin{align}\label{eq:fct_data_reg_within}
\begin{split}
    \frac{1}{T_{post}+T_{pre}}\int_{-T_{pre}}^{T_{post}} Y_i(t) \dd t=&\frac{1}{T_{post}+T_{pre}}\int_{-T_{pre}}^{T_{post}} \beta(t)\dd t \cdot D_i+\\
    &\lambda_i+\frac{1}{T_{post}+T_{pre}}\int_{-T_{pre}}^{T_{post}} \phi(t) \dd t+\\
    &\frac{1}{T_{post}+T_{pre}}\int_{-T_{pre}}^{T_{post}} \varepsilon_i(t)\dd t.
\end{split}
\end{align}

\textbf{Third}, we take average over units and time on both sides of Model \eqref{eq:fct_data_reg_simple}:
\begin{align}\label{eq:fct_data_reg_tw}
\begin{split}
    \frac{1}{n}\sum_{i=1}^n\frac{1}{T_{post}+T_{pre}}\int_{-T_{pre}}^{T_{post}} Y_i(t) \dd t=&\frac{1}{T_{post}+T_{pre}}\int_{-T_{pre}}^{T_{post}} \beta(t)\dd t \cdot \frac{1}{n}\sum_{i=1}^nD_i+\\
    &\frac{1}{n}\sum_{i=1}^n\lambda_i+\frac{1}{T_{post}+T_{pre}}\int_{-T_{pre}}^{T_{post}} \phi(t) \dd t+\\
    &\frac{1}{n}\sum_{i=1}^n\frac{1}{T_{post}+T_{pre}}\int_{-T_{pre}}^{T_{post}} \varepsilon_i(t)\dd t.
\end{split}
\end{align}

\textbf{Finally}, by \eqref{eq:fct_data_reg_simple}-\eqref{eq:fct_data_reg_bw}-\eqref{eq:fct_data_reg_within}+\eqref{eq:fct_data_reg_tw}, we have:
\begin{equation*}
    \ddot{Y}_i(t)=\underbrace{\left( \beta(t)-\frac{1}{T_{post}+T_{pre}}\int_{-T_{pre}}^{T_{post}} \beta(s)\dd s \right)}_{\gamma(t)}\dot{D}_i + \ddot{\varepsilon}_i(t), 
\end{equation*}
where 
\begingroup
\allowdisplaybreaks
\begin{align*}
	\ddot{Y}_i(t)&=\dot{Y}_i(t)-\frac{1}{T_{post}+T_{pre}} \int_{-T_{pre}}^{T_{post}} Y_i(s)\dd s+\frac{1}{n}\sum_{i=1}^n \frac{1}{T_{post}+T_{pre}}\int_{-T_{pre}}^{T_{post}} Y_i(s)\dd s,\\
	\ddot{\varepsilon}_i(t)&=\dot{\varepsilon}_i(t)-\frac{1}{T_{post}+T_{pre}} \int_{-T_{pre}}^{T_{post}} \varepsilon_i(s)\dd s+\frac{1}{n}\sum_{i=1}^n \frac{1}{T_{post}+T_{pre}}\int_{-T_{pre}}^{T_{post}} \varepsilon_i(s)\dd s,
	%\text{and}\quad\dot{D}_i&=D_i-\frac{1}{n}\sum_{i=1}^n D_i.
\end{align*}
\endgroup
$\dot{D}_i=D_i-n^{-1}\sum_{i=1}^n D_i$, $\dot{Y}_i(t)=Y_i(t)-\frac{1}{n} \sum_{i=1}^n Y_i(t)$ and $\dot{\varepsilon}_i(t)=\varepsilon_i(t)-\frac{1}{n} \sum_{i=1}^n \varepsilon_i(t)$. Hence, Model \eqref{eq:fct_data_reg_final} holds. As $\gamma(-1)=0-\frac{1}{T_{post}+T_{pre}}\int_{-T_{pre}}^{T_{post}} \beta(s)\dd s$, we then have $\beta(t)=\gamma(t)+\frac{1}{T_{post}+T_{pre}}\int_{-T_{pre}}^{T_{post}} \beta(s)\dd s=\gamma(t)-\gamma(-1)$.

%%%%%%%%%%%%%%%%%%%%%%%%%%%%%%%%%%%%%%%%%%%%%%%%%%%%%%%%%%%%%%%%%%
\subsection{Derivation of Covariates-adjusted Estimators \eqref{eq:BetaHat_Covariate} and \eqref{eq:cov_est_Covariate}} \label{app:cov_est_Covariate}
%%%%%%%%%%%%%%%%%%%%%%%%%%%%%%%%%%%%%%%%%%%%%%%%%%%%%%%%%%%%%%%%%%

\textbf{First}, we start with the functional data model with covariates adjustment, as shown in \eqref{eq:app_fct_on_scalar_CoV_model}:
\begin{equation}\label{eq:app_fct_on_scalar_CoV_model}
	\ddot{Y}_i(t)=\gamma(t)\dot{D}_i +\dot{W}_i^{\top}\breve{\xi}(t) + \ddot{\varepsilon}_i(t).
\end{equation}
By Frisch-Waugh-Lovell (FWL) Theorem \citepappendix{frisch_waugh_1933_app, lovell_1963_app}, the estimate of $\gamma(t)$ in \eqref{eq:app_fct_on_scalar_CoV_model} is equivalent to the estimate in a residual-on-residual regression \eqref{eq:app_CoV_FWL}, as shown below:
\begin{equation}\label{eq:app_CoV_FWL}
    \sum_{j=1}^n L_{ij}\ddot{Y}_j(t)= \gamma(t) \widetilde{D}_i + \sum_{j=1}^n L_{ij}\ddot{\varepsilon}_j(t), 
\end{equation}
where $\widetilde{D}_i=\sum_{j=1}^n L_{ij}\dot{D}_j$ and $L_{ij}$ is the $(i,j)$-th entry of ($n \times n$) matrix $L=I-\dot{W}(\dot{W}^{\top}\dot{W})^{-1}\dot{W}^{\top}$ with $I$ as an ($n \times n$) identity matrix and $\dot{W}=(\dot{W}_1, \dot{W}_2, \dots, \dot{W}_n)^{\top}$ as an ($n \times k$) matrix. 

\textbf{Second}, $\gamma(t)$ in \eqref{eq:app_CoV_FWL} can be estimated by a least squares estimator:
\begin{equation*}
    \widehat{\gamma}^{FWL}_n(t) = \left(\frac{1}{n}\sum_{i=1}^n \widetilde{D}_i^2\right)^{-1}\left(\frac{1}{n}\sum_{i=1}^n \widetilde{D}_i \left(\sum_{j=1}^n L_{ij}\ddot{Y}_j(t)\right)\right),\quad t\in[-T_{pre},-1]\cup[0,T_{post}].
\end{equation*}    
By $\beta(t)=\gamma(t)-\gamma(-1)$, a plug-in estimator for $\beta(t)$ is
\begin{equation}\label{eq:app_BetaHat_Cov_FWL}
    \widehat{\beta}^{FWL}_n(t)= \widehat{\gamma}_n(t)- \widehat{\gamma}_n(-1)=\left(\frac{1}{n}\sum_{i=1}^n \widetilde{D}_i^2\right)^{-1}\left(\frac{1}{n}\sum_{i=1}^n \widetilde{D}_i \left(\sum_{j=1}^n L_{ij} (\ddot{Y}_j(t)-\ddot{Y}_j(-1)) \right)\right)
\end{equation}
Note that, the $\ddot{Y}_j(t)$ and $\ddot{Y}_j(-1)$ expressions in \eqref{eq:app_BetaHat_Cov_FWL} include integration operations, which cancel out due to the subtraction such that
\begin{equation*}
    \widehat{\beta}^{FWL}_n(t)=\left(\frac{1}{n}\sum_{i=1}^n \widetilde{D}_i^2\right)^{-1}\left(\frac{1}{n}\sum_{i=1}^n \widetilde{D}_i (\widetilde{Y}_i(t)-\widetilde{Y}_i(-1) )\right)
\end{equation*}
where $\widetilde{Y}_i(t)=\sum_{j=1}^n L_{ij}\dot{Y}_j(t)$. The covariates-adjusted estimator \eqref{eq:BetaHat_Covariate} for $\beta(t)$ is thus derived.

\textbf{Third}, we proceed to the covariates-adjusted covariance estimator \eqref{eq:cov_est_Covariate}. Analogous to the covariance estimator without covariates in \eqref{eq:cov_est}, we have
\begin{align*}
    \widehat{C}^{FWL}_{\beta,n}(s,t)
    &=\left(\frac{1}{n}\sum_{i=1}^n \widetilde{D}_i^2 \left( \Delta_0 \widetilde{Y}_i(s)\right)\left( \Delta_0 \widetilde{Y}_i(t)\right) \right)\left(\frac{1}{n}\sum_{i=1}^n \widetilde{D}_i^2\right)^{-2},
\end{align*}
where $\Delta_0 \widetilde{Y}_i(t) = (\widetilde{Y}_i(t)-\widetilde{Y}_i(-1))-\widehat{\beta}^{FWL}_n(t)\widetilde{D}_i$ with $\widehat{\beta}^{FWL}_n(t)$ as the covariates-adjusted estimator for $\beta(t)$ defined in \eqref{eq:BetaHat_Covariate}. The covariates-adjusted covariance estimator is therefore derived.

It is important to note that in Model \eqref{eq:app_fct_on_scalar_CoV_model}, we can not estimate the coefficient function $\xi(t)$ directly, but only the demeaned version $\breve{\xi}(t)$. Nonetheless, since
\begin{equation*}
\breve{\xi}(s) - \breve{\xi}(t) = \xi(s) - \xi(t) \quad \text{for all} \quad s, t \in [-T_{pre},-1]\cup[0,T_{post}],
\end{equation*}
we can still consistently estimate the relative shape of $\xi(t)$. Moreover, our primary inference focus lies in the DiD parameter $\beta(t)$, which remains unaffected by the covariates.

%%%%%%%%%%%%%%%%%%%%%%%%%%%%%%%%%%%%%%%%%%%%%%%%%%%%%%%%%%%%%%%%%%
\subsection{Derivation of Aggregated Covariance Function \eqref{eq:cov_stagger}}\label{app:cov_staggered_did}
%%%%%%%%%%%%%%%%%%%%%%%%%%%%%%%%%%%%%%%%%%%%%%%%%%%%%%%%%%%%%%%%%%

We start with the definition of aggregated asymptotic covariance function. Then, we write out the expression of definition. Finally, we obtain the aggregated covariance function \eqref{eq:cov_stagger}.
\begingroup
\allowdisplaybreaks
\begin{align*}
    C_{\beta_A}(e_1,e_2)
    &=\lim_{n\to \infty} n\operatorname{Cov}(\widehat{\beta}_{A}(e_1), \widehat{\beta}_{A}(e_2))\\
    &=\lim_{n\to \infty} n\operatorname{Cov}\left(\sum_{g\in\mathcal{G}} w_g \widehat{\beta}_{g}(e_1), \sum_{g\in\mathcal{G}} w_g \widehat{\beta}_{g}(e_2)\right)\\
    &=\sum_{g\in\mathcal{G}} \lim_{n\to \infty} n\operatorname{Cov}\left(w_g\widehat{\beta}_{g}(e_1), w_g\widehat{\beta}_{g}(e_2)\right)\\
    &=\sum_{g\in\mathcal{G}} w_g^2\lim_{n\to \infty} n\operatorname{Cov}\left(\widehat{\beta}_{g}(e_1), \widehat{\beta}_{g}(e_2)\right)\\
    &=\sum_{g\in\mathcal{G}} w_g^2 C_{\beta_g}(e_1,e_2)
\end{align*}
\endgroup
for $e_1,e_2\in[-T_{pre,A}, T_{post,A}]$. The third line follows from a practice where we use different control units for computing group-specific $\widehat{\beta}_g(e)$; otherwise the resulting expression would be more complicated, taking into account the cross-group covariance.

%%%%%%%%%%%%%%%%%%%%%%%%%%%%%%%%%%%%%%%%%%%%%%%%%%%%%%%%%%%%%%%%%%
\section{Proofs of Theorems}\label{sec:PROOFS_Theorems} 
%%%%%%%%%%%%%%%%%%%%%%%%%%%%%%%%%%%%%%%%%%%%%%%%%%%%%%%%%%%%%%%%%%

%%%%%%%%%%%%%%%%%%%%%%%%%%%%%%%%%%%%%%%%%%%%%%%%%%%%%%%%%%%%%%%%%%
\subsection{Proof of Theorem \ref{thm:beta_equiv}: Equivalence of functional DiD parameter and panel data DiD parameter}
%%%%%%%%%%%%%%%%%%%%%%%%%%%%%%%%%%%%%%%%%%%%%%%%%%%%%%%%%%%%%%%%%%

\begin{proof}
    \textbf{First}, we start with the TWFE panel data model in \eqref{eq:pd_twfe_reg_main} and further simplify it.
    \begingroup
    \allowdisplaybreaks
    \begin{align}\label{eq:pd_twfe_reg_main_simplified}
    \begin{split}
        Y_{it}
        &=\sum_{\substack{s=-T_{pre}\\s\neq -1}}^{T_{post}}\beta^{PD}_s D_{its} + \lambda_i + \phi_t + \epsilon_{it}\\
        &=\sum_{\substack{s=-T_{pre}\\s\neq -1}}^{T_{post}}\beta^{PD}_s D_i \mathbbm{1}_{\{t=s\}} + \lambda_i + \phi_t + \epsilon_{it}\\
        &=\beta_t^{PD} D_i \mathbbm{1}_{\{t \neq -1\}} + \lambda_i + \phi_t + \epsilon_{it},
    \end{split}
    \end{align}
    \endgroup
    where $\beta_t^{PD}$ is the panel data DiD parameter for $t\in\{-T_{pre},\dots,T_{post}\}$ and $\beta^{PD}_{-1}=0$.
    
    \textbf{Second}, we can write the simplified TWFE panel data model in \eqref{eq:pd_twfe_reg_main_simplified} as a function-on-scalar regression model, which directly leads to our functional model in \eqref{eq:fct_data_reg_simple}:
    \begin{equation*}
        Y_{i}(t)=\beta(t) D_i \mathbbm{1}_{\{t \neq -1\}} + \lambda_i + \phi(t) + \epsilon_{i}(t),
    \end{equation*}
    where $\beta(t)$ is our functional DiD parameter and $\beta(-1)=0$.

    \textbf{Third}, by adopting the two-way panel data transformation to functional data, we obtain a new function-on-scalar model, the same as in \eqref{eq:fct_data_reg_final}.
    \begin{equation}\label{eq:fct_data_reg_transf}
	    \ddot{Y}_i(t)=\underbrace{\left( \beta(t)-\frac{1}{T_{post}+T_{pre}}\int_{-T_{pre}}^{T_{post}} \beta(s)\dd s \right)}_{\gamma(t)}\dot{D}_i + \ddot{\varepsilon}_i(t).
    \end{equation}

    \textbf{Finally}, since the regression models \eqref{eq:pd_twfe_reg_main_simplified} and \eqref{eq:fct_data_reg_transf} are pointwise equivalent, we can conclude that the functional DiD parameter $\beta$ is pointwise equivalent to the panel data DiD parameter $\beta^{PD}$.
    \begin{equation*}
        \beta(t)=\beta^{PD}_t \quad\text{for every}\quad t\in\{-T_{pre},\dots,T_{post}\},\quad\text{with}\quad \beta(-1)=\beta^{PD}_{-1}=0. 
    \end{equation*}
    This completes the proof.
\end{proof}

%%%%%%%%%%%%%%%%%%%%%%%%%%%%%%%%%%%%%%%%%%%%%%%%%%%%%%%%%%%%%%%%%%
\subsection{Proof of Theorem \ref{thm:beta_hat_equiv}: Equivalence of functional DiD estimator and panel data DiD estimator}
%%%%%%%%%%%%%%%%%%%%%%%%%%%%%%%%%%%%%%%%%%%%%%%%%%%%%%%%%%%%%%%%%%

\begin{proof}
    \textbf{First}, we start with the panel data DiD estimator $\widehat{\beta}^{PD}_n$ in \eqref{eq:beta_hat_transf_main} and further simplify it.
    \begin{equation*}%\label{eq:beta_hat_transf_main_simplified}
    \widehat{\beta}^{PD}_n = 
    \underbrace{\left(\frac{1}{nT}\sum_{i=1}^n \sum_{t=-T_{pre}}^{T_{post}} \ddot{D}^{PD}_{it} \ddot{D}^{PD^\top}_{it} \right)^{-1}}_{:=P_1^{-1}}\underbrace{\left(\frac{1}{nT}\sum_{i=1}^n \sum_{t=-T_{pre}}^{T_{post}} \ddot{D}_{it}^{PD} \ddot{Y}_{it}^{PD}\right)}_{:=P_2},
\end{equation*}
    where $\ddot{D}_{it}^{PD}=(\ddot{D}^{PD}_{it,-T_{pre}}, \dots ,\ddot{D}^{PD}_{it,-2},\ddot{D}^{PD}_{it,0},\dots,\ddot{D}^{PD}_{it,T_{post}})^{\top}$. Again, we denote $T:=T_{pre}+T_{post}+1$. We simplify $P_1$ and $P_2$ separately. 
    
    \textbf{Second}, to simplify $P_1$, we have 
    \begingroup
    \allowdisplaybreaks
    \begin{align*}
    %\begin{split}
        P_1
        =&\frac{1}{nT}\sum_{i=1}^n \sum_{t=-T_{pre}}^{T_{post}} \ddot{D}^{PD}_{it} \ddot{D}^{PD^\top}_{it}\\
        =&\frac{1}{nT}\sum_{i=1}^n \sum_{t=-T_{pre}}^{T_{post}}
        \begin{pmatrix}
            \ddot{D}^{PD}_{it,-T_{pre}}\ddot{D}^{PD}_{it,-T_{pre}} & \dots & \ddot{D}^{PD}_{it,-T_{pre}}\ddot{D}^{PD}_{it,T_{post}}\\
            \vdots & \ddots & \vdots\\
            \ddot{D}^{PD}_{it,T_{post}}\ddot{D}^{PD}_{it,-T_{pre}} & \dots & \ddot{D}^{PD}_{it,T_{post}} \ddot{D}^{PD}_{it,T_{post}}
        \end{pmatrix}_{(T-1) \times (T-1)}\\
        =&\frac{1}{nT}\sum_{i=1}^n
        \left[
        \begin{pmatrix}
            (1-\frac{1}{T})^2 & \dots & (-\frac{1}{T})(1-\frac{1}{T})\\
            \vdots & \ddots & \vdots\\
            (-\frac{1}{T})(1-\frac{1}{T}) & \dots & (-\frac{1}{T})^2
        \end{pmatrix}\dot{D}_i^2+\dots+
        \begin{pmatrix}
            (-\frac{1}{T})^2 & \dots & (-\frac{1}{T})(1-\frac{1}{T})\\
            \vdots & \ddots & \vdots\\
            (-\frac{1}{T})(1-\frac{1}{T}) & \dots & (1-\frac{1}{T})^2
        \end{pmatrix}\dot{D}_i^2
        \right]\\
        =&\left(\frac{1}{nT}\sum_{i=1}^n \dot{D}_i^2\right)
        \left[
        \begin{pmatrix}
            1 & \dots & 0 \\
            \vdots & \ddots & \vdots \\
            0 & \dots & 1 \\
        \end{pmatrix}_{(T-1)\times(T-1)}
        -
        \begin{pmatrix}
            \frac{1}{T} & \dots & \frac{1}{T} \\
            \vdots & \ddots & \vdots \\
            \frac{1}{T} & \dots & \frac{1}{T} \\
        \end{pmatrix}_{(T-1)\times(T-1)}
        \right]\\
        =&\left(\frac{1}{nT}\sum_{i=1}^n \dot{D}_i^2\right)
        \underbrace{
        \left[
        \underbrace{
        \begin{pmatrix}
            1 & \dots & 0 \\
            \vdots & \ddots & \vdots \\
            0 & \dots & 1 \\
        \end{pmatrix}_{(T-1)\times(T-1)}}_{:=I}
        +
        \underbrace{
        \begin{pmatrix}
            -\frac{1}{\sqrt{T}}\\
            \vdots\\
            -\frac{1}{\sqrt{T}}
        \end{pmatrix}_{(T-1)\times 1}}_{:=U}
        \underbrace{
        \begin{pmatrix}
            \frac{1}{\sqrt{T}} & \dots & \frac{1}{\sqrt{T}}
        \end{pmatrix}_{1 \times (T-1)}}_{:=V^{\top}}
        \right]
        }_{:=Q},
    %\end{split}
    \end{align*}
    \endgroup
    where $I$ is a $(T-1)\times (T-1)$ identity matrix and $Q=I+UV^{\top}$. By Sherman-Morrison formula (\citealpappendix{sherman_morrison_1950}), we have
    \begingroup
    \allowdisplaybreaks
    \begin{align*}
        Q^{-1}
        =&(I+UV^\top)^{-1}\\
        =&I^{-1}+\frac{I^{-1}UV^{\top}I^{-1}}{1+V^{\top}I^{-1}U}\\
        =&I+\frac{UV^{\top}}{1+V^{\top}U}\\
        =&I+\mathbf{1},
    \end{align*}
    \endgroup
    where $\mathbf{1}$ is a $(T-1)\times(T-1)$ matrix with all elements being 1. Hence, we have
    \begin{align*}
    \begin{split}
        P_1^{-1}=\left(\frac{1}{nT}\sum_{i=1}^n \dot{D}_i^2\right)^{-1}(I+\mathbf{1}).
    \end{split}
    \end{align*}

    \textbf{Third}, to simplify $P_2$, we have
    \begingroup
    \allowdisplaybreaks
    \begin{align*}
    %\begin{split}
        P_2
        =&\frac{1}{nT}\sum_{i=1}^n \sum_{t=-T_{pre}}^{T_{post}} \ddot{D}_{it}^{PD} \ddot{Y}_{it}^{PD}\\
        =&\frac{1}{nT}\sum_{i=1}^n
        \left[ 
        \begin{pmatrix}
            \ddot{D}^{PD}_{i,-T_{pre},-T_{pre}}\\
            \vdots\\ 
            \ddot{D}^{PD}_{i,-T_{pre},T_{post}}
        \end{pmatrix}_{(T-1)\times 1}
        \ddot{Y}_{i,-T_{pre}}^{PD}
        +
        \dots
        +
        \begin{pmatrix}
            \ddot{D}^{PD}_{i,T_{post},-T_{pre}}\\
            \vdots\\ 
            \ddot{D}^{PD}_{i,T_{post},T_{post}}
        \end{pmatrix}_{(T-1)\times 1}
        \ddot{Y}_{i,T_{post}}^{PD}
        \right]\\
        =&\frac{1}{nT}\sum_{i=1}^n
        \begin{pmatrix}
            (1-\frac{1}{T})\dot{D}_i\ddot{Y}_{i,-T_{pre}}^{PD} + \dots + (-\frac{1}{T})\dot{D}_i\ddot{Y}_{i,T_{post}}^{PD}\\
            \vdots\\ 
            (-\frac{1}{T})\dot{D}_i\ddot{Y}_{i,-T_{pre}}^{PD} + \dots + (1-\frac{1}{T})\dot{D}_i\ddot{Y}_{i,T_{post}}^{PD}
        \end{pmatrix}_{(T-1) \times 1}\\
        =&\frac{1}{nT}\sum_{i=1}^n \dot{D}_i
        \left[
        \begin{pmatrix}
            \ddot{Y}_{i,-T_{pre}}^{PD}\\
            \vdots\\
            \ddot{Y}_{i,T_{post}}^{PD}
        \end{pmatrix}_{(T-1)\times 1}
        -
        \underbrace{
        \frac{1}{T}\sum_{t=-T_{pre}}^{T_{post}}\ddot{Y}_{it}^{PD}
        }_{=0}
        \right]\\
        =&\frac{1}{nT}\sum_{i=1}^n \dot{D}_i^2
        \begin{pmatrix}
            \beta_{-T_{pre}}-\frac{1}{T}\sum_{s \neq -1}\beta_s\\
            \vdots\\
            \beta_{T_{post}}-\frac{1}{T}\sum_{s \neq -1}\beta_s
        \end{pmatrix}_{(T-1)\times 1}
        +
        \frac{1}{nT}\sum_{i=1}^n \dot{D}_i
        \begin{pmatrix}
            \ddot{\varepsilon}^{PD}_{i,-T_{pre}}\\
            \vdots\\
            \ddot{\varepsilon}^{PD}_{i,T_{post}}
        \end{pmatrix}_{(T-1)\times 1}.
    %\end{split}
    \end{align*}
    \endgroup

    \textbf{Fourth}, with the simplified $P_1$ and $P_2$, we can then simplify $\widehat{\beta}_n^{PD}$,
    \begingroup
    \allowdisplaybreaks
    \begin{align*}
        \widehat{\beta}_n^{PD}
        =&P_1^{-1}P_2\\
        =&
        \begin{pmatrix}
            \beta_{-T_{pre}}-\frac{1}{T}\sum_{s \neq -1}\beta_s\\
            \vdots\\
            \beta_{T_{post}}-\frac{1}{T}\sum_{s \neq -1}\beta_s
        \end{pmatrix}_{(T-1)\times 1}
        +
        \left(\frac{1}{nT}\sum_{i=1}^n \dot{D}_i^2\right)^{-1}
        \begin{pmatrix}
            \frac{1}{nT}\sum_{i=1}^n \ddot{\varepsilon}^{PD}_{i,-T_{pre}}\dot{D}_i\\
            \vdots\\
            \frac{1}{nT}\sum_{i=1}^n \ddot{\varepsilon}^{PD}_{i,T_{post}}\dot{D}_i
        \end{pmatrix}_{(T-1)\times 1}\\
        &+
        \begin{pmatrix}
            \sum_{s\neq -1}\beta_s-\frac{T-1}{T}\sum_{s\neq -1}\beta_s\\
            \vdots\\
            \sum_{s\neq -1}\beta_s-\frac{T-1}{T}\sum_{s\neq -1}\beta_s
        \end{pmatrix}_{(T-1)\times 1}
        +
        \left(\frac{1}{nT}\sum_{i=1}^n \dot{D}_i^2\right)^{-1}
        \begin{pmatrix}
            \frac{1}{nT}\sum_{i=1}^n \sum_{s\neq -1} \ddot{\varepsilon}^{PD}_{is}\dot{D}_i\\
            \vdots\\
            \frac{1}{nT}\sum_{i=1}^n \sum_{s\neq -1} \ddot{\varepsilon}^{PD}_{is}\dot{D}_i
        \end{pmatrix}_{(T-1)\times 1}\\
        =&
        \begin{pmatrix}
            \beta_{-T_{pre}}\\
            \vdots\\
            \beta_{T_{post}}
        \end{pmatrix}_{(T-1)\times 1}
        +
        \left(\frac{1}{nT}\sum_{i=1}^n \dot{D}_i^2\right)^{-1}
        \begin{pmatrix}
            \frac{1}{nT}\sum_{i=1}^n \ddot{\varepsilon}^{PD}_{i,-T_{pre}}\dot{D}_i+\frac{1}{nT}\sum_{i=1}^n \sum_{s\neq -1} \ddot{\varepsilon}^{PD}_{is}\dot{D}_i\\
            \vdots\\
            \frac{1}{nT}\sum_{i=1}^n \ddot{\varepsilon}^{PD}_{i,T_{post}}\dot{D}_i+\frac{1}{nT}\sum_{i=1}^n \sum_{s\neq -1} \ddot{\varepsilon}^{PD}_{is}\dot{D}_i
        \end{pmatrix}_{(T-1)\times 1}\\
        =&
        \begin{pmatrix}
            \beta_{-T_{pre}}\\
            \vdots\\
            \beta_{T_{post}}
        \end{pmatrix}_{(T-1)\times 1}
        +
        \left(\frac{1}{n}\sum_{i=1}^n \dot{D}_i^2\right)^{-1} \left[\frac{1}{n}\sum_{i=1}^n \dot{D}_i
        \begin{pmatrix}
            \ddot{\varepsilon}^{PD}_{i,-T_{pre}}+\sum_{s\neq -1} \ddot{\varepsilon}^{PD}_{is}\\
            \vdots\\
            \ddot{\varepsilon}^{PD}_{i,T_{post}}+\sum_{s\neq -1} \ddot{\varepsilon}^{PD}_{is}
        \end{pmatrix}_{(T-1)\times 1}\right].
    \end{align*}
    \endgroup
    
    \textbf{Fifth}, we continue with the functional DiD estimator $\widehat{\beta}_n(t)$ in \eqref{eq:BetaHat} and simplify it,
    \begingroup
    \allowdisplaybreaks
    \begin{align*}
        \widehat{\beta}_n(t)
        =&\widehat{\gamma}_n(t) -\widehat{\gamma}_n(-1) \\
	=&\left(\frac{1}{n}\sum_{i=1}^n \dot{D}_i^2\right)^{-1}\left(\frac{1}{n}\sum_{i=1}^n \dot{D}_i (\dot{Y}_i(t) - \dot{Y}_i(-1))\right)\\
	=&\left(\frac{1}{n}\sum_{i=1}^n \dot{D}_i^2\right)^{-1}\left(\frac{1}{n}\sum_{i=1}^n \dot{D}_i \left(Y_i(t)-Y_i(-1)-\frac{1}{n}\sum_{i=1}^n Y_i(t)+\frac{1}{n}\sum_{i=1}^nY_i(-1)\right)\right)\\
        =&\left(\frac{1}{n}\sum_{i=1}^n \dot{D}_i^2\right)^{-1}\left(\frac{1}{n}\sum_{i=1}^n \dot{D}_i \left( \beta(t)\dot{D}_i+\varepsilon_i(t)-\varepsilon_i(-1)-\frac{1}{n}\sum_{i=1}^n \varepsilon_i(t)+\frac{1}{n}\sum_{i=1}^n \varepsilon_i(-1) \right) \right)\\
        =&\beta(t)+ \left(\frac{1}{n}\sum_{i=1}^n \dot{D}_i^2\right)^{-1}\left( \frac{1}{n}\sum_{i=1}^n \dot{D}_i \left( \varepsilon_i(t)-\varepsilon_i(-1)-\frac{1}{n}\sum_{i=1}^n \varepsilon_i(t)+\frac{1}{n}\sum_{i=1}^n \varepsilon_i(-1)\right) \right).
    \end{align*}
    \endgroup
    
    \textbf{Sixth}, we have
    \begingroup
    \allowdisplaybreaks
    \begin{align*}
        \ddot{\varepsilon}^{PD}_{it}+\sum_{s\neq -1} \ddot{\varepsilon}^{PD}_{is}
        =&\varepsilon_{it}-\frac{1}{n}\sum_{i=1}^n \varepsilon_{it}-\frac{1}{T}\sum_{t=-T_{pre}}^{T_{post}} \varepsilon_{it}+\frac{1}{nT}\sum_{i=1}^n \sum_{t=-T_{pre}}^{T_{post}} \varepsilon_{it}+\\
        &\sum_{s\neq -1}\left( \varepsilon_{is}-\frac{1}{n}\sum_{i=1}^n \varepsilon_{is}-\frac{1}{T}\sum_{s=-T_{pre}}^{T_{post}} \varepsilon_{is}+\frac{1}{nT}\sum_{i=1}^n \sum_{s=-T_{pre}}^{T_{post}} \varepsilon_{is} \right)\\
        =&\varepsilon_{it}-\frac{1}{n}\sum_{i=1}^n \varepsilon_{it}-\frac{1}{T}\sum_{t=-T_{pre}}^{T_{post}} \varepsilon_{it}+\frac{1}{nT}\sum_{i=1}^n \sum_{t=-T_{pre}}^{T_{post}} \varepsilon_{it}+\\
        &\sum_{s\neq -1}\varepsilon_{is}-\sum_{s\neq -1}\frac{1}{n}\sum_{i=1}^n \varepsilon_{is}-(T-1)\frac{1}{T}\sum_{s=-T_{pre}}^{T_{post}} \varepsilon_{is}+(T-1)\frac{1}{nT}\sum_{i=1}^n \sum_{s=-T_{pre}}^{T_{post}} \varepsilon_{is} \\
        =&\varepsilon_{it}-\frac{1}{n}\sum_{i=1}^n \varepsilon_{it}-\frac{T}{T}\sum_{t=-T_{pre}}^{T_{post}} \varepsilon_{it}+\frac{T}{nT}\sum_{i=1}^n \sum_{t=-T_{pre}}^{T_{post}} \varepsilon_{it}+\sum_{s\neq -1}\varepsilon_{is}-\sum_{s\neq -1}\frac{1}{n}\sum_{i=1}^n \varepsilon_{is}\\
        =&\varepsilon_{it}-\frac{1}{n}\sum_{i=1}^n \varepsilon_{it}-\varepsilon_{i,-1}+\frac{1}{n}\sum_{i=1}^n \varepsilon_{i,-1}.
    \end{align*}
    \endgroup
    This shows the equivalence of final expressions in $\widehat{\beta}_{n,t}^{PD}$ and $\widehat{\beta}_n(t)$ for every $t\in\{-T_{pre},\dots,T_{post}\}$. 
    
    \textbf{Finally}, 
    $
        \widehat\beta^{PD}_{n,t}=\widehat\beta_n(t)\quad\text{for every}\quad t\in\{-T_{pre},\dots,T_{post}\},\quad\text{with}\quad \widehat\beta^{PD}_{n,-1}=\widehat\beta_n(-1)=0. 
    $
    This completes the proof.
\end{proof}

%%%%%%%%%%%%%%%%%%%%%%%%%%%%%%%%%%%%%%%%%%%%%%%%%%%%%%%%%%%%%%%%%%
\subsection{Proof of Theorem \ref{thm:pw_normality}: Pointwise Asymptotic Normality}\label{app:pw_normality}
%%%%%%%%%%%%%%%%%%%%%%%%%%%%%%%%%%%%%%%%%%%%%%%%%%%%%%%%%%%%%%%%%%

\begin{proof}
\textbf{First}, to show the pointwise normality of $\widehat{\beta}_n(t)$, we should show the pointwise asymptotic distribution of 
$
    \widehat{\gamma}_n(t) = \gamma(t) + (n^{-1}\sum_{i=1}^n \dot{D}_i^2)^{-1}(n^{-1}\sum_{i=1}^n \dot{D}_i \ddot{\varepsilon}_i(t)).
$
To derive the asymptotic distribution, we write
\begin{equation*}
    \sqrt{n}(\widehat{\gamma}_n(t)-\gamma(t))=\left(\frac{1}{n}\sum_{i=1}^n \dot{D}_i^2\right)^{-1}\left(\frac{1}{\sqrt{n}}\sum_{i=1}^n \dot{D}_i \ddot{\varepsilon}_i(t)\right).
\end{equation*}

\textbf{Second}, by Law of Large Numbers and Continuous Mapping Theorem, Assumptions \ref{assumption: moments a} implies that
\begin{equation*}
    \left(\frac{1}{n}\sum_{i=1}^n \dot{D}_i^2\right)^{-1} \stackrel{p}{\to}\mathbb{E}[\dot{D}^2]^{-1}.
\end{equation*}
Despite the fact that $\dot{D}_i=D_i-n^{-1}\sum_{i=1}^n D_i$ involves the sample mean, making $\{\dot{D}_i\}_{i=1}^n$ not independent across $i$ and failing the assumption for Law of Large Numbers, the above convergence result still holds, because $\dot{D}_i-\dot{D}_i^\circ=o_p(1)$, where $\dot{D}_i^\circ=D_i-\mathbb{E}[D]$ and $\{\dot{D}_i^\circ\}_{i=1}^n$ are independent across $i$. The same logic applies to the Central Limit Theorem imposed on demeaned variables below.

We also know $\mathbb{E}[\dot{D}\ddot{\varepsilon}(t)]=0$. To apply Central Limit Theorem on the term $\frac{1}{\sqrt{n}}\sum_{i=1}^n \dot{D}_i \ddot{\varepsilon}_i(t)$, we need to ensure the variance of $\dot{D} \ddot{\varepsilon}(t)$ is finite. It is noted that, by Cauchy-Schwarz inequality, we have
$
    Var[\dot{D} \ddot{\varepsilon}(t)]=\mathbb{E}[\dot{D}^2 \ddot{\varepsilon}^2(t)]\le \sqrt{\mathbb{E}[\dot{D}^4] \mathbb{E}[\ddot{\varepsilon}^4(t)]}.
$
It is easy to show that $\mathbb{E}[\dot{D}^4]<\infty$ is implied by $\mathbb{E}[D^4]<\infty$, and $\mathbb{E}[\ddot{\varepsilon}^4(t)]<\infty$ is implied by $\mathbb{E}[D^4]<\infty$ and $\mathbb{E}[Y^4(t)]<\infty$. Hence, by Central Limit Theorem, we can conclude that
\begin{equation*}
    \frac{1}{\sqrt{n}}\sum_{i=1}^n \dot{D}_i \ddot{\varepsilon}_i(t) \stackrel{d}{\to} \mathcal{N}(0, \mathbb{E}[\dot{D}^2\ddot{\varepsilon}^2(t)])
\end{equation*}
for each $t\in[-T_{pre},T_{post}]$. Then, by Slutzky's Theorem, we have
\begin{equation*}
    \sqrt{n}(\widehat{\gamma}_n(t)-\gamma(t)) \stackrel{d}{\to} \mathcal{N}(0,C_{\gamma}(t,t))
\end{equation*}
pointwise for each $t\in[-T_{pre},T_{post}]$, where $C_{\gamma}(t,t)=\mathbb{E}[\dot{D}^2\ddot{\varepsilon}^2(t)]\mathbb{E}[\dot{D}^2]^{-2}$. 

\textbf{Third}, based on the asymptotic distribution of $\widehat{\gamma}_n(t)$, we could derive the asymptotic distribution of $\widehat{\beta}_n(t)=\widehat{\gamma}_n(t)-\widehat{\gamma}_n(-1)$ by Continuous Mapping Theorem. We have
\begin{equation*}
    \sqrt{n}(\widehat{\beta}_n(t)-\beta(t)) \stackrel{d}{\to} \mathcal{N}(0, C_{\beta}(t,t))
\end{equation*}
pointwise for each $t\in[-T_{pre},T_{post}]$, where 
$
    C_{\beta}(t,t)
    =C_{\gamma}(t,t)+C_{\gamma}(-1,-1)-2C_{\gamma}(t,-1)
    =\mathbb{E}[\dot{D}^2(\ddot{\varepsilon}(t)-\ddot{\varepsilon}(-1))^2]\mathbb{E}[\dot{D}^2]^{-2}
    =\mathbb{E}[\dot{D}^2(\dot{\varepsilon}(t)-\dot{\varepsilon}(-1))^2]\mathbb{E}[\dot{D}^2]^{-2}
$
with $\dot{\varepsilon}(t)=\varepsilon(t)-\mathbb{E}[\varepsilon(t)]$. This completes the proof.
\end{proof}

%%%%%%%%%%%%%%%%%%%%%%%%%%%%%%%%%%%%%%%%%%%%%%%%%%%%%%%%%%%%%%%%%%
\subsection{Proof of Theorem \ref{thm:gauss}: Uniform Asymptotic Normality of the Oracle Estimator \eqref{eq:BetaHat}}\label{app:gaussian_process}
%%%%%%%%%%%%%%%%%%%%%%%%%%%%%%%%%%%%%%%%%%%%%%%%%%%%%%%%%%%%%%%%%%
We first show the proof of Theorem \ref{thm:gauss_general} which has a more generic notation, and then we argue that Theorem \ref{thm:gauss_general} implies Theorem \ref{thm:gauss}.

\spacingset{1.2}
\begin{theorem}\label{thm:gauss_general}
    Let $\{X_i\}_{i=1}^n \stackrel{\text{i.i.d.}}{\sim} X$ with $X \in C^2[-1,1]$ almost surely. $X$ is $p \; (\ge1)$ dimensional vector of functions. Assume $0<\mathbb{E}[\sup_{t \in (-1,1)} X^{\prime}(t)^2]< \infty$. Let $Z_n=\frac{1}{\sqrt{n}}\sum_{i=1}^n \widetilde{X}_i$, where $\widetilde{X}_i=X_i-\mathbb{E}[X]$. If for each $t\in [-1,1]$, $Z_n(t) \stackrel{d}{\to} \mathcal{N}_p(0, \mathbb{E}[\widetilde{X}(t)^2])$. Then, we have 
    \begin{equation*}
    Z_n \stackrel{d}{\to} \mathcal{GP}(0, C_Z),
    \end{equation*}
    where the covariance function $C_Z(s,t)=\mathbb{E}[\widetilde{X}(s) \widetilde{X}(t)]$ for $s,t \in[-1,1]$.
\end{theorem}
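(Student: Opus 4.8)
The plan is to establish weak convergence of $Z_n$ in the Banach space $C([-1,1],\mathbb{R}^p)$ via the two classical ingredients for weak convergence in $C$: convergence of all finite-dimensional distributions, and asymptotic tightness (equicontinuity). Together these are necessary and sufficient, so the argument splits cleanly into a routine part and a substantive part.

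First I would dispose of the finite-dimensional distributions. Fixing arbitrary points $t_1,\dots,t_k \in [-1,1]$, I would stack them into the i.i.d.\ mean-zero random vectors $(\widetilde{X}_i(t_1),\dots,\widetilde{X}_i(t_k))$, whose second moments are finite at each point (this finiteness is already implicit in the pointwise-CLT hypothesis, whose stated limiting variance is $\mathbb{E}[\widetilde{X}(t)^2]$). The multivariate Lindeberg--L\'evy CLT combined with the Cram\'er--Wold device then gives $(Z_n(t_1),\dots,Z_n(t_k)) \stackrel{d}{\to} \mathcal{N}_{pk}(0,\Sigma)$ with $\Sigma_{jl}=\mathbb{E}[\widetilde{X}(t_j)\widetilde{X}(t_l)^{\top}]=C_Z(t_j,t_l)$, which simultaneously identifies the covariance of the limit and matches the claimed $C_Z$.

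The substance of the proof, and the step I expect to be the main obstacle, is tightness. Here I would use the smoothness hypothesis quantitatively. Since $X\in C^2[-1,1]$ almost surely, each centered path obeys $\widetilde{X}_i(s)-\widetilde{X}_i(t)=\int_t^s \widetilde{X}_i'(u)\,\mathrm{d}u$, so Cauchy--Schwarz yields $\|\widetilde{X}_i(s)-\widetilde{X}_i(t)\|^2 \le |s-t|^2\,\sup_{u}\|\widetilde{X}_i'(u)\|^2$. Because the summands of $Z_n(s)-Z_n(t)$ are i.i.d.\ and mean zero, the cross terms cancel and $\mathbb{E}\|Z_n(s)-Z_n(t)\|^2 = \mathbb{E}\|\widetilde{X}(s)-\widetilde{X}(t)\|^2 \le C\,|s-t|^2$, where $C:=\mathbb{E}[\sup_u\|\widetilde{X}'(u)\|^2]<\infty$; the finiteness of $C$ follows from $\mathbb{E}[\sup_u X'(u)^2]<\infty$ together with the boundedness of the continuous mean path $u\mapsto \mathbb{E}[X'(u)]$ on the compact interval, differentiation under the expectation being justified by the same integrability. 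This increment bound is uniform in $n$ and has exponent $2>1$, so a standard Kolmogorov-type moment criterion for tightness in $C$ applies; combined with tightness at a single point (immediate from the pointwise CLT) it delivers tightness of $\{Z_n\}$ in $C([-1,1],\mathbb{R}^p)$. I would stress that it is the variance structure, rather than a sum-of-absolute-values bound, that captures the cancellation in the sum: the naive pathwise estimate $\sup_u\|Z_n'(u)\|\le n^{-1/2}\sum_i \sup_u\|\widetilde{X}_i'(u)\|$ blows up, so the moment-increment route is essential.

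Finally, combining finite-dimensional convergence with tightness yields $Z_n \stackrel{d}{\to} \mathcal{GP}(0,C_Z)$ in $C([-1,1],\mathbb{R}^p)$ with $C_Z(s,t)=\mathbb{E}[\widetilde{X}(s)\widetilde{X}(t)^{\top}]$, completing the argument. I would also note that only $C^1$ regularity together with a square-integrable uniform derivative bound is actually invoked, so the stated $C^2$ hypothesis is more than sufficient for this functional central limit theorem.
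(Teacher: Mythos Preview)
Your proposal is correct and follows the same overall architecture as the paper---finite-dimensional convergence plus tightness, then combine via the standard Prokhorov-type result---but your tightness argument takes a genuinely different route. You establish the uniform-in-$n$ second-moment increment bound $\mathbb{E}\|Z_n(s)-Z_n(t)\|^2 \le C|s-t|^2$ and invoke a Kolmogorov-type moment criterion directly. The paper instead feeds the same increment bound $\mathbb{E}[(\widetilde{X}(t)-\widetilde{X}(s))^2]\le K|s-t|^2$ into Theorem~2.3 of Hahn (1977) to obtain a \emph{pathwise} modulus-of-continuity inequality $|\widetilde{X}(t)-\widetilde{X}(s)|\le A\cdot\phi(|t-s|)$ with a random Lipschitz-type constant $A$ of bounded variance, sums this over $i$ to get stochastic equicontinuity of $Z_n$, and then cites Billingsley's Theorem~7.2 for tightness.

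Your route is more self-contained and avoids the Hahn machinery; in particular, you sidestep the delicate step in the paper's argument of showing that $A_n=n^{-1/2}\sum_i A_i$ is $O_p(1)$. What the paper's detour through Hahn's theorem buys is a reusable pathwise bound: the inequality $|\widetilde{X}(t)-\widetilde{X}(s)|\le A\phi(|t-s|)$ is invoked again later in the proof of uniform consistency of the empirical covariance (their Theorem~\ref{thm:UniformConvergenceKernel}), where a pathwise Lipschitz-type control is needed to show strong stochastic equicontinuity of $\widehat{C}_{Z,n}$. Your moment-based argument is cleaner for the present theorem but would not directly supply that ingredient downstream. You are also more careful than the paper in noting that full finite-dimensional convergence (not just one-point marginals) is required and in filling it in via Cram\'er--Wold.
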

\spacingset{1.5}

\begin{proof}
    \textbf{First}, by assumption, we also have $\widetilde{X}\in C^2[-1,1]$ almost surely. By Mean Value Theorem, there exists $\xi \in (s,t)$ with $s,t\in[-1,1]$ such that
    \begin{equation}\label{eq:mvt}
    \widetilde{X}(t)-\widetilde{X}(s)=\widetilde{X}^{\prime}(\xi)\cdot(t-s),
    \end{equation}
    where $\widetilde{X}^\prime$ is the derivative of $\widetilde{X}$. By taking squares on both sides of \eqref{eq:mvt}, we have
    \begin{equation}\label{eq:squared_mvt}
    (\widetilde{X}(t)-\widetilde{X}(s))^2=\widetilde{X}^{\prime}(\xi)^2 \cdot (t-s)^2\le \sup_{\xi \in (s,t)}\widetilde{X}^{\prime}(\xi)^2 \cdot (t-s)^2.
    \end{equation}
    By taking expectation on both sides of \eqref{eq:squared_mvt} w.r.t $\widetilde{X}$, we have
    \begin{align*}
    \begin{split}
    \mathbb{E}\left[(\widetilde{X}(t)-\widetilde{X}(s))^2\right] 
    &\le \mathbb{E}\left[ \sup_{\xi \in (s,t)}\widetilde{X}^{\prime}(\xi)^2 \cdot (t-s)^2 \right]\\
    &= \mathbb{E}\left[ \sup_{\xi \in (s,t)}\widetilde{X}^{\prime}(\xi)^2\right]\cdot(t-s)^2\\
    &:= K \cdot (t-s)^2\\
    &:= f(|t-s|),
    \end{split}
    \end{align*}
    where $K$ is a finite constant. We know that $\mathbb{E}[ \sup_{\xi \in (s,t)}\widetilde{X}^{\prime}(\xi)^2]$ is finite from the assumption that $\mathbb{E}[\sup_{t \in (-1,1)} X^{\prime}(t)^2]$ is finite. Denote $y=|t-s|$, and then we have $f(y)=Ky^2$. Since $y=|t-s| \le 2$ for $t,s \in [-1,1]$, let us define $f(y)=0$ for $y>2$. We find that $f$ is a non-decreasing function around zero. Also, we have
    \begin{align*}
        \int_0^{\infty} y^{-3/2} \sqrt{f(y)} \, dy = \sqrt{K}\int_0^2 y^{-3/2}y \, dy=2\sqrt{2K} < \infty.
    \end{align*}
    Then, by Theorem 2.3 in \citetappendix{Hahn_1977}, $\widetilde{X}$ is mean-square continuous in a sense that
    \begin{align}\label{eq:ms_cont}
        \left|\widetilde{X}(t)-\widetilde{X}(s)\right| \le A \cdot \phi(|t-s|),
    \end{align}
    where $\phi$ is a non-decreasing, continuous function that only depends on $f$ and $\phi(0)=0$. $A$ is a random variable with bounded variance $Var(A)=\sigma^2_A(\phi, f)<\infty$. 
    
    \textbf{Second}, we have
    \begingroup
    \allowdisplaybreaks
    \begin{align*}
        \left|Z_n(t)-Z_n(s)\right|
        & \le \frac{1}{\sqrt{n}}\sum_{i=1}^n \left|\widetilde{X}_i(t)-\widetilde{X}_i(s)\right| \\
        & \le \frac{1}{\sqrt{n}} \sum_{i=1}^n A_i \cdot \phi(|t-s|)\\
        & := A_n\cdot \phi(|t-s|),
    \end{align*}
    \endgroup
    where $A_n$ is a random variable with bounded variance $Var(A_n)=\frac{1}{n}\cdot nVar(A)=\sigma_A^2(\phi, f)<\infty$. The mean of $A_n$ should also be bounded. Hence, it is easy to show $\mathbb{E}[A_n^2]=Var(A_n)+\mathbb{E}[A_n]^2<\infty$ is bounded. By generalized Markov inequality, we have, for any $M>0$,
    \begin{equation*}
        P(|A_n|\ge M)\le \frac{\mathbb{E}[A_n^2]}{M^2} < \infty.
    \end{equation*}
    By replacing $\mathbb{E}[A_n^2]/M^2$ with $\epsilon$, we can have, for any $\epsilon>0$, there exists an $M_{\epsilon}=\sqrt{\mathbb{E}[A_n^2]/\epsilon}<\infty$ such that
    \begin{equation*}
        P(|A_n|\ge M_{\epsilon})\le \epsilon.
    \end{equation*}
    This implies $A_n=O_p(1)$. Therefore, $Z_n$ is equicontinuous. By assumption, $Z_n(t)=O_p(1)$ for any $t\in[-1,1]$. Then, by Theorem 7.2 in \citetappendix{billingsley_1999}, we know that $Z_n$ is tight. 
    
    \textbf{Third}, by Theorem 7.1 in \citetappendix{billingsley_1999}, the tightness of $Z_n$ and its pointwise normality imply that it is asymptotically a mean-zero Gaussian process with covariance function $C_Z(s,t)=\mathbb{E}[\widetilde{X}(s), \widetilde{X}(t)]$. This completes the proof.
\end{proof}

\noindent \centerline{ \textbf{\MakeUppercase{Proof of Theorem \ref{thm:gauss}}}}
\begin{proof}
    \textbf{First}, in line with the notation of Theorem \ref{thm:gauss_general}, let us denote $X_i(t):=\dot{D}_i(\ddot{\varepsilon}_i(t)-\ddot{\varepsilon}_i(-1))=\dot{D}_i(\dot{\varepsilon}_i(t)-\dot{\varepsilon}_i(-1))$ for $t\in[-T_{pre},-1]\cup[0,T_{post}]$. We can then denote
    $$
    Z_n(t)
    = \frac{1}{\sqrt{n}}\sum_{i=1}^n \left(X_i(t)-\mathbb{E}[X(t)] \right)
    = \frac{1}{\sqrt{n}}\sum_{i=1}^n \dot{D}_i(\dot{\varepsilon}_i(t)-\dot{\varepsilon}_i(-1)).
    $$
    It is easy to show that for each $t\in[-T_{pre},-1]\cup[0,T_{post}]$,
    $
        Z_n(t) \stackrel{d}{\to} \mathcal{N}(0, \mathbb{E}[\dot{D}^2(\dot{\varepsilon}(t)-\dot{\varepsilon}(-1) )^2])
    $ with $\dot{\varepsilon}(t)=\varepsilon(t)-\mathbb{E}[\varepsilon(t)]$. Despite the fact that $\dot{D}_i$ and $\dot{\varepsilon}_i(t)$ involve the sample mean, making $\{\dot{D}_i(\dot{\varepsilon}_i(t)-\dot{\varepsilon}_i(-1))\}_{i=1}^n$ not independent across $i$ and failing the assumption for Central Limit Theorem, the convergence result still holds, because $\dot{D}_i-\dot{D}_i^\circ=o_p(1)$ and $\sup_{t\in[-T_{pre},-1]\cup[0,T_{post}]}|\dot{\varepsilon}_i(t)-\dot{\varepsilon}_i^\circ(t)|=o_p(1)$, where $\dot{D}_i^\circ=D_i-\mathbb{E}[D]$, $\dot{\varepsilon}_i^\circ(t)=\varepsilon_i(t)-\mathbb{E}[\varepsilon(t)]$ and $\{\dot{D}_i^\circ(\dot{\varepsilon}_i^\circ(t)-\dot{\varepsilon}_i^\circ(-1))\}_{i=1}^n$ are independent across $i$.
    
    \textbf{Second}, Assumptions \ref{assumption: moments b}, \ref{assumption: smooth a} and \ref{assumption: smooth b} would imply that $\mathbb{E}[\sup_{t \in (-T_{pre},-1)\cup(0,T_{post})} X'(t)^2] < \infty$ and $X\in C^2([-T_{pre},-1]\cup[0,T_{post}])$. Hence, by Theorem \ref{thm:gauss_general}, we can show
    $
        Z_n \stackrel{d}{\to} \mathcal{GP}(0,C_Z),
    $
    where $C_Z=\{C_Z(s,t): s,t \in [-T_{pre},-1]\cup[0,T_{post}]\}$ and $C_Z(s,t)=\mathbb{E}[\dot{D}^2 (\dot{\varepsilon}(s)-\dot{\varepsilon}(-1))(\dot{\varepsilon}(t)-\dot{\varepsilon}(-1))]$.

    \textbf{Third}, to derive the asymptotic stochastic process of $\widehat{\beta}_n(t)$, we write 
    \begin{equation*}
        \sqrt{n}(\widehat{\beta}_n(t)-\beta(t))=\left(\frac{1}{n}\sum_{i=1}^n \dot{D}_i^2\right)^{-1}\left(\frac{1}{\sqrt{n}}\sum_{i=1}^n \dot{D}_i (\dot{\varepsilon}_i(t)-\dot{\varepsilon}_i(-1))\right)=\left(\frac{1}{n}\sum_{i=1}^n \dot{D}_i^2\right)^{-1} Z_n(t)
    \end{equation*}
    for all $t\in[-T_{pre},-1]\cup[0,T_{post}]$. In the proof of Theorem \ref{thm:pw_normality}, we have shown that
    $
        \left(\frac{1}{n}\sum_{i=1}^n \dot{D}_i^2\right)^{-1} \stackrel{p}{\to}\mathbb{E}[\dot{D}^2]^{-1}.
    $
    By Slutzky's Theorem, we have
    \begin{equation*}
        \sqrt{n}(\widehat{\beta}_n-\beta) \stackrel{d}{\to} \mathcal{GP}(0,C_{\beta}),
    \end{equation*}
    where $C_{\beta}=\{ C_{\beta}(s,t): s,t \in [-T_{pre},-1]\cup[0,T_{post}] \}$ and $C_{\beta}(s, t) =\mathbb{E}[\dot{D}^2(\dot{\varepsilon}(s)-\dot{\varepsilon}(-1))(\dot{\varepsilon}(t)-\dot{\varepsilon}(-1))]\mathbb{E}[\dot{D}^2]^{-2}$. This completes the proof. 
\end{proof}

%%%%%%%%%%%%%%%%%%%%%%%%%%%%%%%%%%%%%%%%%%%%%%%%%%%%%%%%%%%%%%%%%%
\subsection{Proof of Theorem \ref{thm:consistency_cov}: Uniform Consistency of Empirical Covariance}\label{app:consistency_cov}
%%%%%%%%%%%%%%%%%%%%%%%%%%%%%%%%%%%%%%%%%%%%%%%%%%%%%%%%%%%%%%%%%%

We first show the proof of Theorem \ref{thm:UniformConvergenceKernel} which has a more generic notation, and then we argue that Theorem \ref{thm:UniformConvergenceKernel} implies Theorem \ref{thm:consistency_cov}.

\spacingset{1.2}
\begin{theorem}\label{thm:UniformConvergenceKernel}
    Let $\{X_i\}_{i=1}^n \stackrel{\text{i.i.d.}}{\sim} X$ with $X \in C^2[-1,1]$ almost surely. $X$ is $p$ ($\ge 1$) dimensional vector of functions. Let $Z_n=\frac{1}{\sqrt{n}}\sum_{i=1}^n \widetilde{X}_i$, where $\widetilde{X}_i=X_i-\mathbb{E}[X]$. Assume $Z_n$ is asymptotically a mean-zero Gaussian process with covariance function $C_Z(s,t)=\mathbb{E}[\widetilde{X}(s)\widetilde{X}(t)]$ for $s,t \in [-1,1]$. Also, assume $\mathbb{E}[\sup_{t \in [-1, 1]} X(t)^2] < \infty$ and $\mathbb{E}[\sup_{t \in (-1, 1)} X'(t)^2] < \infty$. Let $\widehat{C}_{Z,n}(s, t) = n^{-1}\sum_{i = 1}^n \widetilde{X}_i(s) \widetilde{X}_i(t)$ denote the empirical covariance function. Then, we have 
    \begin{align*}
        \sup_{s,t\in[-1,1]}\left| \widehat{C}_{Z,n}(s,t) - C_Z(s,t) \right| &\overset{a.s.}{\to} 0 \, .
    \end{align*}
\end{theorem}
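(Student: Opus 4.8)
The plan is to prove a uniform strong law of large numbers for the empirical second-moment surface $\widehat{C}_{Z,n}(s,t) = \frac{1}{n}\sum_{i=1}^n \widetilde{X}_i(s)\widetilde{X}_i(t)$ on the compact index set $[-1,1]^2$. The argument rests on three ingredients: pointwise almost-sure convergence, a common (random) Lipschitz modulus that renders the family $\{\widehat{C}_{Z,n}\}$ stochastically equicontinuous, and a finite-net compactness argument that glues these together.

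First I would fix $(s,t)\in[-1,1]^2$ and observe that the $\widetilde{X}_i(s)\widetilde{X}_i(t)$ are i.i.d.\ real random variables with $\mathbb{E}|\widetilde{X}(s)\widetilde{X}(t)| \le \mathbb{E}[\sup_{r}\widetilde{X}(r)^2] < \infty$ by Cauchy--Schwarz and the moment bound $\mathbb{E}[\sup_{r\in[-1,1]}X(r)^2]<\infty$ (which transfers to $\widetilde{X}=X-\mathbb{E}[X]$ since $\mathbb{E}[X]$ is bounded and continuous). Kolmogorov's SLLN then gives $\widehat{C}_{Z,n}(s,t)\overset{a.s.}{\to} C_Z(s,t)$ for each fixed $(s,t)$.

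Second, and this is the crux, I would establish a uniform Lipschitz bound. Writing
$$\widetilde{X}_i(s)\widetilde{X}_i(t)-\widetilde{X}_i(s')\widetilde{X}_i(t') = \widetilde{X}_i(s)\big(\widetilde{X}_i(t)-\widetilde{X}_i(t')\big)+\widetilde{X}_i(t')\big(\widetilde{X}_i(s)-\widetilde{X}_i(s')\big)$$
and applying the Mean Value Theorem to each increment (using $X\in C^2[-1,1]$ a.s.), one obtains
$$\big|\widehat{C}_{Z,n}(s,t)-\widehat{C}_{Z,n}(s',t')\big| \le M_n\big(|s-s'|+|t-t'|\big),\quad M_n:=\frac{1}{n}\sum_{i=1}^n \|\widetilde{X}_i\|_\infty\, L_i,$$
where $\|\widetilde{X}_i\|_\infty=\sup_{r}|\widetilde{X}_i(r)|$ and $L_i=\sup_{\xi}|\widetilde{X}_i'(\xi)|$. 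Cauchy--Schwarz gives $\mathbb{E}[\|\widetilde{X}\|_\infty L]\le (\mathbb{E}[\|\widetilde{X}\|_\infty^2]\,\mathbb{E}[L^2])^{1/2}<\infty$ from the two moment assumptions, so the SLLN yields $M_n\overset{a.s.}{\to} M:=\mathbb{E}[\|\widetilde{X}\|_\infty L]<\infty$; in particular $M_n$ is eventually bounded. The identical computation with $\mathbb{E}[\cdot]$ in place of $\frac1n\sum$ shows the limit $C_Z$ is itself Lipschitz with constant $M$, hence continuous. Finally I would combine these pieces by a standard compactness step: fix $\delta>0$, cover $[-1,1]^2$ by finitely many $\delta$-dense points $(s_j,t_j)_{j=1}^N$, apply the pointwise SLLN at these finitely many points (a finite intersection of full-measure events is full measure), and split
$$\big|\widehat{C}_{Z,n}(s,t)-C_Z(s,t)\big| \le \big|\widehat{C}_{Z,n}(s,t)-\widehat{C}_{Z,n}(s_j,t_j)\big|+\big|\widehat{C}_{Z,n}(s_j,t_j)-C_Z(s_j,t_j)\big|+\big|C_Z(s_j,t_j)-C_Z(s,t)\big|$$
for the nearest grid point; the outer two terms are bounded by $(M_n+M)\,2\delta$ and the middle term vanishes almost surely. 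Letting $n\to\infty$ and then $\delta\to 0$ delivers $\sup_{s,t}|\widehat{C}_{Z,n}(s,t)-C_Z(s,t)|\overset{a.s.}{\to}0$.

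The main obstacle is the second step: one must control the \emph{random} Lipschitz modulus $M_n$ uniformly in $n$ and arrange that all the almost-sure statements (the pointwise SLLN at each grid point together with $M_n\to M$) hold simultaneously on a single full-measure event. This is precisely where the $C^2$-smoothness and the derivative moment condition $\mathbb{E}[\sup_{\xi}X'(\xi)^2]<\infty$ enter; without them the family $\{\widehat{C}_{Z,n}\}$ need not be equicontinuous and the pointwise convergence could not be upgraded to uniform convergence.
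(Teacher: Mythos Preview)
Your proof is correct and follows the same high-level strategy as the paper: establish stochastic equicontinuity of $\widehat{C}_{Z,n}$ on $[-1,1]^2$, combine with pointwise SLLN, and upgrade to uniform convergence. The execution differs in two places. For the modulus, the paper reuses the Hahn (1977) bound $|\widetilde{X}_i(s)-\widetilde{X}_i(u)|\le A_i\,\phi(|s-u|)$ established in the proof of Theorem~\ref{thm:gauss_general}, then applies Cauchy--Schwarz to the \emph{sums} to obtain a product-form modulus $B_n=(\tfrac{1}{n}\sum_i\sup_t\widetilde{X}_i(t)^2)^{1/2}(\tfrac{1}{n}\sum_i A_i^2)^{1/2}$; you instead apply the Mean Value Theorem directly to get a Lipschitz constant $L_i=\sup_\xi|\widetilde{X}_i'(\xi)|$ and bound $M_n=\tfrac{1}{n}\sum_i\|\widetilde{X}_i\|_\infty L_i$ via Cauchy--Schwarz on the \emph{expectation}. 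For the final step, the paper invokes the abstract uniform-LLN machinery of \citetappendix{davidson_2021} (Theorems~22.8 and~22.10), whereas you spell out the $\delta$-net argument by hand. Your route is more elementary and self-contained, avoiding both external references; the paper's route recycles the Hahn modulus already needed for the functional CLT, so it is slightly shorter given that lemma is in hand.
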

\spacingset{1.5}

\begin{proof}
    \textbf{First}, we need to show that the empirical covariance $\widehat{C}_{Z,n}$ is strongly stochastically equicontinuous, i.e.
    \begin{equation*}
        \left|\widehat{C}_{Z,n}(s, t) - \widehat{C}_{Z,n}(u, v)\right| \leq B_n h\left(\sqrt{(s - u)^2 + (t -
        v)^2}\right)
    \end{equation*}
    almost surely for all $s, t, u, v \in [-1, 1]$ with $h(x) \downarrow 0$ as $x \downarrow 0$, and $B_n$ is a positive stochastic sequence independent of $s, t, u, v$. To show it, pick $s, t, u, v \in [-1, 1]$ arbitrary. We have
    \begin{align*}
    \begin{split}
        &\left|\frac{1}{n} \sum_{i = 1}^n \widetilde{X}_i(s) \widetilde{X}_i(t) - \frac{1}{n} \sum_{i = 1}^n \widetilde{X}_i(u) \widetilde{X}_i(v)\right| \\
        =& \left|\frac{1}{n} \sum_{i = 1}^n (\widetilde{X}_i(s) - \widetilde{X}_i(u)) \widetilde{X}_i(t) +
        \frac{1}{n} \sum_{i = 1}^n (\widetilde{X}_i(t) - \widetilde{X}_i(v)) \widetilde{X}_i(u)\right| \\
        \leq& \underbrace{\frac{1}{n} \sum_{i = 1}^n \left|\widetilde{X}_i(s) - \widetilde{X}_i(u)\right| \left|\widetilde{X}_i(t)\right|}_{:=I_1} +
        \underbrace{\frac{1}{n} \sum_{i = 1}^n \left|\widetilde{X}_i(t) - \widetilde{X}_i(v)\right| \left|\widetilde{X}_i(u)\right|}_{:=I_2} \,.
    \end{split}
    \end{align*}
    The third line is by triangle inequality. Considering $I_1$, we have
    \begingroup
    \allowdisplaybreaks
    \begin{align*}
        I_1 &\leq \left(\frac{1}{n} \sum_{i = 1}^n \left|\widetilde{X}_i(s) - \widetilde{X}_i(u)\right|^2\right)^{1/2}
        \left(\frac{1}{n} \sum_{i = 1}^n \left|\widetilde{X}_i(t)\right|^2\right)^{1/2} \\
        &\leq \left(\frac{1}{n} \sum_{i = 1}^n \left|\widetilde{X}_i(s) - \widetilde{X}_i(u)\right|^2\right)^{1/2} \left(\frac{1}{n}
        \sum_{i = 1}^n \sup_{t \in [-1, 1]} \left|\widetilde{X}_i(t)\right|^2\right)^{1/2} \\
        &= \left(\frac{1}{n} \sum_{i = 1}^n \sup_{t \in [-1, 1]} \widetilde{X}_i(t)^2\right)^{1/2}
        \left(\frac{1}{n}\sum_{i = 1}^n \left|\widetilde{X}_i(s) - \widetilde{X}_i(u)\right|^2\right)^{1/2} \\
        &\leq \left(\frac{1}{n} \sum_{i = 1}^n \sup_{t \in [-1, 1]} \widetilde{X}_i(t)^2\right)^{1/2}
        \left(\frac{1}{n}\sum_{i = 1}^n A_i^2 \phi^2(|s - u|)\right)^{1/2} \\
        &= \underbrace{\left(\frac{1}{n} \sum_{i = 1}^n \sup_{t \in [-1, 1]} \widetilde{X}_i(t)^2\right)^{1/2}
        \left(\frac{1}{n}\sum_{i = 1}^n A_i^2 \right)^{1/2}}_{:=B_n} \phi(|s - u|) \,.
    \end{align*}
    \endgroup
    The first line is by Cauchy-Schwarz inequality, and the fourth line is by \eqref{eq:ms_cont} from the proof of Theorem \ref{thm:gauss_general}. Considering $I_2$, we use the same calculation and have
    \begin{equation*}
        I_2 \le \underbrace{\left(\frac{1}{n} \sum_{i = 1}^n \sup_{s \in [-1, 1]} \widetilde{X}_i(s)^2\right)^{1/2}
        \left(\frac{1}{n}\sum_{i = 1}^n A_i^2 \right)^{1/2}}_{:=B_n} \phi(|t - v|).
    \end{equation*}
    Since $\phi$ is a non-decreasing function, we have $\phi(|s - u|) + \phi(|t - v|) \le 2\phi(|s - u| + |t - v|)$. Then, we have
    \begin{align*}
    \begin{split}
        I_1 + I_2 &\leq B_n \left(\phi(|s - u|) + \phi(|t - v|)\right) \\
        &\leq B_n 2 \phi(|s - u| + |t - v|) \\
        &\leq B_n 2 \phi\left(\sqrt{2\left(|s - u|^2 + |t - v|^2\right)}\right).
    \end{split}
    \end{align*}
    The third line is by $\ell_1-\ell_2$-norm inequality. Let $h(x) := 2\phi(\sqrt{2}x)$. Since $\phi$ is continuous, non-decreasing and $\phi(0) = 0$, we have $h(x) \downarrow 0$ as $x \downarrow 0$. Thus, we can write
    \begin{equation*}
        \left|\widehat{C}_{Z,n}(s, t) - \widehat{C}_{Z,n}(u, v)\right| \leq B_n h\left(\sqrt{|s - u|^2 + |t - v|^2}\right).
    \end{equation*}
    By Theorem 22.10 in \citetappendix{davidson_2021}, to show the strongly stochastically equicontinuity of $\widehat{C}_{Z,n}$, we also need $\limsup_n B_n < \infty$ almost surely. We show it by proofing that $B_n$ converges almost surely to a finite value. Recall that $\{A_i\}_{i=1}^n$ are i.i.d.~and have finite second moments, as shown in Theorem \ref{thm:gauss_general}. Thus by the Strong Law of Large Numbers and Continuous Mapping Theorem, we know 
    \begin{equation*}
        \left(\frac{1}{n} \sum_{i=1}^n A_i^2\right)^{1/2} \overset{a.s.}{\to}
        \mathbb{E}[A^2]^{1/2} < \infty \,.
    \end{equation*}
    Similarly, as $\{X_i\}_{i=1}^n$ are i.i.d.~and $\mathbb{E}[\sup_{t \in [-1, 1]}
    X(t)^2] < \infty$, we know
    \begin{equation*}
        \left(\frac{1}{n} \sum_{i=1}^n \sup_{t \in [-1, 1]} \widetilde{X}_i(t)^2\right)^{1/2} \overset{a.s.}{\to}
        \mathbb{E}[\sup_{t \in [-1, 1]} \widetilde{X}(t)^2]^{1/2} < \infty \,.
    \end{equation*}
    Thus, $B_n$ converges to a finite value almost surely, as desired.
    
    \textbf{Second}, with the strongly stochastically equicontinuity of empirical covariance $\widehat{C}_{Z,n}$, Theorem 22.8 in \citetappendix{davidson_2021} suggests 
    \begin{align*}
        \sup_{s,t\in[-1,1]}\left| \widehat{C}_{Z,n}(s,t) - C_{Z}(s,t) \right| &\overset{a.s.}{\to} 0 .
    \end{align*}
    This completes the proof.
\end{proof}

\noindent \centerline{ \textbf{\MakeUppercase{Proof of Theorem \ref{thm:consistency_cov}}}} 
\begin{proof}
    \textbf{First}, in line with the notation of Theorem \ref{thm:gauss_general} again, let us denote 
    $
    X_i(t):=\dot{D}_i(\dot{\varepsilon}_i(t)-\dot{\varepsilon}_i(-1))
    $ 
    and
    $
    Z_n(t) =\frac{1}{\sqrt{n}}\sum_{i=1}^n \left(X_i(t)-\mathbb{E}[X(t)] \right)= \frac{1}{\sqrt{n}}\sum_{i=1}^n \dot{D}_i(\dot{\varepsilon}_i(t)-\dot{\varepsilon}_i(-1))
    $
    for $t\in[-T_{pre},-1]\cup[0,T_{post}]$. We have shown 
    $
    Z_n \stackrel{d}{\to} \mathcal{GP}(0,C_Z),
    $
    where $C_Z=\{C_Z(s,t): s,t \in [-T_{pre},-1]\cup[0,T_{post}]\}$ and $C_Z(s,t)=\mathbb{E}[\dot{D}^2(\dot{\varepsilon}(s)-\dot{\varepsilon}(-1))(\dot{\varepsilon}(t)-\dot{\varepsilon}(-1))]$.

    \textbf{Second}, Assumptions \ref{assumption: moments c}, \ref{assumption: smooth a} and \ref{assumption: smooth b} would imply that $\mathbb{E}[\sup_{t \in [-T_{pre},-1]\cup[0,T_{post}]} X(t)^2] < \infty$ and $X\in C^2([-T_{pre},-1]\cup[0,T_{post}])$. Hence, by Theorem \ref{thm:UniformConvergenceKernel}, we can show 
    \begin{align*}
        \sup_{s,t\in[-T_{pre},-1]\cup[0,T_{post}]}\left| \widehat{C}_{Z,n}(s,t) - C_Z(s,t) \right| &\overset{a.s.}{\to} 0 \,,
    \end{align*}
    where $\widehat{C}_{Z,n}(s,t)= \frac{1}{n}\sum_{i=1}^n\dot{D}_i^2(\Delta_0 \dot{Y}_i(s))(\Delta_0 \dot{Y}_i(t))$ and $\Delta_0 \dot{Y}_i(t)=(\dot{Y}_i(t)-\dot{Y}_i(-1))-\widehat{\beta}_n(t)\dot{D}_i$.
     Despite the fact that $\dot{D}_i$ and $\dot{\varepsilon}_i(t)$ involve the sample mean, making $\{\dot{D}_i(\dot{\varepsilon}_i(t)-\dot{\varepsilon}_i(-1))\}_{i=1}^n$ not independent across $i$ and failing the assumption for Theorem \ref{thm:UniformConvergenceKernel}, the convergence result still holds, because $\dot{D}_i-\dot{D}_i^\circ=o_p(1)$ and $\sup_{t\in[-T_{pre},-1]\cup[0,T_{post}]}|\dot{\varepsilon}_i(t)-\dot{\varepsilon}_i^\circ(t)|=o_p(1)$, where $\dot{D}_i^\circ=D_i-\mathbb{E}[D]$, $\dot{\varepsilon}_i^\circ(t)=\varepsilon_i(t)-\mathbb{E}[\varepsilon(t)]$ and $\{\dot{D}_i^\circ(\dot{\varepsilon}_i^\circ(t)-\dot{\varepsilon}_i^\circ(-1))\}_{i=1}^n$ are independent across $i$.

    \textbf{Third}, in the proof of Theorem \ref{thm:pw_normality}, we have shown that
        $(n^{-1}\sum_{i=1}^n \dot{D}_i^2)^{-1} \stackrel{p}{\to}\mathbb{E}[\dot{D}^2]^{-1}.$
    By Continuous Mapping Theorem, we have 
    \begin{equation*}
        \sup_{s,t\in[-T_{pre},-1]\cup[0,T_{post}]}\left| \widehat{C}_{\beta,n}(s,t) - C_{\beta}(s,t) \right| \stackrel{a.s.}{\to} 0 ,
    \end{equation*}
    where $C_{\beta}(s,t)=\mathbb{E}[\dot{D}^2(\dot{\varepsilon}(s)-\dot{\varepsilon}(-1))(\dot{\varepsilon}(t)-\dot{\varepsilon}(-1))]\mathbb{E}[\dot{D}^2]^{-2}$ and $\widehat{C}_{\beta,n}$ is the sample analogue. This completes the proof.
\end{proof}

%%%%%%%%%%%%%%%%%%%%%%%%%%%%%%%%%%%%%%%%%%%%%%%%%%%%%%%%%%%%%%%%%%
\subsection{Proof of Theorem \ref{thm:SCBs}: Non-Coverage Probabilities of SCBs}\label{app:SCBs} 
%%%%%%%%%%%%%%%%%%%%%%%%%%%%%%%%%%%%%%%%%%%%%%%%%%%%%%%%%%%%%%%%%%

\begin{proof}

\textbf{First}, we need to prove the correct simultaneous non-coverage probability of $\widehat{\operatorname{SCB}}^{\sup}_{1-\alpha}$. We consider the following one-sided formulation at first.
\begingroup
\allowdisplaybreaks
\begin{align*}
\lim_{n\to\infty}\quad & P\left(u_{1-\alpha/2}^{\sup} < \sup_{t \in [0,T_{post}]} \frac{\widehat{\beta}_n(t)-\beta(t)}{\sqrt{\widehat{C}_{\beta,n}(t, t) /n}} \right) = \alpha/2\\
\Leftrightarrow\lim_{n\to\infty}\quad & P\left(u_{1-\alpha/2}^{\sup}<\frac{\widehat{\beta}_n(t)-\beta(t)}{\sqrt{\widehat{C}_{\beta,n}(t, t) /n}}\quad\text{for at least one}\quad t \in [0,T_{post}] \right) = \alpha/2\\
\Leftrightarrow\lim_{n\to\infty}\quad & P\left(\beta(t) < \widehat{\beta}_n(t) - u_{1-\alpha/2}^{\sup}\sqrt{\widehat{C}_{\beta,n}(t, t) /n} \quad\text{for at least one}\quad t \in [0,T_{post}] \right) = \alpha/2.
\end{align*}
\endgroup
Following parallel arguments as above, we also have
\begin{align*}
\lim_{n\to\infty}\quad & P\left(\widehat{\beta}_n(t) + u_{1-\alpha/2}^{\sup}\sqrt{\widehat{C}_{\beta,n}(t, t) /n}< \beta(t) \quad\text{for at least one}\quad t \in [0,T_{post}] \right) = \alpha/2,
\end{align*}
which implies the correct simultaneous non-coverage probability of $\widehat{\operatorname{SCB}}^{\sup}_{1-\alpha}$.

\textbf{Second}, we need to prove the correct simultaneous non-coverage probability of $\widehat{\operatorname{SCB}}^{\inf,+}_{1-\alpha}$:
\begingroup
\allowdisplaybreaks
\begin{align*}
\lim_{n\to\infty}\quad & P\left(\sup_{t \in [-T_{pre},-1]} \frac{\widehat{\beta}_n(t)-\beta(t)}{\sqrt{\widehat{C}_{\beta,n}(t, t) /n}} < u_{\alpha}^{\,\sup}\right) = \alpha\\
\Leftrightarrow\lim_{n\to\infty}\quad & P\left(\frac{\widehat{\beta}_n(t)-\beta(t)}{\sqrt{\widehat{C}_{\beta,n}(t, t) /n}}< u_{\alpha}^{\sup}\quad\text{for all}\quad t \in [-T_{pre},-1] \right) = \alpha\\
\Leftrightarrow\lim_{n\to\infty}\quad & P\left(\widehat{\beta}_n(t) - u_{\alpha}^{\sup}\sqrt{\widehat{C}_{\beta,n}(t, t) /n} < \beta(t) \quad\text{for all}\quad t \in [-T_{pre},-1] \right) = \alpha.\\
\intertext{Using that $u_{\alpha}^{\sup} = -u_{1-\alpha}^{\inf}$ yields:}
\Leftrightarrow\lim_{n\to\infty}\quad & P\left(\widehat{\beta}_n(t) + u_{1-\alpha}^{\inf}\sqrt{\widehat{C}_{\beta,n}(t, t) /n} < \beta(t) \quad\text{for all}\quad t \in [-T_{pre},-1] \right) = \alpha\\
\Leftrightarrow\lim_{n\to\infty}\quad & P\left(\beta(t)\not\in\widehat{\operatorname{SCB}}^{\inf,+}_{1-\alpha}(t)\;\text{for all}\; t \in [-T_{pre},-1] \right) = \alpha.
\end{align*}
\endgroup

\textbf{Third}, by symmetry, we also have
\begingroup
\allowdisplaybreaks
\begin{align*}
\lim_{n\to\infty}\quad & P\left( u_{1-\alpha}^{\inf} < \inf_{t \in [-T_{pre},-1]} \frac{\widehat{\beta}_n(t)-\beta(t)}{\sqrt{\widehat{C}_{\beta,n}(t, t) /n}} \right) = \alpha\\
\Leftrightarrow\lim_{n\to\infty}\quad & P\left(u_{1-\alpha}^{\inf} < \frac{\widehat{\beta}_n(t)-\beta(t)}{\sqrt{\widehat{C}_{\beta,n}(t, t) /n}} \quad\text{for all}\quad t \in [-T_{pre},-1] \right) = \alpha\\
\Leftrightarrow\lim_{n\to\infty}\quad & P\left(\beta(t) < \widehat{\beta}_n(t) - u_{1-\alpha}^{\inf}\sqrt{\widehat{C}_{\beta,n}(t, t) /n}  \quad\text{for all}\quad t \in [-T_{pre},-1] \right) = \alpha\\
\Leftrightarrow\lim_{n\to\infty}\quad & P\left(\beta(t)\not\in\widehat{\operatorname{SCB}}^{\inf,-}_{1-\alpha}(t)\;\text{for all}\; t \in [-T_{pre},-1] \right) = \alpha,
\end{align*}
\endgroup
which proves the correct simultaneous non-coverage probability of $\widehat{\operatorname{SCB}}^{\inf,-}_{1-\alpha}$. By Theorem \ref{thm:consistency_cov}, $\widehat{C}_{\beta,n}$ is uniformly consistent for $C_{\beta}$ such that we can replace $\widehat{C}_{\beta,n}$ by $C_{\beta}$ in the above arguments (Slutsky's Theorem). This completes the proof.
\end{proof}

%%%%%%%%%%%%%%%%%%%%%%%%%%%%%%%%%%%%%%%%%%%%%%%%%%%%%%%%%%%%%%%%%%
\subsection{Proof of Theorem \ref{thm:intrpl_consistency}: Uniform Consistency of Interpolation Estimator (\ref{eq:BetaHatHat})}\label{app:intrpl_error} 
%%%%%%%%%%%%%%%%%%%%%%%%%%%%%%%%%%%%%%%%%%%%%%%%%%%%%%%%%%%%%%%%%%

To show Theorem \ref{thm:intrpl_consistency}, we need Theorem \ref{thm:hormann}, which is a revised version of Lemma 1 in \citetappendix{Hormann_2022}.

\spacingset{1.2}
\begin{theorem}\label{thm:hormann}
    Define $\widehat{\beta}^*_n(t^*_j)$ as the estimates of unobservable parameters $\beta^*(t^*_j)$ at discrete time points $t^*_j \in [-1,1]$ for $j=1, \ldots, T$, and $\{\doublewidehat{\beta}^*_n(t^*), t^* \in [-1,1]\}$ as an interpolation on estimates $\widehat{\beta}^*_n(t^*_j)$. Let $\omega^{\beta^*}(\delta_T)=\sup_{s^*,t^*\in[-1,1]:~|s^*-t^*|\le\delta_T}|\beta^*(s^*)-\beta^*(t^*)|$ be the modulus of continuity of function $\beta^*$: $[-1,1] \to \mathbb{R}$, and $\delta_T=\max_{j\in\{1,\dots,T-1\}}|t^*_{j+1}-t^*_j|$, we have
    \begin{equation*}
        \sup_{t^* \in [-1,1]}|\doublewidehat{\beta}^*_n(t^*)-\beta^*(t^*)| \le \max_{j\in\{1,\dots,T\}} c_1~ |\beta^*(t^*_j)-\hat{\beta}^*_n(t^*_j)|+c_2 ~\omega^{\beta^*}(\delta_T),
    \end{equation*}
    where $c_1$ and $c_2$ are constants relative to interpolation. 
\end{theorem}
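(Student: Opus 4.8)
The plan is to exploit the linearity of the interpolation operator and to split the total error into a stochastic part, controlled by the estimation error at the knots, and a deterministic part, controlled by the smoothness of the target $\beta^*$. Since the statement is a restatement of Lemma~1 in \citetappendix{Hormann_2022} adapted to the rescaled domain $[-1,1]$ and to the natural cubic spline scheme actually used in \eqref{eq:BetaHatHat}, I would largely follow that argument, making the two constants $c_1,c_2$ explicit as interpolation-theoretic quantities.

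Let $\mathcal{I}$ denote the natural cubic spline interpolation operator on the grid $\{t^*_j\}_{j=1}^T\subset[-1,1]$, with cardinal basis functions $\ell_j$ so that $\mathcal{I}[f]=\sum_{j=1}^T f(t^*_j)\,\ell_j$. Because $\mathcal{I}$ is linear and $\doublewidehat{\beta}^*_n=\mathcal{I}\big[\{\widehat{\beta}^*_n(t^*_j)\}\big]$, writing $\mathcal{I}[\beta^*]$ for the interpolant of the true knot values gives the decomposition
\[
\doublewidehat{\beta}^*_n-\beta^* = \underbrace{\mathcal{I}\big[\{\widehat{\beta}^*_n(t^*_j)-\beta^*(t^*_j)\}\big]}_{\text{stochastic part}} + \underbrace{\big(\mathcal{I}[\beta^*]-\beta^*\big)}_{\text{interpolation bias}},
\]
from which the claim follows by the triangle inequality in the supremum norm.

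For the stochastic part I would use the bound $\|\mathcal{I}[e]\|_\infty\le\Lambda\,\max_j|e(t^*_j)|$ valid for any knot sequence $e$, where $\Lambda=\sup_{t^*}\sum_j|\ell_j(t^*)|$ is the Lebesgue constant of cubic spline interpolation; taking $e(t^*_j)=\widehat{\beta}^*_n(t^*_j)-\beta^*(t^*_j)$ yields the first term with $c_1=\Lambda$. For the interpolation bias I would use that $\mathcal{I}$ reproduces constants, hence $\sum_j\ell_j\equiv1$ and $\mathcal{I}[\beta^*](t^*)-\beta^*(t^*)=\sum_j(\beta^*(t^*_j)-\beta^*(t^*))\ell_j(t^*)$; bounding $|\beta^*(t^*_j)-\beta^*(t^*)|\le\omega^{\beta^*}(|t^*_j-t^*|)$ and invoking the subadditivity of the modulus of continuity together with the geometric decay of the cardinal splines $|\ell_j(t^*)|\lesssim\rho^{|j-j_0|}$ (with $j_0$ the knot nearest $t^*$) collapses the sum to $c_2\,\omega^{\beta^*}(\delta_T)$.

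The main obstacle is that both constants must stay finite uniformly as $T\to\infty$, i.e.\ the Lebesgue constant $\Lambda$ and the decay rate $\rho$ of the cardinal cubic splines must be bounded independently of the mesh. For the equidistant grid used here this is a classical fact (cf.\ \citet{Atkinson_1968,de_Boor_2001}), so the delicate point is to invoke these mesh-independent bounds correctly rather than to re-derive them; once they are in hand, combining the two displayed bounds and passing to the supremum over $t^*\in[-1,1]$ completes the proof.
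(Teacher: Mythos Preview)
Your proposal is correct and shares the paper's fundamental decomposition: both insert the interpolant $\mathcal{I}[\beta^*]$ of the \emph{true} knot values and apply the triangle inequality to split the total error into a stochastic term $\doublewidehat{\beta}^*_n-\mathcal{I}[\beta^*]$ and a deterministic interpolation bias $\mathcal{I}[\beta^*]-\beta^*$.

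Where you differ is in the machinery used to bound each piece. For the stochastic term, the paper argues loosely that on each subinterval the interpolation range is controlled by a constant times the endpoint values, yielding $c_1$; you instead name $c_1$ as the Lebesgue constant $\Lambda=\sup_{t^*}\sum_j|\ell_j(t^*)|$. For the bias term, the paper writes $\widetilde{\beta}^*(t^*)=\beta^*(t^*_j)+h(t^*)\big(\beta^*(t^*_{j+1})-\beta^*(t^*_j)\big)$ with $h$ an (implicitly bounded) affine parametrization, obtaining $c_2=1+\sup|h|$; you use the cardinal-basis identity $\mathcal{I}[\beta^*]-\beta^*=\sum_j(\beta^*(t^*_j)-\beta^*(t^*))\ell_j$, then combine the subadditivity $\omega^{\beta^*}(k\delta_T)\le k\,\omega^{\beta^*}(\delta_T)$ with the geometric decay $|\ell_j(t^*)|\lesssim\rho^{|j-j_0|}$ to sum the series. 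Your route is the standard spline-approximation argument and makes the crucial point---that $c_1,c_2$ are bounded uniformly in $T$---explicit via the classical mesh-independent bounds for natural cubic splines on equidistant grids (which you correctly cite). The paper's argument is phrased for a generic interpolation scheme and is terser, but it leaves the $T$-uniform finiteness of $c_1$ and $\sup|h|$ largely asserted rather than justified; your approach trades that generality for rigor on the scheme actually used in \eqref{eq:BetaHatHat}.
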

\spacingset{1.5}

\begin{proof}
    \textbf{First}, let $\{\widetilde{\beta}^*(t^*), t^* \in [-1,1]\}$ be the interpolation of the unobservable parameters $\beta^*(t^*_j)$ for $j=1,\ldots,T$. Then, by triangle inequality, we have
    \begin{equation*}
        \sup_{t^* \in [-1,1]}|\doublewidehat{\beta}^*_n(t^*)-\beta^*(t^*)| \le \underbrace{\sup_{t^* \in [-1,1]}|\doublewidehat{\beta}^*_n(t^*)-\widetilde{\beta}^*(t^*)|}_{:=E_1} +\underbrace{\sup_{t^* \in [-1,1]}|\widetilde{\beta}^*(t^*)-\beta^*(t^*)|}_{:=E_2}.
    \end{equation*}
    
    \textbf{Second}, regarding $E_1$, since the interval $[t^*_j, t^*_{j+1}]$ is bounded, the range of any interpolation on this interval should also be bounded. It means, $\exists$ a constant $c_1$, such that 
    $$
        \sup_{t^*\in[t^*_j, t^*_{j+1}]} |\doublewidehat{\beta}_n^*(t^*)-\widetilde{\beta}^*(t^*)| \le c_1  \max\left\{~|\doublewidehat{\beta}_n^*(t^*_j)-\widetilde{\beta}^*(t^*_j)|,~ |\doublewidehat{\beta}_n^*(t^*_{j+1})-\widetilde{\beta}^*(t^*_{j+1})|~\right\}
    $$
    for $j\in\{1,\dots,T-1\}$. Hence, we have
    \begin{align*}
    \begin{split}
        E_1 &=\sup_{t^*\in[-1,1]} |\doublewidehat{\beta}_n^*(t^*)-\widetilde{\beta}^*(t^*)|\\
        &\le c_1 \max_{j\in\{1,\dots,T\}}|\doublewidehat{\beta}_n^*(t^*_j)-\widetilde{\beta}^*(t^*_j)| \\
        &= c_1 \max_{j\in\{1,\dots,T\}}|\widehat{\beta}_n^*(t^*_j)-\beta^*(t^*_j)|,
    \end{split}
    \end{align*}
    where we use $\widetilde{\beta}^*(t^*_j)=\beta^*(t^*_j)$ and $\doublewidehat{\beta}_n^*(t^*_j)=\widehat{\beta}_n^*(t^*_j)$ for all $j\in\{1,\dots,T\}$.

    \textbf{Third}, regarding $E_2$, let us denote $h(t^*)=\frac{\widetilde{\beta}^*(t^*)-\beta^*(t^*_j)}{\beta^*(t^*_{j+1})-\beta^*(t^*_j)}<\infty$ for $t^* \in [t^*_j, t^*_{j+1}]$, then 
    \begingroup
    \allowdisplaybreaks
    \begin{align*}
        E_2
        =&\sup_{t^* \in [-1,1]}|\widetilde{\beta}^*(t^*)-\beta^*(t^*)|\\
        =&\max_{j\in\{1,\dots,T-1\}} \sup_{t^*\in[t^*_j,t^*_{j+1}]}|\widetilde{\beta}^*(t^*)-\beta^*(t^*)|\\
        =&\max_{j\in\{1,\dots,T-1\}} \sup_{t^*\in[t^*_j,t^*_{j+1}]}\left|\beta^*(t^*)-\left[\beta^*(t^*_j)+h(t^*)\left(\beta^*(t^*_{j+1})-\beta^*(t^*_j)\right)\right]\right|\\
        \le&\max_{j\in\{1,\dots,T-1\}} \left\{ \sup_{t^*\in[t^*_j,t^*_{j+1}]}|\beta^*(t^*)-\beta^*(t^*_j) | + \sup_{t^*\in[t^*_j,t^*_{j+1}]} \left|h(t^*)\left(\beta^*(t^*_{j+1})-\beta^*(t^*_j)\right)\right| \right\} \\
        =&\max_{j\in\{1,\dots,T-1\}} \left\{ \sup_{t^*\in[t^*_j,t^*_{j+1}]}|\beta^*(t^*)-\beta^*(t^*_j) | + \sup_{t^*\in[t^*_j,t^*_{j+1}]} \left|h(t^*)\right| \left|\left(\beta^*(t^*_{j+1})-\beta^*(t^*_j)\right)\right| \right\} \\    
        \le &\max_{j\in\{1,\dots,T-1\}} \left\{ \sup_{t^*\in[t^*_j,t^*_{j+1}]}|\beta^*(t^*)-\beta^*(t^*_j) | + \sup_{t^*\in[t^*_j,t^*_{j+1}]} \left|h(t^*)\right| \sup_{t^*\in[t^*_j,t^*_{j+1}]} \left|\beta^*(t^*)-\beta^*(t^*_j)\right| \right\} \\
        =& \left(1+ \sup_{t^*\in[-1,1]}|h(t^*)|\right) \max_{j\in\{1,\dots,T-1\}} \sup_{t^*\in[t^*_j,t^*_{j+1}]} |\beta^*(t^*)-\beta^*(t^*_j)|\\
        :=& c_2~\omega^{\beta^*}(\delta_T),
    \end{align*}
    \endgroup
    where $c_2$ is a finite constant.
   
    \textbf{Finally}, for any interpolations, we establish
    \begin{equation*}
        \sup_{t^* \in [-1,1]}|\doublewidehat{\beta}_n^*(t^*)-\beta^*(t^*)| \le c_1 \max_{j\in\{1,\dots,T\}}|\beta^*(t^*_j)-\widehat{\beta}_n^*(t^*_j)|+c_2~\omega^{\beta^*}(\delta_T).
    \end{equation*}
    This completes the proof.
\end{proof}

\noindent \centerline{ \textbf{\MakeUppercase{Proof of Theorem \ref{thm:intrpl_consistency}}}} 

\begin{proof}
\textbf{First}, by Assumption \ref{assumption: smooth d}, we have
    \begin{align*}
	\sup_{t \in [-T_{pre}, -1]\cup[0, T_{post}]} \left|\doublewidehat{\beta}_n(t)-\beta(t)\right| = \sup_{t^* \in [-1,-1/T_{pre}] \cup [0,1]} \left|\doublewidehat{\beta}_n^*(t^*)-\beta^*(t^*)\right|.
    \end{align*}

\textbf{Second}, as implied by Theorem \ref{thm:hormann}, we have
    \begin{align}\label{eq:thm_2.4_2}
        \sup_{t^* \in [-1,-1/T_{pre}] \cup [0,1]}|\doublewidehat{\beta}_n^*(t^*)-\beta^*(t^*)| \le c_1 \max_{j\in\{1,\dots,T\}}|\beta^*(t^*_j)-\widehat{\beta}_n^*(t^*_j)|+c_2 ~\omega^{\beta^*}(\delta_T)
    \end{align}

\textbf{Third}, by Assumption \ref{assumption: smooth d}, the first term of right-hand side of \eqref{eq:thm_2.4_2} is
    \begin{align*}
        c_1 \max_{j\in\{1,\dots,T\}}|\beta^*(t^*_j)-\widehat{\beta}_n^*(t^*_j)|=c_1 \max_{t\in\{-T_{pre},\cdots,T_{post}\}} \left|\beta(t)-\widehat{\beta}_n(t)\right|.
    \end{align*}
    
\textbf{Fourth}, regarding the second term of right-hand side of \eqref{eq:thm_2.4_2}, by Mean Value Theorem, $\exists$ $\xi \in (t^*,v^*)$ such that $\beta^*(t^*)-\beta^*(v^*) = \beta^{*\prime}(\xi)  (t^*-v^*)$, where $\beta^{*\prime}$ is the first derivative. By Assumption \ref{assumption: smooth d} that $\beta^*\in C^2([-1,-1/T_{pre}] \cup [0,1])$, we have $\sup_{t^* \in (-1,-1/T_{pre}) \cup (0,1)}|\beta^{*\prime}(t^*)|=K_{\beta^*}<\infty$. Hence,
    \begingroup
    \allowdisplaybreaks
    \begin{align*}
        \omega^{\beta^*}(\delta_T)
        &=\sup_{\substack{t^*,v^* \in [-1,-\frac{1}{T_{pre}}] \cup [0,1]:\\|t^*-v^*| \le \delta_T}}|\beta^*(t^*)-\beta^*(v^*)| \\
        &= |\beta^{*\prime}(\xi)| \sup_{\substack{t^*,v^* \in [-1,-\frac{1}{T_{pre}}] \cup [0,1]:\\|t^*-v^*|\le \delta_T}}|t^*-v^*|\\
        &\le  K_{\beta^*} \sup_{\substack{t^*,v^* \in [-1,-\frac{1}{T_{pre}}] \cup [0,1]:\\|t^*-v^*|\le \delta_T}} |t^*-v^*|\\
        &= K_{\beta^*} O(\frac{1}{T}),
    \end{align*}
    \endgroup
    where the last line derives from Assumption \ref{assumption: data_str b}. 
    
    \textbf{Finally}, we can write
    \begin{align*}
        \sup_{t \in [-T_{pre}, -1]\cup [0, T_{post}]} \left|\doublewidehat{\beta}_n(t)-\beta(t)\right| \le c_1 \max_{t\in\{-T_{pre},\cdots,T_{post}\}} \left|\beta(t)-\widehat{\beta}_n(t)\right|+\frac{c_2K_{\beta^*}}{T}.
    \end{align*}
    This completes the proof.
\end{proof}

%%%%%%%%%%%%%%%%%%%%%%%%%%%%%%%%%%%%%%%%%%%%%%%%%%%%%%%%%%%%%%%%%%
\subsection{Proof of Theorem \ref{thm:intrpl_dist}: Uniform Asymptotic Normality of the Interpolation Estimator (\ref{eq:BetaHatHat})}\label{app:intrpl_dist}
%%%%%%%%%%%%%%%%%%%%%%%%%%%%%%%%%%%%%%%%%%%%%%%%%%%%%%%%%%%%%%%%%%

\begin{proof}
\textbf{First}, to derive the asymptotic distribution, we write
\begin{equation}\label{eq:intrpl_dist}
    \sqrt{n}(\doublewidehat{\beta}_n-\beta)=\sqrt{n}(\doublewidehat{\beta}_n-\widehat{\beta}_n)+\sqrt{n}(\widehat{\beta}_n-\beta).
\end{equation}

\textbf{Second}, by Thoerem \ref{thm:gauss}, we already know that $\sqrt{n}(\widehat{\beta}_n-\beta) \stackrel{d}{\to} \mathcal{GP}(0,C_{\beta})$. To achieve the asymptotic Gaussian process of $\sqrt{n}(\doublewidehat{\beta}_n-\beta)$, we require the first term of right-hand side of \eqref{eq:intrpl_dist} to be uniformly negligible. 

\textbf{Third}, by replacing $\beta(t)$ in Thoerem \ref{thm:intrpl_consistency} with $\widehat{\beta}_n(t)$, we have $\sup_{t \in [-T_{pre},-1]\cup [0,T_{post}]} |\doublewidehat{\beta}_n(t)-\widehat{\beta}_n(t)| \le \frac{c_2K_{\beta^*}}{T}$. Thus, we have
\begin{equation}\label{eq:intrpl_uniform_neg}
    \sup_{t \in [-T_{pre},-1]\cup [0,T_{post}]} |\sqrt{n}(\doublewidehat{\beta}_n(t)-\widehat{\beta}_n(t))| \le \frac{c_2K_{\beta^*} \sqrt{n} }{T}.
\end{equation}
To make the left-hand side of \eqref{eq:intrpl_uniform_neg} uniformly negligible, we need that $\sqrt{n}/T$ goes to $0$ asymptotically. This completes the proof.
\end{proof}

%%%%%%%%%%%%%%%%%%%%%%%%%%%%%%%%%%%%%%%%%%%%%%%%%%%%%%%%%%%%%%%%%%
\subsection{Proof of Theorem \ref{thm:cov_intrpl_consistency}: Uniform Consistency of Interpolation Estimator (\ref{eq:CovHatHat})}\label{app:cov_intrpl_error} 
%%%%%%%%%%%%%%%%%%%%%%%%%%%%%%%%%%%%%%%%%%%%%%%%%%%%%%%%%%%%%%%%%%

To show Theorem \ref{thm:cov_intrpl_consistency}, we need Theorem \ref{thm:cov_hormann}, which is also a revised version of Lemma 1 in \citetappendix{Hormann_2022}.

\spacingset{1.2}
\begin{theorem}\label{thm:cov_hormann}
    Define $\widehat{C}^*_{\beta,n}(s^*_i, t^*_j)$ as the estimates of unobservable parameters $C^*_{\beta}(s^*_i, t^*_j)$ at discrete time points $s^*_i, t^*_j \in [-1,1]$ for $i, j=1, \ldots, T$, and $\{\doublewidehatCL{C\,}^*_{\beta,n}(s^*, t^*), s^*, t^* \in [-1,1]\}$ as an interpolation on estimates $\widehat{C}^*_{\beta,n}(s^*_i, t^*_j)$. Let 
    $$
    \omega^{C_{\beta}^*}(\delta_S,\delta_T)=\sup_{\substack{s^*,u^*\in[-1,1]:\\|s^*-u^*|\le\delta_S}}|C_{\beta}^*(s^*, \mathcal{T}^*)-C_{\beta}^*(u^*, \mathcal{T}^*)|+\sup_{\substack{t^*,v^*\in[-1,1]:\\|t^*-v^*|\le\delta_T}}|C_{\beta}^*(\mathcal{S}^*, t^*)-C_{\beta}^*(\mathcal{S}^*, v^*)|
    $$
    be the modulus of continuity of function $C_{\beta}^*$: $[-1,1]^2 \to \mathbb{R}$, where
    \begin{align*}
        \delta_S & =\max_{i\in\{1,\dots,T-1\}}|s^*_{i+1}-s^*_i|, \\
        \delta_T & =\max_{j\in\{1,\dots,T-1\}}|t^*_{j+1}-t^*_j|, \\
        (\mathcal{S}^*, \mathcal{T}^*) & = \arg \sup_{ s^*, t^* \in [-1,1]} |\widetilde{C}_{\beta}^*(s^*, t^*)-C_{\beta}^*(s^*, t^*)| 
    \end{align*}
    and $\{\widetilde{C}_{\beta}^*(s^*, t^*), s^*, t^* \in [-1,1]\}$ be the interpolation of the unobservable parameters $C_{\beta}^*(s^*_i, t^*_j)$ for $i,j\in\{1,\dots,T\}$. Then, we have
    \begin{equation*}
        \sup_{ s^*, t^* \in [-1,1]}|\doublewidehatCL{C\,}_{\beta,n}^*(s^*, t^*)-C_{\beta}^*(s^*, t^*)| \le \max_{i,j\in\{1,\dots,T\}} c_3~ |C_{\beta}^*(s^*_i, t^*_j)-\widehat{C}_{\beta, n}^*(s^*_i, t^*_j)|+c_4 ~\omega^{C_{\beta}^*}(\delta_S, \delta_T),
    \end{equation*}
    where $c_3$ and $c_4$ are constants relative to interpolation.
\end{theorem}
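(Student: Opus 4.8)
The plan is to mirror the one-dimensional argument used to establish Theorem \ref{thm:hormann}, splitting the total error into an estimation contribution that is controlled at the grid points and a pure interpolation contribution that is controlled by the modulus of continuity, and then to handle the two-dimensional interpolation part by exploiting the tensor-product structure of the interpolant. First I would introduce $\{\widetilde{C}_{\beta}^*(s^*, t^*):s^*, t^* \in [-1,1]\}$, the tensor-product natural cubic spline interpolation of the \emph{unobservable} values $C_{\beta}^*(s^*_i, t^*_j)$ already defined in the statement, and apply the triangle inequality to obtain
\begin{equation*}
\sup_{s^*, t^*}\left|\doublewidehatCL{C}_{\beta,n}^*(s^*, t^*)-C_{\beta}^*(s^*, t^*)\right| \le \underbrace{\sup_{s^*, t^*}\left|\doublewidehatCL{C}_{\beta,n}^*-\widetilde{C}_{\beta}^*\right|}_{:=E_1} + \underbrace{\sup_{s^*, t^*}\left|\widetilde{C}_{\beta}^*-C_{\beta}^*\right|}_{:=E_2}.
\end{equation*}

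For $E_1$, I would use that both $\doublewidehatCL{C}_{\beta,n}^*$ and $\widetilde{C}_{\beta}^*$ are tensor-product natural cubic spline interpolants on the same grid, so their difference is itself the tensor-product interpolant of the knot-wise differences $\widehat{C}_{\beta,n}^*(s^*_i, t^*_j)-C_{\beta}^*(s^*_i, t^*_j)$. Since tensor-product spline interpolation is a bounded linear operator whose sup-norm output is dominated by a finite stability (Lebesgue) constant times the largest interpolated value, this gives $E_1 \le c_3 \max_{i,j}|\widehat{C}_{\beta,n}^*(s^*_i, t^*_j)-C_{\beta}^*(s^*_i, t^*_j)|$, exactly the first term of the claimed bound. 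This step is essentially the same boundedness argument as for $E_1$ in Theorem \ref{thm:hormann}, only now applied in two dimensions.

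The hard part will be the $E_2$ estimate, the interpolation error of the true surface. Here I would factor the tensor-product interpolant as $\widetilde{C}_{\beta}^* = \mathcal{I}_S\mathcal{I}_T C_{\beta}^*$, where $\mathcal{I}_S$ and $\mathcal{I}_T$ denote the one-dimensional natural cubic spline interpolation operators in the $s^*$- and $t^*$-directions, and then write
\begin{equation*}
\widetilde{C}_{\beta}^* - C_{\beta}^* = \mathcal{I}_S\left(\mathcal{I}_T C_{\beta}^* - C_{\beta}^*\right) + \left(\mathcal{I}_S C_{\beta}^* - C_{\beta}^*\right),
\end{equation*}
so that each summand is a purely one-directional interpolation error. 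Applying the one-dimensional bound of Theorem \ref{thm:hormann} in each direction separately controls the $t^*$-directional term by $\sup_{|t^*-v^*|\le\delta_T}|C_{\beta}^*(\mathcal{S}^*, t^*)-C_{\beta}^*(\mathcal{S}^*, v^*)|$ and the $s^*$-directional term by $\sup_{|s^*-u^*|\le\delta_S}|C_{\beta}^*(s^*, \mathcal{T}^*)-C_{\beta}^*(u^*, \mathcal{T}^*)|$, evaluated at the worst points $(\mathcal{S}^*, \mathcal{T}^*)$ defined in the statement; the bounded stability constant of $\mathcal{I}_S$ prevents the first summand from being amplified. Summing the two directional contributions reproduces precisely the two-part modulus of continuity, yielding $E_2 \le c_4\,\omega^{C_{\beta}^*}(\delta_S, \delta_T)$. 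The delicate point, and the main obstacle relative to the one-dimensional case, is that the interpolation error no longer localizes on a single subinterval, so the tensor-product splitting above must be combined with the directional stability of the component operators to recover exactly the additive structure of $\omega^{C_{\beta}^*}(\delta_S, \delta_T)$; once $E_1$ and $E_2$ are combined, the claimed inequality follows.
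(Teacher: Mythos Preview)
Your overall architecture matches the paper's exactly: introduce the auxiliary interpolant $\widetilde{C}_\beta^*$ of the true grid values, split via the triangle inequality into $E_1$ and $E_2$, bound $E_1$ by the sup-norm stability of the interpolation operator at the knots, and bound $E_2$ by a modulus-of-continuity argument. Your $E_1$ step is identical in content to the paper's.

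Where you diverge is in handling $E_2$. The paper does not use the operator-theoretic tensor splitting $\mathcal{I}_S\mathcal{I}_T C_\beta^* - C_\beta^* = \mathcal{I}_S(\mathcal{I}_T C_\beta^* - C_\beta^*) + (\mathcal{I}_S C_\beta^* - C_\beta^*)$. Instead, it evaluates $E_2$ directly at the argmax $(\mathcal{S}^*,\mathcal{T}^*)$, bounds it (somewhat coarsely) by the sum of the one-dimensional slice errors $\sup_{s^*}|\widetilde{C}_\beta^*(s^*,\mathcal{T}^*)-C_\beta^*(s^*,\mathcal{T}^*)|$ and $\sup_{t^*}|\widetilde{C}_\beta^*(\mathcal{S}^*,t^*)-C_\beta^*(\mathcal{S}^*,t^*)|$, and then runs the explicit ratio-function argument of Theorem~\ref{thm:hormann} on each slice, producing the constant $c_4 = 1+\sup|g|+\sup|h|$. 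Your tensor-product route is cleaner and is the standard approximation-theory device; the paper's route is more hands-on and tied to the specific argmax.

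There is one small gap you should note. After applying the boundedness of $\mathcal{I}_S$ to the term $\mathcal{I}_S(\mathcal{I}_T C_\beta^* - C_\beta^*)$, you are left with $\|\mathcal{I}_S\|\cdot\sup_{s^*}|(\mathcal{I}_T C_\beta^* - C_\beta^*)(s^*,\mathcal{T}^*)|$, and the one-dimensional bound on the inner factor yields $\sup_{s^*}\omega^{C_\beta^*(s^*,\cdot)}(\delta_T)$, a supremum over \emph{all} $s^*$-slices, not the value at the specific slice $s^*=\mathcal{S}^*$ that appears in the statement's $\omega^{C_\beta^*}(\delta_S,\delta_T)$. So your decomposition does not literally reproduce the modulus as defined in the theorem; it gives a slightly larger directional modulus. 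This discrepancy is harmless for the downstream application (Theorem~\ref{thm:cov_intrpl_consistency}), since there the modulus is immediately dominated by a Lipschitz bound $K_{C_\beta^*}/T$ that is uniform in the slice, but you should either acknowledge the mismatch or rewrite the modulus in the statement as a supremum over slices to make your argument land on the exact claim.
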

\spacingset{1.5}

\begin{proof}
    \textbf{First}, by triangle inequality, we have
    \begin{align*}
        \sup_{s^*, t^* \in [-1,1]}|\doublewidehatCL{C\,}_{\beta,n}^*(s^*,t^*)-C_{\beta}^*(s^*, t^*)| 
        \le& \underbrace{\sup_{s^*, t^* \in [-1,1]}|\doublewidehatCL{C\,}_{\beta,n}^*(s^*, t^*)-\widetilde{C}_{\beta}^*(s^*, t^*)|}_{:=E_1} +\\
        &\underbrace{\sup_{s^*, t^* \in [-1,1]}|\widetilde{C}_{\beta}^*(s^*, t^*)-C_{\beta}^*(s^*, t^*)|}_{:=E_2}.
    \end{align*}

    \textbf{Second}, regarding $E_1$, since the area $[s^*_i, s^*_{i+1}] \times [t^*_j, t^*_{j+1}]$ is bounded, the range of any interpolation on this area should also be bounded. Hence, $\exists$ a constant $c_3$, such that
    \begin{align*}
    \begin{split}
        E_1 &=\sup_{s^*, t^* \in [-1,1]}|\doublewidehatCL{C\,}_{\beta,n}^*(s^*, t^*)-\widetilde{C}_{\beta}^*(s^*, t^*)|\\
        &\le c_3  \max_{i,j\in\{1,\dots,T\}}|\doublewidehatCL{C\,}_{\beta,n}^*(s^*_i, t^*_j)-\widetilde{C}_{\beta}^*(s^*_i,t^*_j)| \\
        &= c_3  \max_{i,j\in\{1,\dots,T\}}|\widehat{C}_{\beta,n}^*(s^*_i, t^*_j)-C_{\beta}^*(s^*_i, t^*_j)|,
    \end{split}
    \end{align*}
    where we use $\widetilde{C}_{\beta}^*(s^*_i,t^*_j)=C_{\beta}^*(s^*_i,t^*_j)$ and $\doublewidehatCL{C\,}_{\beta,n}^*(s^*_i,t^*_j)=\widehat{C}_{\beta,n}^*(s^*_i,t^*_j)$ for all $i,j\in\{1,\dots,T\}$. 
    
    \textbf{Third}, regarding $E_2$, let us denote 
    \begin{align*}
        g(s^*) & =\frac{\widetilde{C}_{\beta}^*(s^*, \mathcal{T}^*)-C_{\beta}^*(s^*_i, \mathcal{T}^*)}{C_{\beta}^*(s^*_{i+1}, \mathcal{T}^*)-C_{\beta}^*(s^*_i, \mathcal{T}^*)} < \infty \quad \text{for} \quad s^*\in[s^*_i, s^*_{i+1}]\\
        h(t^*) & =\frac{\widetilde{C}_{\beta}^*(\mathcal{S}^*, t^*)-C_{\beta}^*(\mathcal{S}^*, t^*_j)}{C_{\beta}^*(\mathcal{S}^*, t^*_{j+1})-C_{\beta}^*(\mathcal{S}^*, t^*_j)} < \infty \quad \text{for} \quad t^*\in[t^*_j, t^*_{j+1}],
    \end{align*}
    where $(\mathcal{S}^*, \mathcal{T}^*) = \arg \sup_{ s^*, t^* \in [-1,1]} |\widetilde{C}_{\beta}^*(s^*, t^*)-C_{\beta}^*(s^*, t^*)|$. $g(s^*)$ and $h(t^*)$ are set to be zero if the denominators vanish. Then, we have
    \begingroup
    \allowdisplaybreaks
    \begin{align*}
        E_2
        = & \sup_{s^*, t^* \in [-1,1]}|\widetilde{C}_{\beta}^*(s^*, t^*)-C_{\beta}^*(s^*, t^*)|\\
        = & \max_{i,j\in\{1,\dots,T-1\}} \sup_{\substack{s^*\in[s^*_i,s^*_{i+1}] \\ t^*\in[t^*_j,t^*_{j+1}]}}|\widetilde{C}_{\beta}^*(s^*, t^*)-C_{\beta}^*(s^*, t^*)| \\
        \le & \max_{i\in\{1,\dots,T-1\}} \sup_{s^*\in[s^*_i,s^*_{i+1}]}|\widetilde{C}_{\beta}^*(s^*, \mathcal{T}^*)-C_{\beta}^*(s^*, \mathcal{T}^*)| + \max_{j\in\{1,\dots,T-1\}} \sup_{t^*\in[t^*_j, t^*_{j+1}]}|\widetilde{C}_{\beta}^*(\mathcal{S}^*, t^*)-C_{\beta}^*(\mathcal{S}^*, t^*)| \\
        = & \max_{i\in\{1,\dots,T-1\}} \sup_{s^*\in[s^*_i,s^*_{i+1}]}|C_{\beta}^*(s^*, \mathcal{T}^*)-C_{\beta}^*(s^*_i, \mathcal{T}^*)-g(s^*)( C_{\beta}^*(s^*_{i+1}, \mathcal{T}^*)-C_{\beta}^*(s^*_i, \mathcal{T}^*))| +\\
        & \max_{j\in\{1,\dots,T-1\}} \sup_{t^*\in[t^*_j,t^*_{j+1}]}|C_{\beta}^*(\mathcal{S}^*, t^*)-C_{\beta}^*(\mathcal{S}^*, t^*_j)-h(t^*)( C_{\beta}^*(\mathcal{S}^*, t^*_{j+1})-C_{\beta}^*(\mathcal{S}^*, t^*_j))| \\
        \le & \max_{i\in\{1,\dots,T-1\}} \left\{ \sup_{s^*\in[s^*_i,s^*_{i+1}]}|C_{\beta}^*(s^*, \mathcal{T}^*)-C_{\beta}^*(s^*_i, \mathcal{T}^*)| + \sup_{s^*\in[s^*_i,s^*_{i+1}]} \left|g(s^*)\right| \sup_{s^*\in[s^*_i,s^*_{i+1}]}\left|C_{\beta}^*(s^*, \mathcal{T}^*)-C_{\beta}^*(s^*_i, \mathcal{T}^*)\right| \right\}+ \\
        & \max_{j\in\{1,\dots,T-1\}} \left\{ \sup_{t^*\in[t^*_j,t^*_{j+1}]}|C_{\beta}^*(\mathcal{S}^*, t^*)-C_{\beta}^*(\mathcal{S}^*, t^*_j)| + \sup_{t^*\in[t^*_j,t^*_{j+1}]} \left|h(t^*)\right| \sup_{t^*\in[t^*_j,t^*_{j+1}]} \left|C_{\beta}^*(\mathcal{S}^*, t^*)-C_{\beta}^*(\mathcal{S}^*, t^*_j)\right| \right\}\\
        = & \left(1+ \sup_{s^*\in[-1,1]}|g(s^*)|\right) \sup_{s^*,u^*\in[-1,1]:~|s^*-u^*|\le\delta_S}|C_{\beta}^*(s^*, \mathcal{T}^*)-C_{\beta}^*(u^*, \mathcal{T}^*)|+\\
        & \left(1+ \sup_{t^*\in[-1,1]}|h(t^*)|\right) \sup_{t^*,v^*\in[-1,1]:~|t^*-v^*|\le\delta_T}|C_{\beta}^*(\mathcal{S}^*, t^*)-C_{\beta}^*(\mathcal{S}^*, v^*)|\\
        \le & \left(1+ \sup_{s^*\in[-1,1]}|g(s^*)|+ \sup_{t^*\in[-1,1]}|h(t^*)|\right) \omega^{C_{\beta}^*}(\delta_S, \delta_T)\\
        := & c_4 ~\omega^{C_{\beta}^*}(\delta_S, \delta_T),
    \end{align*}
    \endgroup
    where $c_4$ is a finite constant.
   
    \textbf{Finally}, for any interpolation, we establish
    \begin{equation*}
        \sup_{ s^*, t^* \in [-1,1]}|\doublewidehatCL{C\,}_{\beta,n}^*(s^*, t^*)-C_{\beta}^*(s^*, t^*)| \le \max_{i,j\in\{1,\dots,T\}} c_3~ |C_{\beta}^*(s^*_i, t^*_j)-\widehat{C}_{\beta, n}^*(s^*_i, t^*_j)|+c_4 ~\omega^{C_{\beta}^*}(\delta_S, \delta_T).
    \end{equation*}
    This completes the proof.
\end{proof}

\noindent \centerline{ \textbf{\MakeUppercase{Proof of Theorem \ref{thm:cov_intrpl_consistency}}}} 

\begin{proof}
\textbf{First}, by Assumption \ref{assumption: smooth e}, we have
    \begin{align*}
	\sup_{s,t \in [-T_{pre}, -1]\cup [0, T_{post}]} \left|\doublewidehatCL{C\,}_{\beta,n}(s,t)-C_{\beta}(s,t)\right| = \sup_{s^*,t^* \in [-1,-\frac{1}{T_{pre}}] \cup [0,1]} \left|\doublewidehatCL{C\,}_{\beta,n}^*(s^*,t^*)-C_{\beta}^*(s^*,t^*)\right|.
    \end{align*}

\textbf{Second}, as implied by Theorem \ref{thm:cov_hormann}, we have
    \begin{align}\label{eq:thm_2.4_2_cov}
        \sup_{s^*,t^* \in [-1,-\frac{1}{T_{pre}}] \cup [0,1]}\left|\doublewidehatCL{C\,}_{\beta,n}^*(s^*,t^*)-C_{\beta}^*(s^*,t^*)\right| \le c_3  \max_{i,j\in\{1,\dots,T\}}\left|C_{\beta}^*(s^*_i,t^*_j)-\widehat{C}_{\beta,n}^*(s^*_i,t^*_j)\right|+c_4 ~\omega^{C_{\beta}^*}(\delta_S, \delta_T)
    \end{align}

\textbf{Third}, by Assumption \ref{assumption: smooth e}, the first term of right-hand side of \eqref{eq:thm_2.4_2_cov} is
    \begin{align*}
        c_3 \max_{i,j\in\{1,\dots,T\}}\left|C_{\beta}^*(s^*_i,t^*_j)-\widehat{C}_{\beta,n}^*(s^*_i,t^*_j)\right|=c_3 \max_{s,t\in\{-T_{pre},\dots,T_{post}\}} \left|C_{\beta}(s,t)-\widehat{C}_{\beta,n}(s,t)\right|.
    \end{align*}
    
\textbf{Fourth}, regarding the second term of right-hand side of \eqref{eq:thm_2.4_2_cov}, by Mean Value Theorem, there exists $\zeta \in (s^*, u^*)$ such that $C_{\beta}^*(s^*, \mathcal{T}^*)-C_{\beta}^*(u^*, \mathcal{T}^*)=\frac{\partial C_{\beta}^*(\zeta, \mathcal{T}^*)}{\partial s^*} (s^*-u^*)$, and also there exists $\xi \in (t^*, v^*)$ such that $C_{\beta}^*(\mathcal{S}^*, t^*)-C_{\beta}^*(\mathcal{S}^*, v^*)=\frac{\partial C_{\beta}^*(\xi, \mathcal{S}^*)}{\partial t^*} (t^*-v^*)$, where $(\mathcal{S}^*, \mathcal{T}^*)$ is as defined in Theorem \ref{thm:cov_hormann}. By Assumption \ref{assumption: smooth e} that $C_{\beta}^*\in C^2([-1,-\frac{1}{T_{pre}}] \cup [0,1])^2$, we have $\sup_{s^* \in (-1,-\frac{1}{T_{pre}}) \cup (0,1)}|\frac{\partial C_{\beta}^*(\zeta, \mathcal{T}^*)}{\partial s^*}| + \sup_{t^* \in (-1,-\frac{1}{T_{pre}}) \cup (0,1)}| \frac{\partial C_{\beta}^*(\mathcal{S}^*, \xi)}{\partial t^*}|=K_{C_{\beta}^*}<\infty$. Hence,
    \begingroup
    \allowdisplaybreaks
    \begin{align*}
        \omega^{C_{\beta}^*}(\delta_S,\delta_T)
        =&\sup_{\substack{s^*,u^*\in[-1,-\frac{1}{T_{pre}}] \cup [0,1]:\\|s^*-u^*|\le\delta_S}}\left|C_{\beta}^*(s^*, \mathcal{T}^*)-C_{\beta}^*(u^*, \mathcal{T}^*)\right|+\sup_{\substack{t^*,v^*\in[-1,-\frac{1}{T_{pre}}] \cup [0,1]:\\|t^*-v^*|\le\delta_T}}\left|C_{\beta}^*(\mathcal{S}^*, t^*)-C_{\beta}^*(\mathcal{S}^*, v^*)\right| \\
        \le& \sup_{s^*\in(-1,-\frac{1}{T_{pre}}) \cup (0,1)}\left|\frac{\partial C_{\beta}^*(s^*, \mathcal{T}^*)}{\partial s^*}\right|  \sup_{\substack{s^*,u^*\in[-1,-\frac{1}{T_{pre}}] \cup [0,1]:\\|s^*-u^*|\le\delta_S}}|s^*-u^*|\\
        &+\sup_{t^*\in(-1,-\frac{1}{T_{pre}}) \cup (0,1)}\left|\frac{\partial C_{\beta}^*(\mathcal{S}^*, t^*)}{\partial t^*}\right|  \sup_{\substack{t^*,v^*\in[-1,-\frac{1}{T_{pre}}] \cup [0,1]:\\|t^*-v^*|\le\delta_T}}| t^*-v^*| \\
        =& K_{C_{\beta}^*} O(\frac{1}{T}),
    \end{align*}
    \endgroup
    where the last line derives from Assumption \ref{assumption: data_str b}. 
    
    \textbf{Finally}, we can write 
    \begin{align*}
    \begin{split}
        \sup_{s,t \in [-T_{pre}, -1]\cup [0, T_{post}]} \left|\doublewidehatCL{C\,}_{\beta,n}(s,t)-C_{\beta}(s,t)\right| \le c_3 \max_{s,t\in\{-T_{pre},\dots,T_{post}\}} \left|C_{\beta}(s,t)-\widehat{C}_{\beta,n}(s,t)\right|+\frac{c_4K_{C_{\beta}^*}}{T}.
    \end{split}
    \end{align*}
    This completes the proof.
\end{proof}

%%%%%%%%%%%%%%%%%%%%%%%%%%%%%%%%%%%%%%%%%%%%%%%%%%%%%%%%%%%%%%%%%%
\section{Proofs of Corollaries}\label{sec:PROOFS_Corollaries} 
%%%%%%%%%%%%%%%%%%%%%%%%%%%%%%%%%%%%%%%%%%%%%%%%%%%%%%%%%%%%%%%%%%

%%%%%%%%%%%%%%%%%%%%%%%%%%%%%%%%%%%%%%%%%%%%%%%%%%%%%%%%%%%%%%%%%%
\subsection{Proof of Corollary \ref{cor:SCBs_Interpolation}: Non-Coverage Probabilities of Interpolation SCBs}\label{app:cor_SCBs_Interpolation} 
%%%%%%%%%%%%%%%%%%%%%%%%%%%%%%%%%%%%%%%%%%%%%%%%%%%%%%%%%%%%%%%%%%

\begin{proof}
The proof of Corollary \ref{cor:SCBs_Interpolation} is identical to that of Theorem \ref{thm:SCBs}, but we use interpolation-based SCBs instead. The uniform consistency of interpolation estimator $\doublewidehatCL{C}_{\beta,n}$ to $C_{\beta}$ is guaranteed by Theorem \ref{thm:cov_intrpl_consistency}. This completes the proof.
\end{proof}

%%%%%%%%%%%%%%%%%%%%%%%%%%%%%%%%%%%%%%%%%%%%%%%%%%%%%%%%%%%%%%%%%%
\subsection{Proof of Corollary \ref{cor:SCBs_relevance}: Size Control in Relevance Testing}\label{app:SCBs_relevance} 
%%%%%%%%%%%%%%%%%%%%%%%%%%%%%%%%%%%%%%%%%%%%%%%%%%%%%%%%%%%%%%%%%%

\begin{proof}
\textbf{First}, if $\Delta_\ell(t)=\Delta_u(t)=\Delta(t)$, i.e., the reference band reduces to a reference line at true bias value $\Delta(t)=\beta(t)$, by Corollary \ref{cor:SCBs_Interpolation}, we have 
\begin{align*}
\lim_{n\to\infty}\quad & P\left( \Delta(t) \not\in \doublewidehatCL{\operatorname{SCB}}{}^{\sup}_{1-\alpha}(t)\;\text{for at least one}\; t \in [0,T_{post}] \right) = \alpha.
\end{align*}

\textbf{Second}, if $\Delta_\ell(t) \neq \Delta_u(t)$, i.e., the reference band contains the true bias value $\Delta(t)=\beta(t)$ in-between, we have
\begin{align*}
\lim_{n\to\infty}\quad & \sup_{H_0} P\left([\Delta_\ell(t), \Delta_u(t)]\cap \doublewidehatCL{\operatorname{SCB}}{}^{\sup}_{1-\alpha}(t)=\emptyset\;\text{for at least one}\; t \in [0,T_{post}] \right) \le \alpha.
\end{align*}
These two scenarios are both under the null hypothesis $H_0$: $\beta(t)=\Delta(t)\in[\Delta_{\ell}(t), \Delta_u(t)]$ for all $t\in[0,T_{post}]$. This completes the proof.
\end{proof}

%%%%%%%%%%%%%%%%%%%%%%%%%%%%%%%%%%%%%%%%%%%%%%%%%%%%%%%%%%%%%%%%%%
\subsection{Proof of Corollary \ref{cor:SCBs_relevance_practical}: Size Control in Relevance Testing with \eqref{eq:ta_bounds} and \eqref{eq:rmtrb_bounds}}\label{app:SCBs_relevance_practical} 
%%%%%%%%%%%%%%%%%%%%%%%%%%%%%%%%%%%%%%%%%%%%%%%%%%%%%%%%%%%%%%%%%%

\begin{proof}
    We show the size control with \eqref{eq:ta_bounds} and \eqref{eq:rmtrb_bounds} in scenario of $\Delta_\ell(t)=\Delta_u(t):=\Delta(t)$, i.e., the reference band reduces to a reference line. The size control in scenario of $\Delta_\ell(t)\neq \Delta_u(t)$ naturally follows. 
    
    \textbf{First}, in the scenario of $\Delta_\ell(t)=\Delta_u(t):=\Delta(t)$, we basically test the following null hypothesis
    $$
    H_0: \beta(t)=\Delta(t), \quad \forall t\in[0, T_{post}]
    $$
    using the supremum-based $\doublewidehatCL{\operatorname{SCB}}{}^{\sup}_{1-\alpha}(t)$ for $\beta(t)$. This is equivalent to testing the following null hypothesis 
    $$
    H_0: \delta(t):=\beta(t)-\Delta(t)=0, \quad \forall t\in[0, T_{post}]
    $$
    using the supremum-based $\doublewidehatCL{\operatorname{SCB}}{}^{\sup, \delta}_{1-\alpha}(t)$ for $\delta(t)$, constructed in the same way as $\doublewidehatCL{\operatorname{SCB}}{}^{\sup}_{1-\alpha}(t)$.

    \textbf{Second}, the estimator $\widehat{\beta}_n(t)$ in \eqref{eq:BetaHat} for $\beta(t)$ can also be seen as the solution to the following first-order condition:
    \begin{equation}\label{eq:FOC_BetaHat}
    \frac{1}{n}\sum_{i=1}^n \dot{D}_i (\dot{Y}_i(t)-\dot{Y}_i(-1)-\beta(t)\dot{D}_i)=0.
    \end{equation}
    In the condition \eqref{eq:FOC_BetaHat}, we replace $\beta(t)$ with $\delta(t)+\widehat\Delta(t)$, where $\widehat\Delta(t):=\widehat\Delta_\ell(t)=\widehat\Delta_u(t)$ is the estimated $\Delta(t)$ in practice, and we obtain
    \begin{equation}\label{eq:FOC_delta}
    \frac{1}{n}\sum_{i=1}^n \dot{D}_i (\dot{Y}_i(t)-\dot{Y}_i(-1)-(\delta(t)+\widehat\Delta(t))\dot{D}_i)=0.
    \end{equation}
    
    \textbf{Third}, we further rewrite the condition \eqref{eq:FOC_delta} as
    \begin{equation}\label{eq:FOC_delta2}
    \frac{1}{n}\sum_{i=1}^n \dot{D}_i (\dot{Y}_i(t)-\dot{Y}_i(-1)-\beta(t)\dot{D}_i) + \frac{1}{n}\sum_{i=1}^n \dot{D}_i (\dot{Y}_i(t)-\dot{Y}_i(-1)-(\widehat\Delta(t)-\Delta(t))\dot{D}_i)=0.
    \end{equation}
    In the condition \eqref{eq:FOC_delta2}, we can see that the first part is exactly from the condition \eqref{eq:FOC_BetaHat}, and $\widehat\Delta(t)$ impacts the condition completely via the second part. From Theorem \ref{thm:gauss}, we know that $\widehat\beta_n(t)$ converges at a rate of $1/\sqrt{n}$. Hence, as long as $\widehat\Delta(t)$ converges at a higher rate, the conditions \eqref{eq:FOC_delta2} and \eqref{eq:FOC_BetaHat} are asymptotically equivalent.

    \textbf{Fourth}, we have
    $$
    \widehat\Delta(t)= \frac{1}{T_A}\sum_{s=-T_{pre}}^{t_A}\widehat{\beta}_{n}(s) 
    $$
    in case of \eqref{eq:ta_bounds}, and
    $$
    \widehat\Delta(t)=  \left(\frac{1}{T_{pre}-1} \sum_{s=-T_{pre}}^{-2} \frac{\widehat{\beta}_n(s)}{s+1} \right) (t+1)
    $$
    in case of \eqref{eq:rmtrb_bounds}. Both of them converge at a rate of $1/\sqrt{nT_{pre}}$. Assumption \ref{assumption: data_str b} implies $T_{pre} \to \infty$, and hence $\widehat\Delta(t)$ has a higher convergence rate than $\widehat\beta_n(t)$. The use of $\widehat\Delta(t)$ in practice in place of $\Delta(t)$ does not affect the size control. Therefore, we have
    \begin{align*}
        \lim_{n\to\infty}\quad & \sup_{H_0} P\left([\widehat\Delta_\ell(t), \widehat\Delta_u(t)]\cap \doublewidehatCL{\operatorname{SCB}}{}^{\sup}_{1-\alpha}(t)=\emptyset\;\text{for at least one}\; t \in [0,T_{post}] \right) \leq \alpha,
    \end{align*}
    where $[\widehat\Delta_\ell(t), \widehat\Delta_u(t)]$ is derived from \eqref{eq:ta_bounds} or \eqref{eq:rmtrb_bounds}. This completes the proof.
\end{proof}

%%%%%%%%%%%%%%%%%%%%%%%%%%%%%%%%%%%%%%%%%%%%%%%%%%%%%%%%%%%%%%%%%%
\subsection{Proof of Corollary \ref{cor:SCBs_equivalence}: Size Control in Equivalence Testing}\label{app:SCBs_equivalence} 
%%%%%%%%%%%%%%%%%%%%%%%%%%%%%%%%%%%%%%%%%%%%%%%%%%%%%%%%%%%%%%%%%%

\begin{proof}
\textbf{First}, the two one-sided infimum-based $(1-\alpha)\times 100\%$ SCBs, $\doublewidehatCL{\operatorname{SCB}}{}^{\inf,-}_{1-\alpha}(t)$ and $\doublewidehatCL{\operatorname{SCB}}{}^{\inf,+}_{1-\alpha}(t)$, can be used to test the two one-sided null hypothesis $H_0^{-}$ and $H_0^{+}$ (Section \ref{sec:validation}). Corollary \ref{cor:SCBs_Interpolation} implies that $\doublewidehatCL{\operatorname{SCB}}{}^{\inf,-}_{1-\alpha}(t)$ and $\doublewidehatCL{\operatorname{SCB}}{}^{\inf,+}_{1-\alpha}(t)$ have the correct size:

\begin{align*}
\lim_{n\to\infty}\quad & \sup_{H_0}P\left(\beta(t)\not\in\doublewidehatCL{\operatorname{SCB}}{}^{\inf,\square}_{1-\alpha}(t)\;\text{for all}\; t \in [-T_{pre},t_A] \right) \le \alpha\quad\text{for each}\quad\square\in\{+,-\}.
\end{align*}

\textbf{Second}, for our equivalence testing, the rejection region is actually the intersection of the rejection regions for the two one-sided tests. By the Intersection-Union Test (IUT) principle \citepappendix[see][Theorem 8.3.23]{Casella_Berger_2024_app}, we can show the correct level for our equivalence testing:

\spacingset{1}
\begin{align*}
    &\lim_{n\to\infty} \sup_{H_0} P\Big(\doublewidehatCL{\operatorname{SCB}}{}^{\inf}_{1-2\alpha}(t)\subsetneqq \left[\Delta_\ell(t), \Delta_u(t)\right]\quad\text{for all}\quad t\in[-T_{pre}, t_A]\Big)\\
    \le & \lim_{n\to\infty}\sup_{H_0}P\left(\beta(t)\not\in\doublewidehatCL{\operatorname{SCB}}{}^{\inf,\square}_{1-\alpha}(t)\;\text{for all}\; t \in [-T_{pre},t_A] \right)  \quad \text{for any} \quad \square\in\{+,-\} \\
    \le &\alpha.
\end{align*}
This completes the proof.
\end{proof}
\spacingset{1.5}

\newpage
%%%%%%%%%%%%%%%%%%%%%%%%%%%%%%%%%%%%%%%%%%%%%%%%%%%%%%%%%%%%%%%%%%
\section{Algorithms}\label{sec:ALGORITHMS} 
%%%%%%%%%%%%%%%%%%%%%%%%%%%%%%%%%%%%%%%%%%%%%%%%%%%%%%%%%%%%%%%%%%

\subsection{Parametric Bootstrap (Sup)}\label{ssec:SCB_PB_SUP}

\begin{algorithm}
\caption{Parametric Bootstrap for Supremum-based Simultaneous Confidence Band}
\begin{algorithmic}[1] \label{algo:SCB_PB_SUP}  
    \STATE Estimate $\widehat{\beta}_n(t)$ and $\widehat{C}_{\beta,n}(s,t)$ in \eqref{eq:BetaHat} and \eqref{eq:cov_est} for each observable time points $s,t\in\{0,\dots, T_{post}\}$ using the data.
    \STATE Interpolate $\widehat{\beta}_n(t)$ and $\widehat{C}_{\beta,n}(s,t)$ by \eqref{eq:BetaHatHat} and \eqref{eq:CovHatHat} to obtain natural cubic spline interpolations $\doublewidehat{\beta}_n(t)$ and $\doublewidehatCL{C}_{\beta,n}(s,t)$ over all $s,t \in[0, T_{post}]$.
    \FOR{$b = 1$ to $B$}
        \STATE Draw a random realization of event study coefficient estimate, $\widehat{\beta}^{*(b)}(t)$, from the multivariate normal distribution $\mathcal{GP}\left(\widehat{\beta}_n(t), \widehat{C}_{\beta,n}(s,t)/n\right)$ for $s, t\in\{0,\dots, T_{post}\}$.
        \STATE Interpolate $\widehat{\beta}^{*(b)}(t)$ by \eqref{eq:BetaHatHat} to obtain natural cubic spline interpolation $\doublewidehat{\beta}^{*(b)}(t)$ for all $t\in[0, T_{post}]$.
        \STATE Compute bootstrap statistic $T^{*(b)}= \sup_{t\in[0, T_{post}]} \left[\sqrt{n}\Big(\doublewidehat{\beta}^{*(b)}(t)-\doublewidehat{\beta}_n(t)\Big)\big/ \left(\doublewidehatCL{C}_{\beta,n}(t,t)\right)^{1/2} \right]$.
    \ENDFOR
    \STATE Compute the empirical $(1-\alpha/2)\times100\%$ quantile of $\left\{T^{*(b)}\right\}_{b=1}^B$, denoted as $\doublewidehatSL{u}^{\sup}_{1-\alpha/2}$.
    \STATE Construct the supremum-based $(1-\alpha)\times100\%$ simultaneous confidence band as $\doublewidehatCL{\operatorname{SCB}}{}^{\sup}_{1-\alpha}(t)=\left[\doublewidehat{\beta}_n(t) \pm \doublewidehatSL{u}^{\sup}_{1-\alpha/2} \sqrt{\doublewidehatCL{C}_{\beta,n}(t,t)/n}\right]$ for $t\in[0, T_{post}]$.
\end{algorithmic}
\end{algorithm}

\subsection{Parametric Bootstrap (Inf)}\label{ssec:SCB_PB_INF}

The algorithm of parametric bootstrap for infimum-based $(1-2\alpha)\times 100\%$ simultaneous confidence band $\doublewidehatCL{\operatorname{SCB}}{}^{\inf}_{1-2\alpha}(t)$ is identical to Algorithm \ref{algo:SCB_PB_SUP}, except that the bootstrap statistic in Line~6 replaces the supremum operator with infimum; the time span of interest changes from $s,t\in[0, T_{post}]$ to $s,t\in[-T_{pre},-1]$; and the significance level shifts from $\alpha$ to $2\alpha$.

\newpage
\subsection{Multiplier Bootstrap (Sup)}\label{ssec:SCB_MB_SUP}

\begin{algorithm}
\caption{Multiplier Bootstrap for Supremum-based Simultaneous Confidence Band} 
\begin{algorithmic}[1]\label{algo:SCB_MB_SUP}
    \STATE Estimate $\widehat{\beta}_n(t)$ and $\widehat{C}_{\beta,n}(s,t)$ in \eqref{eq:BetaHat} and \eqref{eq:cov_est} for each observable time points $s,t\in\{0,\dots, T_{post}\}$ using the data.
    \STATE Interpolate $\widehat{\beta}_n(t)$ and $\widehat{C}_{\beta,n}(s,t)$ by \eqref{eq:BetaHatHat} and \eqref{eq:CovHatHat} to obtain natural cubic spline interpolations $\doublewidehat{\beta}_n(t)$ and $\doublewidehatCL{C}_{\beta,n}(s,t)$ over all $s,t \in[0, T_{post}]$.
        \STATE Calculate the residuals $\Delta_0\dot{Y}_i(t)=(\dot{Y}_i(t)-\dot{Y}_i(0))-\widehat{\beta}_n(t)\dot{D}_i$ for each $i=1,\dots,n$ and $t\in\{0,\dots, T_{post}\}$.
    \FOR{$b = 1$ to $B$}
        \STATE Draw $n$ random realizations of variable $\Delta_0^*\dot{Y}_i(t)$ from binary distribution:
        $$
        P\left( \Delta_0^*\dot{Y}_i(t) = \frac{1-\sqrt{5}}{2} \Delta_0\dot{Y}_i(t) \right)=\Pi
        \text{ and }
        P\left( \Delta_0^*\dot{Y}_i(t) = \frac{1+\sqrt{5}}{2} \Delta_0\dot{Y}_i(t) \right)=1-\Pi,
        $$
        for $i=1,\dots,n$, where $\Pi=\frac{5+\sqrt{5}}{10}$. 
        \STATE Duplicate the $n$ realizations fixed for all $t\in\{0,\dots, T_{post}\}$ to retain the temporal correlation for generating the panel structure.
        \STATE Calculate $(\dot{Y}^*_i(t)-\dot{Y}^*_i(0))=\widehat{\beta}_n(t) \dot{D}_i+\Delta_0^*\dot{Y}_i(t)$ for $i=1,\dots,n$ and $t\in\{0,\dots, T_{post}\}$.
        \STATE Compute $\widehat{\beta}^{(b)}_n(t)= \left(\frac{1}{n}\sum_{i=1}^n \dot{D}_i^2\right)^{-1}\left(\frac{1}{n}\sum_{i=1}^n \dot{D}_i (\dot{Y}^*_i(t) - \dot{Y}^*_i(0))\right)$ for $i=1,\dots,n$ and $t\in\{0,\dots, T_{post}\}$.
         \STATE Interpolate $\widehat{\beta}^{*(b)}(t)$ by \eqref{eq:BetaHatHat} to obtain natural cubic spline interpolation $\doublewidehat{\beta}^{*(b)}(t)$ for all $t\in[0, T_{post}]$.
        \STATE Compute bootstrap statistic $T^{*(b)}= \sup_{t\in[0, T_{post}]} \left[\sqrt{n}\Big(\doublewidehat{\beta}^{*(b)}(t)-\doublewidehat{\beta}_n(t)\Big)\big/ \left(\doublewidehatCL{C}_{\beta,n}(t,t)\right)^{1/2} \right]$.
    \ENDFOR
    \STATE Compute the empirical $(1-\alpha/2)\times100\%$ quantile of $\left\{T^{*(b)}\right\}_{b=1}^B$, denoted as $\doublewidehatSL{u}^{\sup}_{1-\alpha/2}$.
    \STATE Construct the supremum-based $(1-\alpha)\times100\%$ simultaneous confidence band as $\doublewidehatCL{\operatorname{SCB}}{}^{\sup}_{1-\alpha}(t)=\left[\doublewidehat{\beta}_n(t) \pm \doublewidehatSL{u}^{\sup}_{1-\alpha/2} \sqrt{\doublewidehatCL{C}_{\beta,n}(t,t)/n}\right]$ for $t\in[0, T_{post}]$.
\end{algorithmic}
\end{algorithm}

\subsection{Multiplier Bootstrap (Inf)}\label{ssec:SCB_MB_INF}

The algorithm of multiplier bootstrap for infimum-based $(1-2\alpha)\times 100\%$ simultaneous confidence band $\doublewidehatCL{\operatorname{SCB}}{}^{\inf}_{1-2\alpha}(t)$ is identical to Algorithm \ref{algo:SCB_MB_SUP}, except that the bootstrap statistic in Line~10 replaces the supremum operator with infimum; the time span of interest changes from $s,t\in[0, T_{post}]$ to $s,t\in[-T_{pre},-1]$; and the significance level shifts from $\alpha$ to $2\alpha$.

\subsection{Kac-Rice Formula (Sup)}\label{ssec:SCB_KR}

\begin{algorithm}
\caption{Kac-Rice Formula for Supremum-based Simultaneous Confidence Band} 
\begin{algorithmic}[1]\label{algo:SCB_KR_SUP}
    \STATE Estimate $\widehat{\beta}_n(t)$ and $\widehat{C}_{\beta,n}(s,t)$ in \eqref{eq:BetaHat} and \eqref{eq:cov_est} for each observable time points $s,t\in\{0,\dots, T_{post}\}$ using the data.
    \STATE Interpolate $\widehat{\beta}_n(t)$ and $\widehat{C}_{\beta,n}(s,t)$ by \eqref{eq:BetaHatHat} and \eqref{eq:CovHatHat} to obtain natural cubic spline interpolations $\doublewidehat{\beta}_n(t)$ and $\doublewidehatCL{C}_{\beta,n}(s,t)$ over all $s,t \in[0, T_{post}]$.
    \STATE Estimate the roughness parameter $\tau(t)$ of the empirical correlation function along its diagonal as
        $$
            \doublewidehatSL{\tau}_{n}(t) = \left( \left.\frac{\partial^2}{\partial s \partial t } \doublewidehatCL{\operatorname{Corr}}_{\beta,n}(s,t) \right|_{(s,t)=(t,t)} \right)^{1/2}\quad\text{with}\quad \doublewidehatCL{\operatorname{Corr}}_{\beta,n}(s, t) = \frac{\doublewidehatCL{C}_{\beta,n}(s, t)}{\sqrt{\doublewidehatCL{C}_{\beta,n}(s, s)\doublewidehatCL{C}_{\beta,n}(t, t)}},
        $$
        for $s,t\in[0, T_{post}]$.
    \STATE Apply Corollary 3.3 (b) in \citeappendix{Liebl_Reimherr_2023_app} and determine $\doublewidehatSL{u}^{\sup}_{1-\alpha/2}>0$ as the solution to the equation
    $$
        F(-\doublewidehatSL{u}^{\sup}_{1-\alpha/2};\text{df})+\frac{1}{2\pi} \int_{t\in[0,T_{post}]}|\doublewidehatSL{\tau}_{n}(t)|dt\left(1+\frac{\doublewidehatSL{u}^{\sup}_{1-\alpha/2}}{\text{df}}\right)^{\text{df}/2} = \frac{\alpha}{2}
    $$
    with $F(\cdot;\text{df})$ denoting the cdf of the $t$-distribution with $\text{df}=n-1$ degrees of freedom.
    \STATE Construct the supremum-based $(1-\alpha)\times100\%$ simultaneous confidence band as $\doublewidehatCL{\operatorname{SCB}}{}^{\sup}_{1-\alpha}(t)=\left[\doublewidehat{\beta}_n(t) \pm \doublewidehatSL{u}^{\sup}_{1-\alpha/2} \sqrt{\doublewidehatCL{C}_{\beta,n}(t,t)/n}\right]$ for $t\in[0, T_{post}]$.
\end{algorithmic}
\end{algorithm}

\vspace*{-1em}
%%%%%%%%%%%%%%%%%%%%%%%%%%%%%%%%%%%%%%%%%%%%%%%%%%%%%%%%%%%%%%%%%%
\section{Additional Simulation Results}\label{sec:SIM_MORE} 
%%%%%%%%%%%%%%%%%%%%%%%%%%%%%%%%%%%%%%%%%%%%%%%%%%%%%%%%%%%%%%%%%%

\subsection{Classical Causal Inference in the Post-Treatment Period}\label{SIM:classic_hypothesis_test}

Under Assumptions I (No Anticipation) and Assumption II (Parallel Trends), we test
\begin{equation*}
    H_0\colon \beta(t)=0, \quad  \forall \; t\in[0,T_{post}] \quad \text{vs.} \quad H_1\colon\exists \;t\in[0,T_{post}] \quad \text{s.t.} \quad \beta(t) \not=0.
\end{equation*}
Data under $H_0$ are generated by setting $a=0$ in ATT1 and ATT2, while $|a|>0$ generate alternatives for producing power curves. For each $(n,T)$-combination, we run 500 simulations under the specified data generation process. In each run, we check whether $\doublewidehatCL{\operatorname{SCB}}^{\sup}_{1-\alpha}(t)$, with $\alpha=0.05$, covers zero at a grid of 101 equidistant time points over $t\in[0,T_{post}]$. We reject $H_0$ if zero is excluded at least once. We compare the three supremum-based $\doublewidehatCL{\operatorname{SCB}}^{\sup}_{1-\alpha}(t)$ with Naive band $\doublewidehatCL{\operatorname{CI}}{}^{\!\text{Naive}}_{1-\alpha}(t)$ and Bonferroni band $\doublewidehatCL{\operatorname{CI}}{}^{\!\text{Bonf}}_{1-\alpha}(t)$ with $\alpha/101$ as Bonferroni correction.

Figure \ref{fig:N200ATT1PowerCurves_SI} shows power curves for ATT1 with $n=200$ and $T=21$. The Naive band is anti-conservative and invalid, while the Bonferroni band is overly conservative and uniformly less powerful than the three $\doublewidehatCL{\operatorname{SCB}}^{\sup}_{1-\alpha}(t)$ bands. Results for ATT2 in Figure \ref{fig:N200ATT2PowerCurves_SI} are qualitatively equivalent.

\begin{figure}[ht!]
	\centering
	\includegraphics[width=0.98\linewidth]{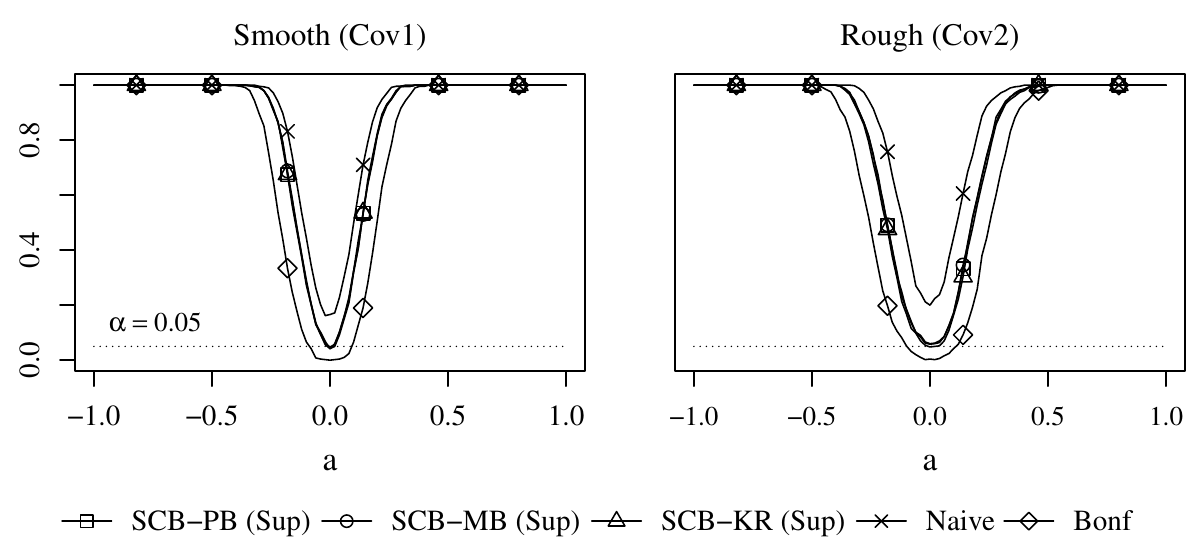}
	\caption[]{Power curves, under Assumptions I and II, for ATT1, $n=200$, and $T=21$.}
	\label{fig:N200ATT1PowerCurves_SI}
\end{figure}

\begin{figure}[ht!]
	\centering
	\includegraphics[trim={0 0 0 0}, clip, width=0.98\linewidth]{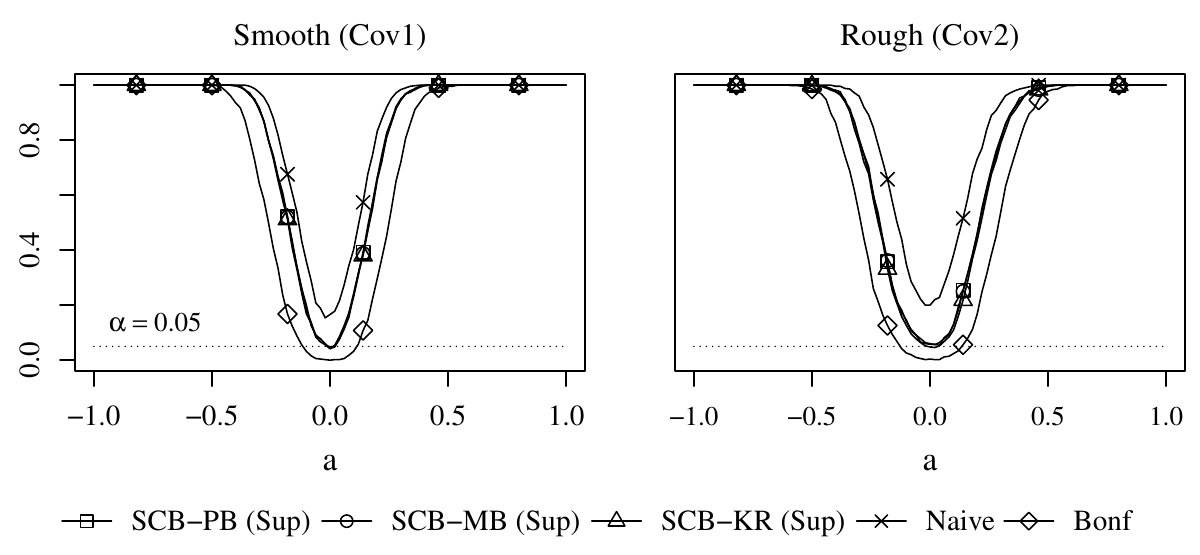}
	\caption[]{Power curves, under Assumption I and II, for ATT2, $n=200$, and $T=21$.}
	\label{fig:N200ATT2PowerCurves_SI}
\end{figure}

\newpage

\subsection{Honest Hypothesis Testing in the Post-Treatment Period}\label{SIM:honest_hypothesis_test}

\begin{figure}[ht!]
	\centering
	\includegraphics[trim={0 0 0 0}, clip, width=0.98\linewidth]{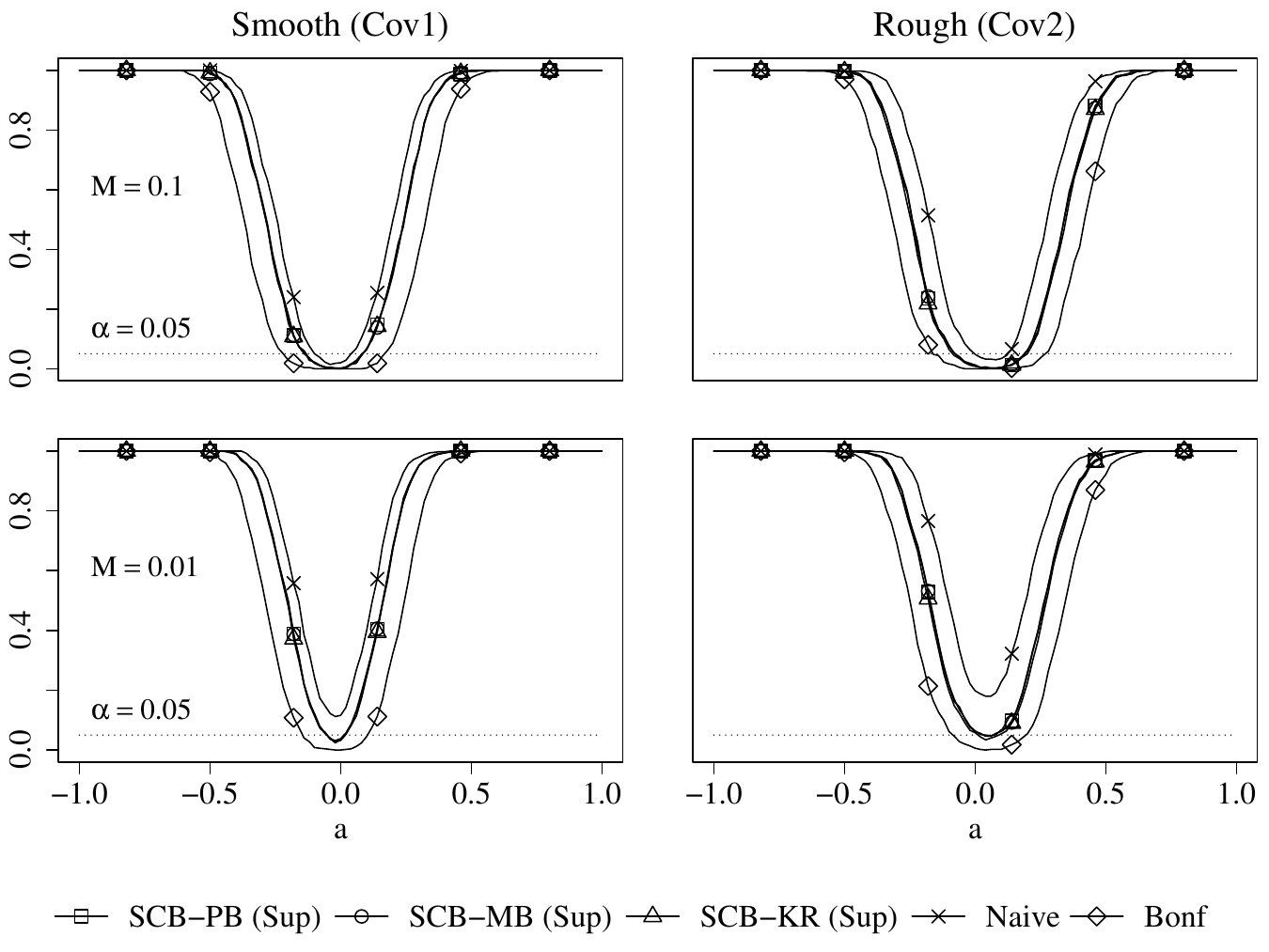}
	\caption[]{Power curves, under violated Assumption II, for ATT2, $n=200$, and $T=21$.}
	\label{fig:N200ATT2PowerCurves_PT}
\end{figure}

\newpage

\subsection{Validating Reference Bands in the Pre-Anticipation Period}\label{SIM:reference_band_validating}
\begin{figure}[ht!]
	\centering
	\includegraphics[width=0.98\linewidth]{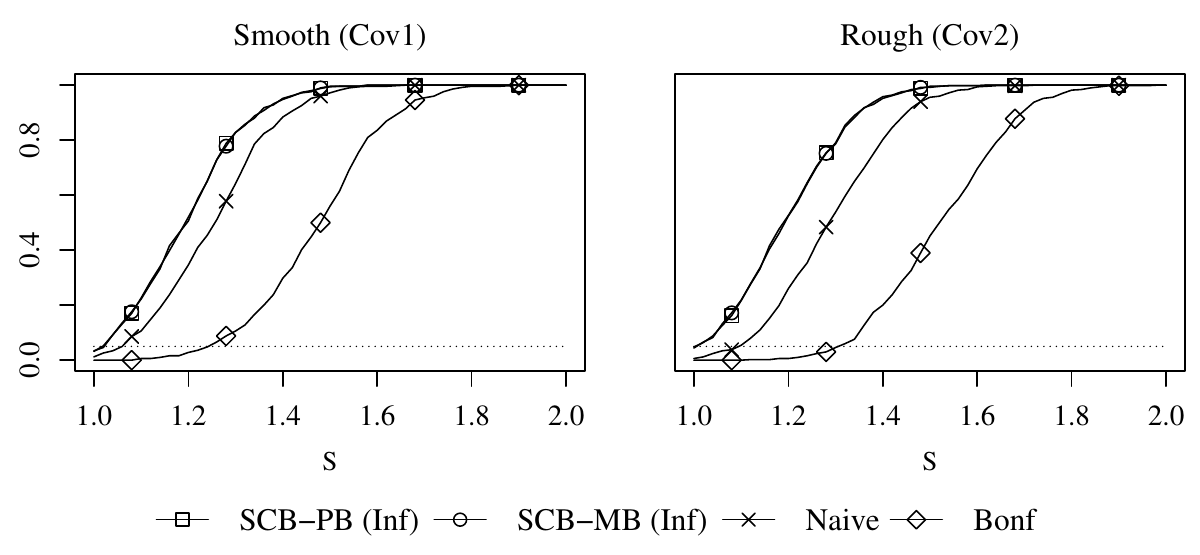}
	\caption[]{Power curves, under violated Assumption I, for $\text{ATT2}^*$, $n=200$, and $T=21$.}
	\label{fig:N200ATT2starPowerCurves_RB}
\end{figure}

\endgroup

%%%%%%%%%%%%%%%%%%%%%%%%%%%%%%%%%%%%%%%%
% References
%%%%%%%%%%%%%%%%%%%%%%%%%%%%%%%%%%%%%%%%
\spacingset{0.9}
\bibliographystyleappendix{Chicago}
\bibliographyappendix{bibfileappendix}
\spacingset{1.5}
%%%%%%%%%%%%%%%%%%%%%%%%%%%%%%%%%%%%%%%%

\end{document}